%% file: writeup.tex
\documentclass[11pt]{article}
\usepackage{sn-preamble} 
\usepackage{tikz}
\usepackage{verbatim}
\usepackage{cancel}
\usepackage{caption}
\usepackage[subpreambles]{standalone}

\usetikzlibrary{decorations.pathreplacing,angles,quotes}
\usetikzlibrary{shapes.geometric}
\usetikzlibrary{patterns}

\usepackage{mathtools}
\usepackage{bm}

\newcommand{\ignore}[1]{}

\newcommand{\pmin}{p_{\mathrm{min}}}

\newcommand{\reldist}{\mathsf{rel}\text{-}\mathsf{dist}}

\newcommand{\fsm}{f_{\mathsf{sm}}}

\newcommand{\MQ}{\mathrm{MQ}}
\newcommand{\SAMP}{\mathrm{SAMP}}
\newcommand{\Sens}{\mathrm{Sens}}

\newcommand{\err}{\mathrm{err}}
\newcommand{\HermiteTest}{\textsc{Hermite-Test}}
\newcommand{\GSATest}{\textsc{GSA-Test}}
\newcommand{\CombinedTest}{\textsc{Combined-Test}}
\newcommand{\EstSense}{\textsc{Est-Sense}}
\newcommand{\MORStester}{\textsc{MORS-Tester}}
\newcommand{\Harmstester}{\textsc{Harms-Tester}}
\newcommand{\GSAFixedNoiseTest}{\textsc{GSA-Fixed-Noise-Test}}

\begin{document}

\title{
Sublinear-query relative-error testing of halfspaces
}

\author{
    \begin{tabular}{ccc}
        Xi Chen & Anindya De & Yizhi Huang \\
        Columbia University & University of Pennsylvania & Columbia University \\
        \url{xichen@cs.columbia.edu} & \url{anindyad@cis.upenn.edu} & \url{yizhi@cs.columbia.edu} \\
        \\
        Shivam Nadimpalli & Rocco A. Servedio & Tianqi Yang \\
        MIT & Columbia University & Columbia University \\
        \url{shivamn@mit.edu} & \url{rocco@cs.columbia.edu} & \url{tianqi@cs.columbia.edu}
    \end{tabular}
}

\date{
}

\pagenumbering{gobble}

\maketitle  

\begin{abstract}
The \emph{relative-error} property testing model was introduced in \cite{CDHLNSY24} to facilitate the study of property testing for ``sparse'' Boolean-valued functions, i.e.~ones for which only a small fraction of all input assignments satisfy the function. 
In this framework, the distance from the unknown target function $f$ that is being tested to a function $g$ is defined as $\Vol(f \hspace{0.06cm} \triangle \hspace{0.06cm}  g)/\Vol(f)$, where~the numerator is the fraction of inputs  on which $f$ and $g$ disagree and the denominator is the fraction of inputs  that satisfy $f$. 

Recent work \cite{CDHNSY26soda} has shown that over the Boolean domain $\{0,1\}^n$, any relative-error testing algorithm for the fundamental class of {halfspaces} (i.e.~linear threshold functions) must make $\Omega(\log n)$ oracle calls.  
In this paper we complement the \cite{CDHNSY26soda}  lower bound by showing that halfspaces can be relative-error tested over $\mathbb{R}^n$ under the standard $N(0,I_n)$ Gaussian distribution using a \emph{sublinear} number of oracle calls --- in particular, substantially fewer than would be required for learning.  Our results use a wide range of tools including Hermite analysis, Gaussian isoperimetric inequalities, and geometric results on noise sensitivity and surface area.

\end{abstract}

\newpage

\tableofcontents

\setcounter{page}{1}
\pagenumbering{arabic}

\input{sections/intro}

\input{sections/preliminaries-Gaussian}

\input{sections/Broccoli-tester}
\input{sections/MORS-tester}

\input{sections/combined-tester}

\section*{Acknowledgements}

This work was awarded a grant by the AI Security Institute (AISI) via the Alignment Project (``Discovering rare harmful behaviors exhibited by high-dimensional AI systems''). This work is also funded by OpenAI.
X.C.~is supported by NSF grants CCF-2106429 and CCF-2107187. 
A.D.~is supported by NSF grant CCF 2045128. 
Y.H.~is supported by NSF grants CCF-2211238, CCF-2106429, and CCF-2238221.
R.A.S.~is supported by NSF grants CCF-2211238 and CCF-2106429. 
T.Y.~is supported by NSF grants CCF-2211238, CCF-2106429, and AF-Medium 2212136.
T.Y.~and Y.H.~are also supported by an Amazon Research Award, Google CyberNYC award, and NSF grant CCF-2312242. 

\begin{flushleft}
\bibliographystyle{alpha}
\bibliography{allrefs}
\end{flushleft}

\appendix

\input{sections/appendix}

\end{document}

%% file: sections/intro.tex

\newpage

\section{Introduction} \label{sec:intro}

Over the past three decades Boolean function property testing has become a well-established research topic in theoretical computer science, see e.g.~the books and monographs \cite{Ron:10FNTTCS,Goldreich17book,BY22}. In this paper we consider an extension of the standard model of property testing, known as \emph{relative-error} testing, which was recently proposed in  \cite{CDHLNSY24}.

To describe the relative-error model, we first recall that in the standard model of Boolean function property testing the goal is to distinguish between the two cases that (i) $f \in {\cal C}$, where ${\cal C}$ is the class of functions that have the property that is being tested, versus (ii) $f$ is $\eps$-far (under the uniform distribution) from every function in ${\cal C}$, meaning that
\[
\dist(f,{\cal C}) \geq \eps, \text{~~~where~}
\dist(f,{\cal C}):=\min_{g \in {\cal C}} \dist(f,g) \text{~~~and~~~}
\dist(f,g) := \Prx_{\bx \sim \zo^n} [f(\bx) \neq g(\bx)].
\]
A testing algorithm in the standard model can obtain information about the  unknown function $f: \zo^n \to \zo$ that is being tested by making black-box oracle calls to $f$.  

The impetus for the study of relative-error testing is that the standard model described above is poorly suited for testing \emph{sparse} functions, i.e.~functions which have few satisfying assignments. This is because any such function has very small uniform-distribution distance to the constant-0 function, and so the tester can simply ignore the input function $f$ and answer according to the constant-0 function.  Indeed, for sparse functions it is arguably more natural to ask whether $f$ is ``close'' to having the property of interest (i.e.~close to belonging to ${\cal C}$) for a notion of closeness that is \emph{relative} to the small size of $f^{-1}(1)$ rather than to the ``absolute'' scale of all $2^n$ possible inputs. 

With this motivation, the relative-error property testing model which was defined in \cite{CDHLNSY24} changes the standard  model in the following ways:

\begin{flushleft}\begin{itemize}
\item 
The distance between the target function $f$ and a function $g$ is now measured using  \emph{relative distance}, which is  the uniform-distribution distance defined above scaled by the sparsity of $f$:
\[
\reldist(f,g) := {\frac {\Prx_{\bx \sim \zo^n}[f(\bx) \neq g(\bx)]}{\Prx_{\bx \sim \zo^n}[f(\bx)=1]}}
, \quad \text{i.e.}
\quad
\reldist(f,g) =  {\frac {\dist(f,g)}{\Pr_{\bx \sim \zo^n}[f(\bx)=1]}}.
\]
Relative distance thus captures the distance between $f$ and $g$ at the scale of $f$, and continues to be meaningful even if $f$ is very sparse.\footnote{It is easy to verify, as observed in \cite{CDHLNSY24}, that while $\reldist$ is not perfectly symmetric, if $\reldist(f,g) = \eps \leq 1/2$ then $\reldist(g,f)$ is also $\Theta(\eps)$, so $\reldist$ is symmetric up to constant factors in the setting we are interested in.
}

\item In addition to the usual  black-box oracle for $f$, a relative-error testing algorithm can also access a \emph{random sample oracle} $\SAMP(f)$, which takes no input and returns  a uniform random satisfying assignment $\bx \sim f^{-1}(1)$. (Note that without a $\SAMP(f)$ oracle, many black-box queries to $f$ could be required even to find a single satisfying assignment of $f$.)

\end{itemize}
\end{flushleft}

After the introduction of the relative-error testing model in \cite{CDHLNSY24}, the relative-error testability of a number of well-studied properties (or classes of functions) was investigated in~a number of papers \cite{CPPS25conjunctionDL,CPPS25junta,CDHNSY26soda,CPPPSZ25,CPPS26dnf}. Before turning to the specific results in those papers, we first give a quick overview of the general relationship between standard-model testability and relative-error testability.

\medskip
\noindent {\bf Standard-model versus relative-error testability.}   As shown already in \cite{CDHLNSY24}, standard-model testing is never more difficult than relative-error testing:  for any class ${\cal C}$, if ${\cal C}$ is relative-error testable to error $\eps$ using $q$ oracle calls then ${\cal C}$ is also standard-model testable to error $\eps$ using $O(q/\eps)$ oracle calls. 
(We remark that \cite{CPPS25conjunctionDL} gave a slight sharpening of this result under mild assumptions about the class ${\cal C}$.) 
On the other hand, \cite{CDHLNSY24}
  showed, by considering a contrived class of functions,
  that relative-error testing can sometimes be much more difficult (require many more queries) than standard-model testing.
This artificial class of functions leaves open the following natural question:  for ``natural'' classes of functions that are commonly studied in computational learning theory and concrete complexity, how does the query complexity of testing in the standard model compare to the relative-error model?

\medskip 
\noindent {\bf Relative-error testability of well-studied properties.}  The above question has been the subject of a significant amount of recent research \cite{CDHLNSY24,CPPS25conjunctionDL,CPPS25junta,CDHNSY26soda,CPPPSZ25,CPPS26dnf}.
Most of these results suggest that --- unlike the contrived example of \cite{CDHLNSY24} alluded to above --- relative-error testing often turns out to be essentially no  more (or at least not much more) difficult than standard-model testing for ``natural'' classes of functions ${\cal C}$.
In particular, the papers \cite{CDHLNSY24,CPPS25conjunctionDL,CPPS25junta,CPPPSZ25,CPPS26dnf} show that for the classes of monotone Boolean functions; 
unate Boolean functions; 
conjunctions; decision lists; $k$-juntas; subclasses of $k$-juntas such as size-$k$ decision trees and size-$k$ branching programs; and $s$-term DNF formulas, the query complexity of relative-error testing is  at most some fixed polynomial in the query complexity of standard-model testing.

However, an intriguing exception 
was discovered in the very recent work of \cite{CDHNSY26soda} on testing \emph{halfspaces}, also known as linear threshold functions or LTFs.\footnote{Recall that an $n$-variable halfspace is a function of the form $\sign(w \cdot x - \theta)$, 
with some $w\in \R^n$ and $\theta \in \R$.}
Halfspaces are one of the most intensively studied classes in computational learning theory, dating from the introduction of the Perceptron algorithm more than sixty years ago
\cite{Block:62,Novikoff:62}
down through to the present day \cite{KKMS:08, Daniely16, diakonikolas2020complexity, diakonikolas2024efficient,chandrasekaran2024smoothed}.
They have also been studied intensively both in property testing, see e.g.~\cite{GlasnerServedio:09toc,MORS10, MORS:09random, de2019your, de2021robust, Harms19, ChenPatel22}, and in other fields including probability theory and Boolean function analysis, see e.g.~\cite{MosselOdonnell:03, MosselNeeman15, MOO:10, Borell:85, DMN13, DFKO06}.
In \cite{MORS10} halfspaces were shown to be testable in the standard model with $\poly(1/\eps)$ queries independent of the ambient dimension $n$; in contrast, \cite{CDHNSY26soda} gave an $\tilde{\Omega}(\log n)$ lower bound on the number of queries that are required in the relative error model.

Given the \cite{CDHNSY26soda} lower bound, a natural goal --- which is the focus of the current paper --- is to gain a better understanding of the query complexity of testing halfspaces in the relative-error model.
In particular, the main question driving our research is the following:  
can $n$-dimensional halfspaces be relative-error tested with a number of queries that is \emph{sublinear} in $n$?  As motivation for this specific question, we remark that (as is well known) in many learning models the problem of \emph{learning} an unknown halfspace is known to require $\Theta(n)$ samples or queries. Since the complexity of learning is an upper bound on the complexity of testing (see \cite{GGR98}), it is natural to aim for $o(n)$-query testing algorithms, since this would show that testing is easier than learning, which is one of the central motifs and motivations for property testing of Boolean functions.  We further remark that the sample and query complexity of testing halfspaces in the \emph{distribution-free} model is known to be $\tilde{\Theta}(n)$ \cite{blais2021vc,ChenPatel22}, which lends additional impetus to the question of understanding whether it is possible to achieve sublinear complexity in the relative-error testing model.

\subsection{Our contributions:  Sublinear relative-error testing of halfspaces}  

Before describing our main results, we give some more context on halfspace testing in the standard model.  Halfspace testing has
primarily been studied under two  distributions, which correspond to discrete and continuous domains. These are the usual uniform distribution over $\{0,1\}^n$~\cite{MORS10, MORS:09random} and the standard  $N(0,I_n)$ Gaussian distribution over $\R^n$~\cite{MORS10, MosselNeeman15, de2019your, de2021robust}.
For $N(0,I_n)$, the ``standard-model'' distance between $f,g: \R^n \to \bits$ is simply
\[
\dist(f,g) := \Prx_{\bx \sim N(0,1)^n} [f(\bx) \neq g(\bx)].
\]
In the standard model \cite{MORS10} showed  that halfspaces can be tested using only $\poly(1/\epsilon)$ queries under both the uniform distribution over $\{0,1\}^n$ and the Gaussian  $N(0,I_n)$ distribution over $\R^n$. The \cite{MORS10} result for $\{0,1\}^n$ essentially uses their result over the Gaussian space~as a subroutine.\footnote{We remark that a very different tester for the Gaussian space than the \cite{MORS10} tester was given by Mossel and Neeman in \cite{MosselNeeman15}; these results will be discussed in more detail later.}
Thus, developing relative-error halfspace testing algorithms for the Gaussian distribution is a natural first step towards 
obtaining improved results for 
the uniform distribution over $\zo^n$.  

Towards this end, as the main results of this paper we provide three sublinear-query testing algorithms for halfspaces, all over the Gaussian space, where the relative distance between $f$ and another function $g$, where $f,g: \R^n \to \zo,$ is defined (analogous to the Boolean setting) as
\[
\reldist(f,g) := {\frac {\Prx_{\bx \sim N(0,I_n)}[f(\bx) \neq g(\bx)]}{\Prx_{\bx \sim N(0,I_n)}[f(\bx)=1]}}
, \quad \text{i.e.}
\quad
\reldist(f,g) =  {\frac {\dist(f,g)}{\Pr_{\bx\sim N(0,I_n)}[f(\bx)=1]}},
\]
and in this setting a call to the random sample oracle $\SAMP(f)$ returns a draw from 
the Gaussian distribution conditioned on $f^{-1}(1)$. (See   \Cref{sec:preliminaries} for a more detailed definition of the model.)

\begin{remark} \label{rem:easyp}
Before detailing our algorithmic results, we remark that by combining the $\poly(1/\eps)$-query testing algorithms of \cite{MORS10,MosselNeeman15} for halfspaces under the Gaussian $N(0,I_n)$ distribution in the \mbox{standard} model  and Fact~9 of \cite{CDHLNSY24}, we easily obtain relative-error \mbox{testing} algorithms with query complexity $\poly(1/(\eps p))$ when we are promised that the ``Gaussian volume'' $\Pr_{N(0,I_n)}[f(\bx)=1]$ of the function $f$ being tested is at least $p$. (Intuitively, this is because achieving relative error $\eps$ for a ``$p$-volume'' function is the same as achieving standard error $\eps p$.)  However, this is a very poor bound in the (most interesting) case when $p$ is very small; in the results below, we achieve an exponential improvement by giving algorithms whose dependence on $n$ is sublinear and whose dependence on $p$ is only inverse \emph{poly-logarithmic}.
\end{remark}

\noindent {\bf Our results.} For our first two results, the algorithm is assumed to be given (an accurate estimate of) the ``Gaussian volume'' $p:=\Pr_{N(0,1)^n}[f(\bx)=1]$ of the unknown target function $f$, 
which we sometimes refer to as the \emph{volume} of $f$ and denote by $\Vol(f)$. In the third result, the algorithm is only given a lower bound $p_{\text{min}}$ on $\Vol(f)$ rather than an accurate estimate.
We first give informal statements of our three algorithms' guarantees and provide detailed theorem statements later.
\begin{flushleft}\begin{enumerate}
    \item The first algorithm (see \Cref{thm:Gaussian1}) uses $\poly(\log (1/p), 1/\epsilon)$ samples and queries to do $\eps$-relative-error testing. At a high level, similar to \cite{MosselNeeman15}, the algorithm relies on the robust extremal isoperimetric properties of halfspaces over the Gaussian space. 
See \Cref{sec:broccoli-bound-techniques} for a more detailed description of the ideas underlying this algorithm.
    
    \item The second algorithm (see \Cref{thm:Gaussian2}) uses 
$\tilde{O}(\sqrt{n}) \cdot \poly(\log(1/p), 1/\eps)$ samples to do $\eps$-relative-error testing. In contrast with the first one, this algorithm does not use any black-box queries to $f$, only random samples drawn from $\SAMP(f)$. 
The high-level idea behind this tester is that for halfspaces over the Gaussian space, the level-$1$ Hermite weight depends only on the bias of the function, and this relationship is robust enough to serve as the basis for a testing algorithm (\Cref{sec:MORS-techniques} gives a more complete overview).   
 We note that similar ideas were used in the original work of \cite{MORS10} for testing under the Gaussian distribution, as well as a number of other related works including \cite{Harms19} and \cite{de2023testing}.

\item The third algorithm (see \Cref{thm:Gaussian3}) addresses the problem  when $p$ is unknown; it only requires a lower bound $p_{\min}$ on the value of $p$ rather than an accurate estimate. It uses 
$O(\sqrt{n}) \cdot \poly(\log(1/p_{\min}), 1/\eps)$ samples and queries to do $\eps$-relative-error testing.
 The high level idea is to use the estimator from the proof of \Cref{thm:Gaussian2} to obtain an \emph{upper} bound on $p$, in conjunction with a variant of \Cref{thm:Gaussian1} which only errs if it is given an estimate of $p$ which is \emph{too low}. See \Cref{sec:Gaussian3-techniques} for a more detailed overview of the main ideas underlying this algorithm.
\end{enumerate}\end{flushleft}

Here are detailed statements of our algorithmic results:

\begin{theorem}
[Gaussian halfspace testing for known $p$] \label{thm:Gaussian1}
There is an algorithm \GSATest~with the following property:  If \GSATest~is given~sample access $\SAMP(f)$ and black-box access $\MQ(f)$ to a measurable function $f: \R^n \to \{0,1\}$, a  parameter $\eps$, and~an~estimate $\hat{p}$ of $p := \Vol(f)$ satisfying 
\[
{\frac {p}{1+\zeta}} \leq \hat{p} \leq (1+\zeta) p,\quad\text{where}\quad \zeta = \frac{c_1 \epsilon^2}{\log^2(1/p)}
\]
for a suitably small absolute constant $c_1 > 0$,
then it makes
$O(\epsilon^{-14} \log^{13}(1/p))$  calls  to the oracles to 
  test whether $f$ is a halfspace or $\eps$-far from any halfspace in
  relative distance.
\end{theorem}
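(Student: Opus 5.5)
We base the tester on Gaussian isoperimetry, in the spirit of \cite{MosselNeeman15}. Let $\alpha(p) := \varphi(\Phi^{-1}(p))$ be the Gaussian surface area of a halfspace of volume $p$. This is the minimum surface area over all sets of volume $p$, and it satisfies $\alpha(p) \approx p\sqrt{2\ln(1/p)}$.

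We will not estimate surface area directly. Instead we use a noise-sensitivity proxy at a small noise rate $\delta$. Draw $\bx \sim \SAMP(f)$, then form $\by = \sqrt{1-\delta^2}\,\bx + \delta \bz$ with $\bz \sim N(0,I_n)$, and query $\MQ(f)$ at $\by$. Let $\mathrm{RNS}_\delta(f) := \Pr[f(\by)=0 \mid f(\bx)=1]$ be the resulting relative noise sensitivity. For a halfspace of volume $p$ with threshold $t=\Phi^{-1}(1-p)$, this quantity is an explicit two-dimensional Gaussian quantity. For $\delta \ll 1/t$ it is roughly $\delta\,\alpha(p)/p \approx \delta\sqrt{2\ln(1/p)}$.

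The tester sets $\delta \approx \eps/\sqrt{\log(1/p)}$, so that a typical threshold-crossing event has constant probability relative to $\eps$. It estimates $\mathrm{RNS}_\delta(f)$ to multiplicative accuracy $1\pm c\eps^{2}$ (or whatever the robust bound needs). It accepts iff the estimate is at most $(1+c'\eps^2)\,\mathrm{RNS}_\delta(H_{\hat p})$, where $H_{\hat p}$ is the halfspace of volume $\hat p$. Since $\mathrm{RNS}_\delta(f) = \Theta(\eps)$, the multiplicative estimation costs $\poly(1/\eps)$ samples, and the powers of $\log(1/p)$ come from the accuracy needed relative to $\delta\sqrt{\log(1/p)}$. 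We then check that the sensitivity of $\mathrm{RNS}_\delta(H_p)$ to $p$ is tolerable when $p$ is replaced by $\hat p$ with $\zeta = c_1\eps^2/\log^2(1/p)$. Heuristically $d\log\mathrm{RNS}/d\log p \approx 1/\log(1/p)$, so this error is only $O(\eps^2/\log^3(1/p))$.

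Completeness follows from Borell's noise stability inequality: among sets of volume $p$, halfspaces minimize $\Pr[f(\bx)=1, f(\by)=0]$. So a halfspace is accepted with high probability, and its relative noise sensitivity equals the halfspace value exactly.

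Soundness is the main obstacle. It needs a \emph{robust} version of Borell's inequality valid in the small-volume regime. Suppose $f$ has volume $p$ and $\mathrm{RNS}_\delta(f) \le (1+\eta)\,\mathrm{RNS}_\delta(H_p)$. We must show $f$ is $O(\eps)$-close in relative distance to some halfspace, with $\eta = \poly(\eps)/\poly\log(1/p)$.

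I would first try to adapt the robust isoperimetry of Mossel–Neeman or Eldan's robust Gaussian isoperimetry via the noise-stability deficit. The catch is that those results give absolute-distance bounds of the form $\mathrm{dist} \lesssim \poly(\mathrm{deficit})\cdot\polylog$. We need them in relative form at scale $p$.

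The concrete route is as follows. Write the deficit as $p\,\eta\,\mathrm{RNS}$. Then invoke a relative robust isoperimetric statement of the form
\[
\min_{H}\frac{\Vol(f\triangle H)}{p}\;\lesssim\;\Bigl(\eta\,\log(1/p)\Bigr)^{c}.
\]
We would derive this either from Eldan's stochastic-calculus bound tracking the $p$-dependence, or by conditioning on the halfspace-like region $\{x_1 \ge t-O(1/t)\}$ and rescaling.

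Finally, choose $\eta$ to make the right-hand side at most $\eps$. Balancing this choice against the estimation cost $\eta^{-2}\cdot$(inverse sensitivity) is where the exponents $\eps^{-14}\log^{13}(1/p)$ should arise. If the direct robust statement is not available, the fallback is a two-step argument. First use the surface-area deficit to show that the Hermite level-1 direction of $f$ captures most of the mass. Then apply a one-dimensional isoperimetric argument along that direction.
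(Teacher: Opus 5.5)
\textbf{Verdict: there is a genuine gap.} Your test statistic is essentially the paper's: estimate $\Pr[f(y)=0\mid f(x)=1]$ with $x\sim\mathrm{SAMP}(f)$ and one query at a noisy copy $y$. Completeness is also fine. The problem is soundness, which is the entire content of the theorem, and which you do not supply.

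You correctly note that the known robust noise-stability theorems are not stated at the relative scale $p$. You then simply posit $\min_H \mathrm{Vol}(f\triangle H)/p \le O\bigl((\eta\log(1/p))^c\bigr)$, with neither a proof nor a source. Neither suggested derivation is an argument. Conditioning on $\{x_1\ge t-O(1/t)\}$ presupposes the normal direction of the near-optimal halfspace, which is exactly what has to be found. ``Tracking the $p$-dependence'' in Eldan's proof is unspecified. The fallback needs a surface-area deficit, which your noise-sensitivity test does not directly control.

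Your parameters are also inconsistent with the posited bound. An acceptance slack of $(1+c'\epsilon^2)$ means $\eta\approx\epsilon^2$, and then $(\eta\log(1/p))^c$ is not small once $\log(1/p)\gg\epsilon^{-2}$. So the exponents $\epsilon^{-14}\log^{13}(1/p)$ are asserted rather than derived. (Also, Borell's inequality plays no role in completeness. It lower-bounds the noise sensitivity of every set, which is the soundness direction.)

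The missing idea is the sharp stability inequality of Barchiesi et al.\ (\Cref{thm:BBJ}), $\mathrm{Vol}(E\triangle H)^2\le C'(1+s^2)e^{-s^2/2}D(E)$. Its explicit dependence on the threshold $s$ is precisely the relative-scale statement you were missing: it gives relative distance $O(\beta)$ whenever the surface-area deficit is at most $\beta^2\varphi(\Phi^{-1}(p))/\log^2(1/p)$ (\Cref{corr:BBJ}). Because it concerns surface area rather than noise stability, the paper never needs a robust Borell inequality. Instead it bridges from noise sensitivity in both directions:
\begin{itemize}
\item Ledoux's inequality (\Cref{thm:noise-sensitivity-Ledoux}) gives completeness for the threshold $\frac{2\sqrt{t}}{\sqrt{\pi}}\varphi(\Phi^{-1}(p_{\mathsf{lb}}))/p_{\mathsf{lb}}+\kappa$ on the estimate of $\mathrm{NS}_t(f)/p$, using that $\varphi(\Phi^{-1}(p))/p$ is decreasing.
\item Neeman's reverse-Ledoux theorem (\Cref{thm:Neeman-reverse-Ledoux}) turns a passing $f$ into a set $B$ with $\mathrm{Vol}(f\triangle B)\le\mathrm{NS}_t(f)/\xi$ and $\mathsf{surf}(B)\le(1+O(\zeta+\xi))\varphi(\Phi^{-1}(p))$. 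Its relative isoperimetric deficit is $O(\tau+\zeta+\xi)$, where $\tau=\Theta(\sqrt{t\log(1/p)}/\xi)$ accounts for the volume change.
\end{itemize}

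This route does not work at your noise rate. Both $\tau$ and (up to a $\sqrt{\log(1/\xi)}$ factor) $\xi$ must be $O(\epsilon^2/\log^2(1/p))$. That forces $\sqrt{t\log(1/p)}$ to be at most roughly $\epsilon^4/\log^4(1/p)$. Your $\delta\approx\epsilon/\sqrt{\log(1/p)}$ (i.e.\ $t\approx\delta^2/2$) has $\sqrt{t\log(1/p)}\approx\epsilon$. The paper accordingly takes $t=\Theta(\epsilon^{10}/\log^{10}(1/p))$ and $\xi=\Theta(\epsilon^{2.1}/\log^{2.1}(1/p))$. These force additive accuracy $\kappa=\Theta(\epsilon^7/\log^{6.5}(1/p))$, and hence the stated $O(\kappa^{-2})=O(\epsilon^{-14}\log^{13}(1/p))$ oracle calls.
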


\begin{theorem}
[Sample-based Gaussian halfspace testing for known $p$] \label{thm:Gaussian2}
There is a sample-based algo\-rithm \HermiteTest\ with the following property:  If it  is given sample access~$\SAMP(f)$ to a measurable function $f: \R^n \to \{0,1\}$, a parameter $\eps$, and an estimate $\hat{p}$ of $p := \Vol(f)$ satisfying 
\[
{\frac {p}{1+\eta}} \leq \hat{p} \leq (1+\eta) p,
\quad
\text{where}\quad \eta = \frac{c_2\eps^2}{\log(1/p)}
\]
for some suitably small absolute constant $c_2>0$,
then \HermiteTest\  makes 
$$\max\cbra{
\Theta\pbra{
    {\frac {\sqrt{n}}{\eps^2}} + {\frac {\log^2(1/p)}{\eps^4}}
    },\tilde{\Theta}\left(\frac{\sqrt{n}}{\eps^{7}}\right)
    }
$$ calls to $\SAMP(f)$ to test whether $f$ is a halfspace
  or $\eps$-far from any halfspace in relative distance.
\end{theorem}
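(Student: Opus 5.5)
The approach is to use the level-$1$ Hermite weight of $f$, written in terms of samples from $\SAMP(f)$. Throughout, $\mu := \E_{\bx \sim f^{-1}(1)}[\bx]$ is the conditional mean, so that $\hat{f}(e_i) = \E[f(\bx)\bx_i] = p\,\mu_i$.

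\textbf{Step 1: the target value for halfspaces.} If $f = \mathbf{1}[w\cdot x \ge \theta]$ with $\|w\|=1$ and $p = \Vol(f)$, then $\theta = \Phi^{-1}(1-p)$ and
\[
\|\mu\| = \frac{\varphi(\theta)}{p} =: \mu^*(p) \approx \sqrt{2\ln(1/p)}.
\]
This is the Gaussian case of the Hermite-weight/bias relationship mentioned in the overview.

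\textbf{Step 2: the robust converse.} Suppose $g$ has volume $p$. Among all sets of Gaussian measure $p$, the halfspace in direction $u$ uniquely maximizes $\E[g(\bx)\, u\cdot \bx]$; this is the bathtub principle, applied to level sets of $u\cdot x$. So $\|\mu_g\| \le \mu^*(p)$, with equality only for halfspaces.

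I will prove a robust version. Take $u = \mu_g/\|\mu_g\|$, let $h$ be the halfspace in direction $u$ with volume $p$, and suppose $\reldist(g,h) \ge \eps$. The symmetric difference consists of two pieces, each of mass $\ge \eps p/2$: mass of $g$ lying below the threshold $\theta$, and mass of $h$ that $g$ misses above $\theta$. Moving the first piece to replace the second increases $\E[g\, u\cdot\bx]$ by at least the gap in $u\cdot x$-values between the two pieces. Above $\theta$, the Gaussian tail mass on $[\theta,\theta+t]$ is about $p(1-e^{-\theta t})$. So mass $\eps p/2$ forces a gap of $\Omega(\eps/\theta)$ on each side, giving
\[
\|\mu_g\| \le \mu^*(p) - c\,\eps/\sqrt{\log(1/p)}.
\]
Doing this accounting carefully in the small-$p$ regime, including the regime where $\eps$ is constant, is the main obstacle. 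The gap must be expressed relative to $\mu^*\approx\theta$. The test then has to distinguish $\|\mu\|^2$ at $\mu^{*2}$ versus $\mu^{*2} - \Omega(\eps)$. This is an additive $\Omega(\eps)$ gap on a quantity of size $\Theta(\log(1/p))$.

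\textbf{Step 3: the estimator.} Draw pairs of independent samples $\bx,\by \sim \SAMP(f)$ and use the U-statistic $\bx\cdot\by$, which is unbiased for $\|\mu\|^2$. Its variance per pair is
\[
\mathrm{Var}(\bx\cdot\by) \le \E[(\bx\cdot\by)^2] = \mathrm{tr}(\Sigma_2^2),
\]
where $\Sigma_2 = \E[\bx\bx^\top \mid f=1]$. Then $\mathrm{tr}(\Sigma_2^2) \le n\,\|\Sigma_2\|_{\mathrm{op}}^2$ plus terms involving $\|\mu\|$.

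For halfspaces, $\Sigma_2$ is $I$ off the direction $w$ and $O(\log(1/p))$ along $w$. In general $\|\Sigma_2\|_{\mathrm{op}} \le O(\log(1/p))$, because conditioning on a set of measure $p$ inflates every directional second moment by at most $O(\log(1/p))$. This bound is fine for halfspaces. For far functions the operator norm could be larger, but I only need to reject them, so I can also estimate $\E\|\bx\|^2 - n$ and reject when it is abnormal. I will use $m$ samples split into $m^2$ pairs, with standard U-statistic variance bounds of order $n/m^2 + \|\mu\|^2\log(1/p)/m$. This gives $m = O(\sqrt{n}/\eps + \log^2(1/p)/\eps^2)$ up to the stated powers. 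I will add median-of-means for heavy tails, which produces the $\tilde{\Theta}(\sqrt{n}/\eps^7)$ term once I also handle functions whose conditional distribution is not concentrated.

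\textbf{Step 4: the volume estimate.} Given $\hat{p}$ within factor $1+\eta$, we have $\mu^*(\hat p)^2 = \mu^*(p)^2 \pm O(\eta\log(1/p)) = \mu^*(p)^2 \pm O(\eps^2)$, since $\tfrac{d}{d\ln p}\mu^{*2} = O(\mu^{*2}) = O(\log(1/p))$. This is within the $\Omega(\eps)$ margin. The test accepts iff the estimate of $\|\mu\|^2$ is at least $\mu^*(\hat p)^2 - c'\eps$. Completeness and soundness then follow from Steps 1--3.
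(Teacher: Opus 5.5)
\textbf{Verdict: genuine gap.} Your overall architecture matches the paper's, but the robust converse in Step~2 has the wrong power of $\eps$, and as a result your test is unsound.

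\textbf{What matches.} You use the level-$1$ weight $\bW^{=1}[f]=p^2\|\mu\|^2$ and compare against the halfspace in direction $\mu/\|\mu\|$ with volume $p$. That is exactly the halfspace $\mathbf{1}\{h(x)\ge 0\}$ read off in the proof of \Cref{lem:our-MORS}. You also use the two-sample statistic $x\cdot y$ and the Lipschitz dependence of $\mu^*(\cdot)^2$ on $\log p$, as the paper does.

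\textbf{The gap in Step~2.} In your own accounting, mass $\eps p/2$ is moved across a band of width $\Omega(\eps/\theta)$. So $\E[g(x)\,u\cdot x]$ drops by $\Omega(\eps^2 p/\theta)$. This gives $\mu^*(p)-\|\mu_g\|=\Omega(\eps^2/\sqrt{\log(1/p)})$ and $\mu^*(p)^2-\|\mu_g\|^2=\Omega(\eps^2)$, not $\Omega(\eps)$. The quadratic order is the truth, because halfspaces are maximizers. For example, take $\mathbf{1}[x_1\ge\theta]$, raise its threshold by $\Theta(\eps/\theta)$ on $\{|x_2|\le a\}$ (where $\Pr[|x_2|\le a]=1/2$), and lower it by $\Theta(\eps/\theta)$ elsewhere, keeping the volume $p$. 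By symmetry $\mu_g$ points along $e_1$. By concavity of $I$ (with $I''=-1/I$), the deficit in $\|\mu\|^2$ is $\Theta(\eps^2)$. Yet the step makes this function $\Omega(\eps)$-far from every halfspace. This is consistent with \Cref{lem:our-MORS}: with $\ell t=\Theta(\eps)$ it gives $U(p)/p^2-\bW^{=1}[f]/p^2=\Omega(\eps^2)$ for far $f$, and nothing better.

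\textbf{Consequences.} Your Step~4 rule, ``accept iff the estimate is at least $\mu^*(\hat p)^2-c'\eps$'', therefore accepts such far functions once $\eps$ is small. Both the threshold and the estimation accuracy must be $\Theta(\eps^2)$. This is why the statement takes $\eta=c_2\eps^2/\log(1/p)$: your Step~4 computation then places the volume error $O(c_2\eps^2)$ inside an $\eps^2$ margin. It is also why $m$ must be $\sqrt n/\eps^2+\log^2(1/p)/\eps^4$ rather than $\sqrt n/\eps+\log^2(1/p)/\eps^2$.

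\textbf{Step~3 (variance).} The bound $\mathrm{tr}(\Sigma_2^2)\le n\|\Sigma_2\|_{\mathrm{op}}^2$ only gives $O(n\log^2(1/p))$, which costs an extra $\log(1/p)$ on the $\sqrt n$ term. Your fallback of rejecting when $\E\|x\|^2-n$ is abnormal does not help, since it controls only the trace of $\Sigma_2-I$, not its Frobenius norm. Neither is needed. The entries of $\Sigma_2-I$ are $\sqrt2\,\hat f(2e_i)/p$ on the diagonal and $\hat f(e_i+e_j)/p$ off it. So the level-$k$ inequality (\Cref{prop:level-k}) gives $\|\Sigma_2-I\|_F^2=O(\log^2(1/p))$, hence $\mathrm{tr}(\Sigma_2^2)=O(n+\log^2(1/p))$, for every $f$. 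Your bound $\|\Sigma_2\|_{\mathrm{op}}=O(\log(1/p))$ also holds for all $f$, so far functions need no special treatment. Adding the cross term $\|\mu\|^2\|\Sigma_2\|_{\mathrm{op}}=O(\log^2(1/p))$ yields the paper's variance bound $O(\log^2(1/p)/m+n/m^2)$. Chebyshev at constant confidence then suffices; median-of-means is unnecessary.

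\textbf{The $\tilde\Theta(\sqrt n/\eps^7)$ term.} This term does not come from heavy tails. It is the cost of the regime $p>0.1$, which your $\theta$-based Steps~2 and~4 do not cover. The paper handles that regime by running Harms's standard-model tester (\Cref{thm:mainHarms}) with parameter $\eps/10$.
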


We remark that the requirement in \Cref{thm:Gaussian2} that \HermiteTest\ is given an estimate of $p$ is inherent to any sample-based algorithm rather than a limitation of our arguments. In \Cref{ap:sample-only-lower-bound} we observe that any algorithm which uses only samples from $\SAMP(f)$ and is \emph{not} given an estimate of $p$ must draw $\Omega(n)$ samples,
even if we are given a promise that $p$ is either $1/2$ or 1.

\begin{theorem}
[Gaussian halfspace testing for unknown $p$] \label{thm:Gaussian3}
There is an algorithm \CombinedTest\ with the following property: 
If it is given sample access $\SAMP(f)$ and black-box access $\MQ(f)$ to an unknown measurable function $f: \R^n \to \{0,1\}$, which is guaranteed to satisfy $p := \Vol(f) \geq \pmin$,
 and a parameter $\eps$,  
then it makes 
\[
\Theta
\pbra{
    {\frac {\sqrt{n \cdot{\log (1/\pmin)}}}{\eps^2}} +  {{\frac {\log^6(1/\pmin)}{\eps^{12}}}
    }}
\]
calls to the oracles to test whether $f$ is a halfspace or $\eps$-far from any halfspace in relative distance.
\end{theorem}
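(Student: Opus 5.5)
We follow the outline given in the introduction and assemble \CombinedTest\ from two pieces. The first is the Hermite-based estimator behind \Cref{thm:Gaussian2}, used only to get an \emph{upper} bound on $p$. The second is a one-sided variant of \GSATest\ from \Cref{thm:Gaussian1}, which may err only when the volume estimate it is given is \emph{too low}.

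\textbf{Step 1: estimate the level-one Hermite weight.} We draw $\bx,\by \sim \SAMP(f)$ and form the U-statistic for $\|\E_{f^{-1}(1)}[\bx]\|^2$, i.e.\ the average of $\langle \bx^{(i)},\bx^{(j)}\rangle$ over pairs $i \neq j$. Its variance is $O(n/m^2 + \|\mu\|^2/m)$. Since $\|\mu\|^2 \approx 2\ln(1/p)$ for sparse halfspaces, about $\sqrt{n\log(1/\pmin)}/\eps^2$ samples give multiplicative accuracy $1\pm \eps^2/\log(1/\pmin)$, matching the first term of the bound. For a halfspace of volume $p$, the quantity $\|\mu\|=\varphi(\Phi^{-1}(p))/p$ is a strictly decreasing function of $p$. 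By Gaussian isoperimetry (the barycenter/level-one inequality), every $f$ of volume $p$ has $\|\mu_f\|\le \varphi(\Phi^{-1}(p))/p$. So inverting the map at the estimated $\|\mu\|$ gives $\hat p$ with $\hat p \ge p/(1+\zeta)$ for \emph{every} $f$. For halfspaces we also have $\hat p \le (1+\zeta)p$, with $\zeta = c\eps^2/\log^2(1/p)$. Here $\log(1/p)\le\log(1/\pmin)$ lets us state all accuracies in terms of $\pmin$. We also reject if $\hat p<\pmin/2$.

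\textbf{Step 2: one-sided \GSATest.} The key lemma to prove is that \GSATest\ run with estimate $\hat p$ has the following two guarantees. (a) It accepts every halfspace whenever $\hat p\in[p/(1+\zeta),(1+\zeta)p]$. (b) It rejects every $f$ that is $\eps$-far from halfspaces whenever $\hat p \ge p/(1+\zeta)$, even if $\hat p\gg p$. For (b) we revisit the \GSATest\ argument. It compares an empirical surface-area or noise-sensitivity quantity with the value $\mathrm{GSA}$ or $\mathrm{NS}$ predicted for a halfspace of volume $\hat p$. Since the extremal curve $p\mapsto \varphi(\Phi^{-1}(p))$ is increasing for $p<1/2$, overestimating $p$ only raises the threshold. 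A far $f$ has noise sensitivity at least $(1+\Omega(\eps^2))\cdot I(p)$ by robust Gaussian isoperimetry, so we would instead test against the true volume. To do this we also estimate $p$ \emph{from below} using queries: we estimate $\Pr[f(\bx)=1]$ only up to the accuracy the rejection needs, using noisy queries from samples. Concretely, we run the tester with noise rate tied to $\hat p$ and compare $\Pr_{\bx\sim\SAMP,\ \by\sim N_\rho(\bx)}[f(\by)=0]$ with the halfspace prediction at $\hat p$. This prediction is monotone in $\hat p$ in the right direction, so an overestimate of $p$ makes the test stricter. For (b) this is exactly the condition we want to exploit. The cost is $\poly(\log(1/\pmin),1/\eps)$ queries. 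Tracking the exponents with $\zeta=\eps^2/\log^2$ accuracy gives the claimed $\log^6(1/\pmin)/\eps^{12}$.

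\textbf{Step 3: combine.} \CombinedTest\ runs Step 1 to get $\hat p$ and then runs the modified \GSATest\ with $\hat p$. Completeness follows from the two-sided accuracy of Step 1 on halfspaces together with (a). Soundness follows from the one-sided bound $\hat p\ge p/(1+\zeta)$, which holds for all $f$, together with (b). Success probabilities are amplified by medians.

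\textbf{Main obstacle.} The hardest step is (b), the monotonicity and robustness of the \GSATest\ rejection criterion under an overestimate of $p$. The halfspace-predicted statistic must move in the "harder to pass" direction as $\hat p$ increases, uniformly over all noise rates used. We would first verify it for the noise-sensitivity form $\Pr[f(\by)=0\mid f(\bx)=1]$ via Ehrhard/Borell: this quantity is minimized by halfspaces, and for halfspaces it is decreasing in $p$. We then need to check that the error terms in the robust isoperimetric step still scale with the true $p$.
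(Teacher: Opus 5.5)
There is a genuine gap. The architecture is right: a one-sided upper estimate of $p$ from $\bT$, followed by a noise-sensitivity test whose volume-normalized threshold $\psi(\hat p)=\varphi(\Phi^{-1}(\hat p))/\hat p$ becomes stricter as $\hat p$ grows. But the quantitative bookkeeping does not close, and the failure is not just a matter of constants.

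\emph{Step 1 does not deliver the accuracy you claim.} $\Theta(\sqrt{n\log(1/\pmin)}/\eps^2)$ samples do not give $\zeta=c\eps^2/\log^2(1/p)$. Since $V(p)=\psi(p)^2$ satisfies $V'(p)=-\Theta(1/p)$ (\Cref{lem:bounds-on-V}), a relative error $\eps^2/\log(1/\pmin)$ in $\|\mu\|^2\approx V(p)$ is an additive error of up to order $\eps^2$ in $V$. That pins down $p$ only to within a factor $1\pm O(\eps^2)$. Reaching $\zeta\approx\eps^2/\log^2(1/p)$ needs additive accuracy $\eps^2/\log^2(1/p)$ in $\bT$, i.e.\ about $\sqrt{n}\log^2(1/\pmin)/\eps^2$ samples, a $\log^{3/2}(1/\pmin)$ factor over the claimed first term. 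Your variance bound $O(n/m^2+\|\mu\|^2/m)$ is also too optimistic for general $f$: it omits an $O(\log(1/p))$ factor in the $1/m$ term (cf.\ \Cref{eq:var-calc-goal}).

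\emph{With only $1\pm O(\eps^2)$ one-sided accuracy, your claim (b) fails.} An underestimate $\hat p=p/(1+\Theta(\eps^2))$ raises $\psi(\hat p)$ by a relative $\Theta(\eps^2/\log(1/p))$. The robust route you invoke (\Cref{thm:Neeman-reverse-Ledoux} with \Cref{corr:BBJ}, as in the soundness proof of \Cref{thm:Gaussian1}) only certifies a relative excess of order $\eps^2/\log^2(1/p)$ for $\eps$-far $f$. Your assertion that a far $f$ has noise sensitivity $(1+\Omega(\eps^2))$ times the halfspace value is false. Perturbing the boundary of $\{x_1\ge s\}$ by a degree-$2$ Hermite function of amplitude $\approx\eps/s$ gives a set that is $\Theta(\eps)$-far from every halfspace in relative distance. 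Its relative isoperimetric deficit is only $\Theta(\eps^2/s^2)=\Theta(\eps^2/\log(1/p))$.

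\emph{The query count also does not come out as claimed.} Even if you pay for the finer estimate, the robust stage costs on the order of the $\eps^{-14}\log^{13}$ bound of \Cref{thm:Gaussian1}, now with $\pmin$ in place of $p$. You must tie the noise rate to $\pmin$ rather than $\hat p$, since $\hat p\gg p$ is possible for far $f$. This is not the claimed $\log^6(1/\pmin)/\eps^{12}$.

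\emph{The missing ingredient is the robust level-one inequality, \Cref{lem:our-MORS2}.} If $f$ is $\eps$-far, then $\E[\bT]=\bW^{=1}[f]/p^2\le V(p)-c^*\eps^2$. By \Cref{claim:v-inverse}, $V^{-1}(\bT)$ therefore \emph{overshoots} $p$ by a factor $1+\Omega(\eps^2)$. This separation lives at the same $\Theta(\eps^2)$ scale as the estimation error, which is why $m\approx\sqrt{n\log(1/\pmin)}/\eps^2$ suffices.

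It also means the second stage needs no robust isoperimetry at all. Borell (Item~2 of \Cref{thm:Borell}) together with \Cref{corr:bounds-on-stability-for-halfspaces} gives, for every $f$ of volume $p$, $\NS_t(f)/p\ge\frac{2\sqrt t}{\sqrt\pi}\psi(p)\bigl(1-\Theta((t\log(1/p))^{1/4})\bigr)$. The overshoot lowers the threshold $\psi(V^{-1}(\bT))$ by a relative $\Theta(\eps^2/\log(1/p))$. This beats both the $(t\log(1/p))^{1/4}$ loss and $\kappa$ once $t=c_1\eps^8/\log^5(1/\pmin)$ and $\kappa=c_2\eps^6/\log^3(1/\pmin)$ (\Cref{thm:prof-de}), which is where the $\log^6(1/\pmin)/\eps^{12}$ term comes from.

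In short, you use the non-robust inequality $\bW^{=1}[f]\le U(p)$ where the robust one is needed, and the robust isoperimetric inequality where non-robust Borell suffices.
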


\section{Technical overview}
\label{subsec:technical-overview}

\subsection{Our algorithm for halfspace testing  using queries and samples:\\ Overview of \Cref{thm:Gaussian1}} \label{sec:broccoli-bound-techniques}

To explain the main idea of \Cref{thm:Gaussian1}, we start by recalling
two fundamental notions in Gaussian analysis. For any set $A \subseteq \mathbb{R}^n$, its \emph{Gaussian volume}, denoted by $\Vol(A)$, is $\Pr_{\bx \sim N(0,I_n)}[\bx \in A]$.
Furthermore, if $A$ is sufficiently smooth, we can also define its \emph{Gaussian surface area}, denoted $\mathsf{surf}(A)$, as $\int_{x \in \partial A} \varphi_n(x) d\sigma(x)$, where $\varphi_n(\cdot)$ denotes the standard $n$-dimensional Gaussian density; in other words, we integrate the standard Gaussian density   over the surface of $A$. Given these notions of volume and surface area over the Gaussian space, one is naturally led to the ``isoperimetric problem:"  {\em for a given (Gaussian) volume, what set minimizes the (Gaussian) surface area?}

This question was answered by Borell~\cite{Borell:75} and Sudakov-Tsirelson~\cite{ST:78}, who showed that for any $0 < p <1$, for any measurable set $A \subseteq \mathbb{R}^n$ with $\Vol(A)=p$, it always holds that $\mathsf{surf}(A) \ge \mathsf{surf}(H_p)$ where $H_p$ is a halfspace with $\Vol(H_p)=p$. In other words, in the Gaussian space, for any given volume $p$, the halfspace with volume $p$ has minimal surface area. Subsequently, Ehrhard~\cite{ehrhard1986elements} showed that halfspaces are the \emph{unique} minimizers of the Gaussian surface area. In other words, if a set $A \subseteq \mathbb{R}^n$ has $\Vol(A)=p$ and $\mathsf{surf}(A) =\mathsf{surf}(H_p)$, then the set $A$ is a halfspace. 

One is now naturally led to the {\em robustness} question: Suppose  a set $A$ has $\Vol(A)=p$ and $\mathsf{surf}(A)$ is $\delta$-close to $\mathsf{surf}(H_p)$. Then is it the case that $A$ is $\delta'$-close to a halfspace, where $\delta' \rightarrow 0$ as $\delta \rightarrow 0$? Cianchi~{\em et~al.}~\cite{cianchi2011isoperimetric} were the first to affirmatively answer this question. Soon thereafter, in a significant breakthrough, Mossel and Neeman~\cite{mossel2015robust} obtained  a ``dimension free robust isoperimetry result", i.e.~one in which the dependence between $\delta'$ and $\delta$ is independent of the ambient dimension $n$. 

This result was quantitatively improved by Eldan~\cite{eldan2015two} and then by Barchiesi~{\em et~al.}~\cite{barchiesi2017sharp}
(the latter result is the underlying technical ingredient we use in \Cref{thm:Gaussian1}). 
A natural question at this point is ``what is the connection between robust isoperimetry  and testing halfspaces?" 

The connection arises from a 2014 result of Neeman~\cite{Neeman14}, who gave an  algorithm for testing surface area over Gaussian space. The precise technical guarantee is somewhat cumbersome to state, but roughly speaking, the algorithm makes $\poly(S, 1/\epsilon)$-queries to an unknown set $A$ and (i) accepts with high probability if $\mathsf{surf}(A) \leq S$, but (ii) rejects with high probability if $\mathsf{surf}(B) > (1 + \epsilon)S$ for every set $B$ which is $\epsilon$-close to $A$\footnote{This perturbation of $A$ is necessary because one can increase the surface area of a set arbitrarily by modifying on a measure zero set.}. Now, observe that if an unknown set $A$ of some given volume $p$ is a halfspace, then its surface area is exactly given by $\varphi(\Phi^{-1}(p))$ where $\Phi$ is the cdf and $\varphi$ is the pdf of the standard Gaussian.  So to test whether $A$ is a halfspace, we can use Neeman's algorithm to test whether its surface area is essentially that of a halfspace of volume $p$, and accept if and only if that is the case.  Correctness of this simple algorithm is established using the dimension-free robust isoperimetry result of Barchiesi~{\em et~al.}~\cite{barchiesi2017sharp} that was mentioned above.

\subsection{Our algorithm for sample-based halfspace testing:  Overview of~\Cref{thm:Gaussian2}} \label{sec:MORS-techniques}

Our starting point for~\Cref{thm:Gaussian2} is the work of Matulef, O'Donnell, Rubinfeld, and Servedio~\cite{MORS10} which gives an algorithm (the ``MORS algorithm'') that tests halfspaces over $\R^n$ using $O_\eps(1)$ queries. The MORS algorithm relies on two structural ingredients: 
\begin{itemize}
    \item First, there is a function $U: [0,1]\to\R$ (see~\Cref{def:U}) which, given as input the volume $\Vol(f)$ of a function $f:\R^n\to\zo$, tells us exactly what the \emph{level-$1$ Hermite weight} 
    \[
        \bW^{=1}[f] := \sumi \Ex_{\bx\sim N(0,I_n)}\sbra{f(\bx)\cdot\bx_i}^2
    \]
    would be if the function $f$ were the indicator function of a halfspace.
    \item Second, if $U(\vol(f)) \approx \bW^{=1}[f]$, then the set $f$ is close to some halfspace.
    \end{itemize}
The above facts suggest a natural algorithm: estimate $\vol(f)$ and $\bW^{=1}[f]$, and then compare $U(\vol(f))$ to $\bW^{=1}[f]$. 
(We remark that the MORS algorithm uses queries to do this.)

The algorithm~\HermiteTest~(\Cref{alg:MORS}) that yields~\Cref{thm:Gaussian2} can be viewed as a {relative-error, sample-based} variant of the MORS algorithm. Our main technical lemma (\Cref{lem:our-MORS}) shows that if $U(\vol(f))$ is sufficiently close to $\bW^{=1}[f]$ in a suitable sense, then $f$ is close \emph{under relative distance} to some halfspace. In this sense,~\Cref{lem:our-MORS} can be viewed as a relative-error strengthening of the second structural ingredient of the MORS algorithm above. 
With \Cref{lem:our-MORS} in hand, we apply a test which is similar in spirit to the  MORS algorithm: estimate $\bW^{=1}[f]$ and see if it is (sufficiently) close to $U(\vol(f))$. 

This leads to a second technical challenge: in order to get a sample-based algorithm we must accurately estimate $\bW^{=1}[f]$ using access to samples from $\SAMP(f)$ alone. (Note that the MORS algorithm relies on query access to the function $f$, and there are sample-based testers~\cite{BBBY12,Harms19} that rely on \emph{labeled} samples from $f$ as opposed to just positive samples from $\SAMP(f)$.) To accomplish this, we employ a recent estimator used in a different context by~\cite{de2023testing,de2024detecting} (for the problem of detecting distribution truncation) which is as follows: given a draw of $2m$ i.i.d. samples $\x{1},\dots,\x{m}, \y{1},\ldots,\y{m} \leftarrow \SAMP(f)$, we compute the statistic
\[
    \bT := \frac{1}{m^2}\sum_{i\neq j} \x{i}\cdot\y{j}.
\]
It is readily verified that $\E[\bT] = \vol(f)^{-2}\cdot\bW^{=1}[f]$; since $\Vol(f)$ is known, we can estimate $\bW^{=1}[f]$ if we can estimate $\E[\bT]$. To estimate $\E[\bT]$, we must establish a bound on the variance of $\bT$; to do this, we employ (consequences of) \emph{hypercontractivity of the Gaussian measure}~(see~Chapters 9 and 11 of~\cite{odonnell-book}), drawing inspiration from 
recent work~\cite{de2024detecting}.

\subsection{Our algorithm for halfspace testing  when the volume is unknown:\\ Overview of \Cref{thm:Gaussian3}} \label{sec:Gaussian3-techniques}
We obtain the algorithm of \Cref{thm:Gaussian3}, which does not need to have (an estimate of) the value of $p$, by combining the ideas underlying \Cref{thm:Gaussian1} and \Cref{thm:Gaussian2}, as well as bringing in new ideas.  In particular, a crucial structural property of LTFs which we use for \Cref{thm:Gaussian3} is the following:  not only does the level-$1$ Hermite weight of an LTF $f$ depend only on the bias of the function $\E[f]$ as mentioned earlier, but under the Gaussian distribution LTFs are in fact \emph{robustly extremal}, among all $\zo$-valued functions, in terms of how their level-1 Hermite weight depends on  on the bias of the function. 
In more detail, writing $h_p$ to denote a halfspace with Gaussian volume $p$, it is the case that \emph{any} function $f: \R^n \to \zo$ with Gaussian volume $p$ must satisfy
\begin{equation} \label{eq:extremal}
    \bW^{=1}[f] \leq     \bW^{=1}[h_p].
\end{equation}
(As alluded to in the last paragraph of \Cref{sec:MORS-techniques}, the LHS of the inequality above is precisely $\Vol(f)^2 \cdot \Ex_{\bx,\by \sim f^{-1}(1)}[\bx \cdot \by].$)
Moreover the inequality can only be an equality if $f$ is a halfspace, and it can only be close to an equality if $f$ is close to a halfspace.  Versions of this structural property are established, in varying degrees of explicitness, in 
Theorem~7 of \cite{KKMO07}, Theorem~26 of \cite{MORS10}, and Theorem~3.3.4 of 
\cite{Harmsthesis}; our \Cref{lem:our-MORS2} gives a version of this which is --- crucially, for our purposes --- quantitatively stronger than those earlier results in the small-$p$ regime.

With this perspective, we can reinterpret the testing algorithm of \Cref{thm:Gaussian2} in the following light: if the algorithm were \emph{not} given $p$, it could compute an estimate $\tau$ of $\Ex_{\bx,\by \sim f^{-1}(1)}[\bx \cdot \by]$ using draws from $\SAMP(f)$ and solve the equation 
\[
\tau = {\frac {\bW^{=1}[h_{p_2}]} {(p_2)^2}}
\]
for the value $p_2$. If $f$ were an LTF and the value of $\tau$ were precisely the value of $\Ex_{\bx,\by \sim f^{-1}(1)}[\bx \cdot \by]$, then this would give us the exact correct value $p_2 = p$. On the other hand, because of the extremal property \eqref{eq:extremal} mentioned above, if $f$ is far from every LTF then (up to a small factor due to estimation error) the value of $p_2$ will be \emph{significantly larger than}  the true value $p = \Vol(f).$

Now let us return to \Cref{thm:Gaussian1}; the key for our analysis is that this result relies on a robust extremal property of LTFs that \emph{goes in the other direction}.  
Recall that the \Cref{thm:Gaussian1} algorithm essentially works by estimating 
\[
\Prx_{\bx \sim f^{-1}(1),\by \sim N_\lambda(\bx)}[f(\bx)\neq f(\by)], \quad \text{i.e.}\quad 
\Prx_{\bx \sim f^{-1}(1),\by \sim N_\lambda(\bx)}[f(\by)=0]
\]
(where ``$\by \sim N_\lambda(\bx)$'' means that $\by$ is a $\lambda$-correlated random perturbation of $\bx$; see \Cref{subsec:cherry-coke-zero} for a formal definition),
and checking whether it takes the ``right value" that it should take for an 0/1-valued LTF with expectation $p$, which is ${\frac 1 {2p}} \Sens_\lambda(LTF_p) = C\sqrt{\lambda \ln(1/p)}.$  In the \Cref{thm:Gaussian1} algorithm we use a value of $\lambda$ which depends on $p$, but for a \emph{fixed} value of $\lambda$, the function ${\frac 1 {2p}} \Sens_\lambda(LTF_p)$ is a \emph{decreasing} function of $p$; as we will see, this will be crucial for us.

Now, known isoperimetric properties of halfspaces can be shown to imply that if $f$ is any $\zo$-valued function $f$ with $\Vol(f)=p$, then $f$ must (essentially) satisfy 
\[
\Prx_{\bx \sim f^{-1}(1),\by \sim N_\lambda(\bx)}[f(\bx)\neq f(\by)] \geq {\frac 1 {2p}} \Sens_\lambda(h_p);
\]
 i.e.~that halfspaces are  \emph{minimizers} of the quantity $\Prx_{\bx \sim f^{-1}(1),\by \sim N_\lambda(\bx)}[f(\bx)\neq f(\by)]$, which the algorithm of \Cref{thm:Gaussian1} estimates from samples and queries.
 This means that if we run the algorithm of \Cref{thm:Gaussian1} with the value $p_2$ obtained from the modified algorithm of \Cref{thm:Gaussian2} as described earlier, there are two possibilities:
 
 \begin{itemize}
 \item If $f$ is a halfspace, then $p_2$ will be (approximately) equal to the true value of $p$, and since we are running the algorithm of \Cref{thm:Gaussian1} with (essentially) the right value of $p$, it will accept.
 
 \item On the other hand, if $f$ is far from every halfspace, then as explained earlier $p_2$ will be significantly \emph{larger} than the true value of $p$.  But since ${\frac 1 {2p}} \Sens_\lambda(LTF_p)$ is a decreasing function of $p$, the algorithm of \Cref{thm:Gaussian1} can only accept if it is run on a ``guessed'' value of $\Vol(f)$ which is either \emph{equal to or smaller than} the true value of $\Vol(f)$. So in this case, the algorithm will reject.
\end{itemize}
This concludes the overview of the proof of \Cref{thm:Gaussian3}. One point of technical interest is that the actual proof of \Cref{thm:Gaussian3} does not use \Cref{thm:Gaussian1} as a black-box. In fact, unlike \Cref{thm:Gaussian1} which crucially relies on the robust isoperimetry results of \cite{barchiesi2017sharp}, \Cref{thm:Gaussian3} only relies on the classical result of Borell~\cite{Borell:85} which states that halfspaces are the minimizers of noise sensitivity. The result of Borell has many proofs by now, including some quite elementary (see~\cite{DMN13} and references therein). So, arguably, the proof of \Cref{thm:Gaussian3} is significantly simpler than that of \Cref{thm:Gaussian1} (though conceptually, \Cref{thm:Gaussian3} is inspired by \Cref{thm:Gaussian1}).

%% file: sections/preliminaries-Gaussian.tex

\section{Preliminaries for relative-error testing over $N(0,1)^n$} \label{sec:preliminaries}

We use boldfaced letters such as $\bx, \boldf,\bA$, etc. to denote random variables (which may be \mbox{real-valued,} vector-valued, function-valued, or set-valued; the intended type of the random variable will be clear from the context).
We write $\bx \sim \calD$ to indicate that the random variable $\bx$ is distributed according to probability distribution $\calD$.

We will write $(e_i)_{i=1}^n$ for the collection of standard basis vectors in $\R^n$. Given two sets $A$ and $B$, we use $A \, \triangle \, B$ to denote their symmetric difference, i.e. $A\,\triangle\,B = (A\setminus B) \cup (B\setminus A)$. 
As in \cite{odonnell-book} we write $\mathbb{N}$ for the set $\mathbb{N}=\{0,1,2,\dots\}$.\medskip

\medskip
\noindent {\bf The Gaussian distribution and Gaussian relative-error testing.}
We will write $N(0,I_n)$ to denote the $n$-dimensional standard Gaussian distribution, and denote its density function by $\phi_n$, i.e., for $x\in \mathbb{R}^n$,
\[\varphi_n(x) = \frac{1}{\sqrt{(2\pi)^{n}}} \cdot e^{-\|x\|^2/2}.\]
We frequently write $\phi \equiv \phi_1$ to denote the 
  one-dimensional standard Gaussian density. 
We write $\vol(K)$ to denote the Gaussian measure of a (Lebesgue measurable) set $K \subseteq \R^n$, that is 
\[\vol(K) := \Prx_{\bx \sim N(0,I_n)}[\bx \in K].\]  
Given a $0/1$-valued function $f:\R^n\to\zo$, we write $\Vol(f)$ to denote $\Vol(f^{-1}(1))$.

Finally, we will write $\Phi: \R\to(0,1)$ for the c.d.f.~of $N(0, 1)$, i.e. 
\[\Phi(r) = \Prx_{\bg\sim N(0, 1)}\sbra{\bg\leq r} = \int_{-\infty}^r \phi(x)\,dx.\]

We will frequently use the following standard tail bound on (univariate) Gaussian random variables:

\begin{proposition}[Proposition~2.1.2 of \cite{vershynin2018high} or Exercise~2.2 of \cite{Wainwright19book}] \label{prop:gaussian-tails}
	Suppose $\bg\sim N(0,1)$ is a one-dimensional Gaussian random variable. Then {for all $r>0$,} we have
\[
    \left({\frac 1 r} - {\frac 1 {r^3}} \right) \cdot \varphi(r)
    \leq  \Phi(-r)=\Prx_{\bg \sim N(0,1)}[\bg \geq r] \leq
    \pbra{{\frac 1 r} - {\frac 1 {r^3} + {\frac 3 {r^5}}}}\cdot \varphi(r).
\]
\end{proposition}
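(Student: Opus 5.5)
We argue by comparing derivatives: each bound minus $\Phi(-r)$ is a function that vanishes at infinity and is monotone on $(0,\infty)$, so its sign is forced. The only inputs are $\varphi'(r) = -r\varphi(r)$ and $\frac{d}{dr}\Phi(-r) = -\varphi(r)$.

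For the upper bound, set
\[
g(r) := \pbra{\frac 1 r - \frac 1 {r^3} + \frac 3 {r^5}}\varphi(r) - \Phi(-r), \qquad r>0.
\]
We differentiate term by term, using $\frac{d}{dr}\big(r^{-k}\varphi(r)\big) = (-k r^{-k-1} - r^{1-k})\varphi(r)$. This gives
\[
\frac{d}{dr}\pbra{\frac{\varphi(r)}{r}} = \pbra{-\frac 1{r^2} - 1}\varphi(r),\quad
\frac{d}{dr}\pbra{-\frac{\varphi(r)}{r^3}} = \pbra{\frac 3{r^4} + \frac 1{r^2}}\varphi(r),\quad
\frac{d}{dr}\pbra{\frac{3\varphi(r)}{r^5}} = \pbra{-\frac{15}{r^6} - \frac 3{r^4}}\varphi(r).
\]
The intermediate terms telescope, and adding the contribution $+\varphi(r)$ from $-\Phi(-r)$ leaves $g'(r) = -\frac{15}{r^6}\varphi(r) < 0$. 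Moreover $g(r)\to 0$ as $r\to\infty$, since both $\varphi(r)$ and $\Phi(-r)$ tend to $0$ and the polynomial prefactor is bounded for $r\ge 1$. A strictly decreasing function with limit $0$ at infinity is positive, so $g(r) > 0$ for all $r>0$. This is the upper bound.

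For the lower bound, set $h(r) := \Phi(-r) - \pbra{\frac 1 r - \frac 1{r^3}}\varphi(r)$. The same computation gives
\[
h'(r) = -\varphi(r) - \pbra{-1 + \frac{3}{r^4}}\varphi(r) = -\frac{3}{r^4}\varphi(r) < 0.
\]
Since $h(r)\to 0$ as $r\to\infty$, it follows that $h(r) \ge 0$ for all $r>0$. The left-hand inequality is trivially true when $r\le 1$ anyway, because its left side is then nonpositive.

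There is no real obstacle. The only points needing care are the bookkeeping in the derivative computation, which should telescope to a single negative term, and checking that both auxiliary functions vanish at $+\infty$. An equivalent route is to integrate $\Phi(-r) = \int_r^\infty \varphi(t)\,dt$ by parts repeatedly, writing $\varphi(t) = -\varphi'(t)/t$. That yields the asymptotic series with remainders $3\int_r^\infty t^{-4}\varphi(t)\,dt \ge 0$ and $-15\int_r^\infty t^{-6}\varphi(t)\,dt \le 0$, whose signs give the two inequalities.
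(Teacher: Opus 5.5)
Your proof is correct: the derivatives $g'(r) = -15r^{-6}\varphi(r)$ and $h'(r) = -3r^{-4}\varphi(r)$ check out, and together with $g,h \to 0$ at $+\infty$ they give both inequalities, in fact strictly. The paper gives no proof and only cites Vershynin and Wainwright, whose argument is the repeated integration by parts using $\varphi'(r) = -r\varphi(r)$ that you mention at the end; your monotonicity argument is the same computation in differential form, since it amounts to $g(r) = 15\int_r^\infty s^{-6}\varphi(s)\,ds$ and $h(r) = 3\int_r^\infty s^{-4}\varphi(s)\,ds$.
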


From \Cref{prop:gaussian-tails} we get the following estimate on  $\Phi(\cdot)$ that will be convenient:
\begin{fact}~\label{fact:Gaussian-cdf-estimate}
Let $r$ be any positive value bounded away from $0$ by an absolute constant, e.g.~$r \ge 0.01$. 
Then we have 
$
1-\Phi(r)=\Phi(-r) =  \Theta(e^{-r^2/2}/r). 
$
\end{fact}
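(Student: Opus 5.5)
We read $\Phi(-r)=\Theta(e^{-r^2/2}/r)$ in the statement as holding uniformly over $r\ge 0.01$, with the implicit constants allowed to depend on the absolute threshold $0.01$. The plan is to split the range of $r$ into a large-$r$ regime and a compact regime. On the first we apply \Cref{prop:gaussian-tails} directly; on the second we use continuity and compactness. Throughout we use $\varphi(r)=e^{-r^2/2}/\sqrt{2\pi}$, so that $\varphi(r)/r=\Theta(e^{-r^2/2}/r)$ with an absolute constant.

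\textbf{Large $r$, say $r\ge 2$.} By \Cref{prop:gaussian-tails},
\[
\Phi(-r)\ge \left(\tfrac1r-\tfrac1{r^3}\right)\varphi(r)=\tfrac{\varphi(r)}{r}\left(1-\tfrac1{r^2}\right)\ge \tfrac34\cdot\tfrac{\varphi(r)}{r},
\]
and
\[
\Phi(-r)\le \left(\tfrac1r+\tfrac3{r^5}\right)\varphi(r)\le \left(1+\tfrac3{16}\right)\tfrac{\varphi(r)}{r}.
\]
Together these give $\Phi(-r)=\Theta(\varphi(r)/r)=\Theta(e^{-r^2/2}/r)$ on $r\ge 2$.

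\textbf{Compact range $r\in[0.01,2]$.} Both $\Phi(-r)$ and $e^{-r^2/2}/r$ are continuous and strictly positive on this interval. Their ratio is therefore a continuous positive function on a compact set, so it is bounded above and below by positive absolute constants.

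Explicit constants are also easy to obtain in this range. We have $\Phi(-r)\in[\Phi(-2),1/2]$, while $e^{-r^2/2}/r\in[e^{-2}/2,\,100]$. Hence
\[
\frac{\Phi(-2)}{100}\le \frac{\Phi(-r)}{e^{-r^2/2}/r}\le \frac{1/2}{e^{-2}/2}=e^2,
\]
and both bounds are absolute constants.

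Combining the two regimes gives the claim with absolute constants. The argument has no real obstacle. The only subtle point is that the lower bound in \Cref{prop:gaussian-tails} degenerates for $r\le 1$, which is exactly why the compact range has to be handled separately. This is also why the hypothesis that $r$ is bounded away from $0$ is needed: as $r\to 0$ we have $e^{-r^2/2}/r\to\infty$, while $\Phi(-r)\to 1/2$, so the estimate fails without a lower bound on $r$.
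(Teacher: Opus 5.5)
Your proof is correct and follows the paper's route, which simply derives the estimate from \Cref{prop:gaussian-tails}. Your separate treatment of $r\in[0.01,2]$, via monotonicity of $\Phi(-r)$ and of $e^{-r^2/2}/r$, is actually necessary, and the paper's one-line justification skips it: the proposition's lower bound $(1/r-1/r^3)\varphi(r)$ is nonpositive for $r\le 1$, so it cannot give the lower estimate in that range.
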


\medskip
\noindent {\bf Gaussian relative-error testing.}
The model of relative-error testing under the Gaussian distribution is a natural variant of the model of relative-error testing over $\zo^n$ that was described earlier.
Now a call to $\SAMP(f)$ returns a draw from $N(0,1)^n|_{f^{-1}(1)}$, the standard Gaussian distribution conditioned on $f^{-1}(1)$.
Similar to before, a relative-error testing algorithm for ${\cal C}$ must output ``yes'' with high probability (say at least 9/10; this success probability can be easily amplified) if $f \in {\cal C}$, and must output ``no'' with high probability (again, say at least 9/10) if $\reldist(f,{\cal C}) \geq \eps$, where 
$\reldist(f,{\cal C})=\min_{g \in {\cal C}}\reldist(f,g)$ and 
  the relative distance between $f$ and $g$ is defined as
\[
\reldist(f,g) = {\frac {\Vol(f^{-1}(1) \hspace{0.06cm}\triangle\hspace{0.06cm} g^{-1}(1))}{\Vol(f)}}.
\]

\medskip
\noindent
{\bf Testing algorithms in the standard model.} For handling certain edge cases, we will rely on known testing algorithms in the standard (not relative-error) property testing model. One such result that we will use is the following theorem due to \cite{MORS10}:  
\begin{theorem}
[Theorem~26 of \cite{MORS10}] \label{thm:mainMORS}
There is an algorithm \MORStester~that, when given error parameter $\epsilon>0$ and oracle access to $f: \mathbb{R}^n \rightarrow \zo$, makes $O(1/\eps^{12})$
queries to $f$ and has the following guarantee: 
\begin{flushleft}\begin{enumerate}
    \item If $f$ is a halfspace, the algorithm outputs \textsf{accept} with probability at least $0.99$. 
    \item If $\Vol(f^{-1}(1) \ \triangle \ h^{-1}(1)) \ge \eps$ for every halfspace $h$, then the algorithm outputs \textsf{reject} with probability at least $0.99$. 
\end{enumerate}\end{flushleft}
\end{theorem}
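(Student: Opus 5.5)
The plan is the ``level-1 Hermite weight versus bias'' strategy sketched in \Cref{sec:MORS-techniques}, carried out in the standard model with queries. Throughout, work with the $\pm1$-valued version $F = 2f-1$ and write $\mu=\E[F]$. Define $U(\mu)$ as the level-1 weight of a halfspace of mean $\mu$; it is $\bW^{=1}[F]$ for any halfspace with that mean, and it equals $4\varphi(\Phi^{-1}((1+\mu)/2))^2$ for a threshold $\mathrm{sign}(x_1-\theta)$.

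The tester works as follows.
\begin{enumerate}
\item It estimates $\mu$ to accuracy $\poly(\eps)$ using $\poly(1/\eps)$ uniform queries.
\item If $|\mu|\ge 1-\eps$, it accepts, since $f$ is then $\eps$-close to a constant, which is a halfspace.
\item Otherwise it estimates $\bW^{=1}[F]$. To avoid the dimension-dependent variance of the naive estimator $F(\bx)F(\by)\,\bx\cdot\by$, it uses noise stability: for a small $\rho=\poly(\eps)$ it estimates $\mathrm{Stab}_\rho(F)=\E[F(\bx)F(\by)]$ with $\by$ $\rho$-correlated to $\bx$. Since $\mathrm{Stab}_\rho(F)=\sum_k\rho^k\bW^{=k}[F]$, the quantity $(\mathrm{Stab}_\rho(F)-\mu^2)/\rho$ equals $\bW^{=1}[F]\pm O(\rho)$. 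Estimating $\mathrm{Stab}_\rho$ to accuracy $\rho\cdot\poly(\eps)$ costs $\poly(1/\eps)$ queries, independent of $n$.
\item It accepts if and only if the estimate of $\bW^{=1}$ is at least $U(\hat\mu)-\tau$ for a suitable $\tau=\poly(\eps)$.
\end{enumerate}
Completeness follows because $U$ is Lipschitz with a $\poly(1/\eps)$ constant on $|\mu|\le 1-\eps$, so errors in $\hat\mu$ translate into small errors in $U(\hat\mu)$. Tracking the exponents should give the claimed $O(\eps^{-12})$ query bound.

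Soundness rests on a robust structural lemma: if $|\mu|\le1-\eps$ and $\bW^{=1}[F]\ge U(\mu)-\tau$, then $F$ is $\eps$-close to a halfspace. To prove it, let $\ell(x)=\sum_i\hat F(e_i)x_i$, so that $\|\ell\|_2^2=\E[F\ell]=\bW^{=1}[F]$. Let $H=\mathrm{sign}(\ell-\theta)$ with $\theta$ chosen so that $\E[H]=\mu$. Among $\pm1$ functions of mean $\mu$, $H$ maximizes $\E[G\ell]$ (a bathtub/Neyman--Pearson argument), and $\E[H\ell]=\|\ell\|\sqrt{U(\mu)}$. It follows that $\bW^{=1}\le U(\mu)$, and
\[
\E[(H-F)(\ell-\theta)] = \E[(H-F)\ell] = \|\ell\|\bigl(\sqrt{U(\mu)}-\sqrt{\bW^{=1}[F]}\bigr)\le \poly(\tau/\eps).
\]
The integrand $(H-F)(\ell-\theta)$ is nonnegative pointwise, and $|H-F|=2$ exactly on the disagreement set $D$. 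This gives $2\,\E[\mathbf 1_D\,|\ell-\theta|]\le\poly(\tau/\eps)$.

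The main obstacle is converting this into a bound on $\Pr[D]$. I would use Gaussian anticoncentration of the one-dimensional Gaussian $\ell$: we have $\Pr\bigl[|\ell-\theta|\le\delta\|\ell\|\bigr]\le O(\delta)$, and $\|\ell\|\ge\Omega(\sqrt{U(\mu)})\ge\poly(\eps)$ when $|\mu|\le1-\eps$. Then
\[
\Pr[D]\le O(\delta)+\frac{\poly(\tau/\eps)}{\delta\,\|\ell\|},
\]
and optimizing $\delta$ makes this at most $\eps$ once $\tau$ is a sufficiently small polynomial in $\eps$. The delicate part is keeping all exponents polynomial near the extremes $|\mu|\to1-\eps$, where both $U(\mu)$ and its derivative degenerate. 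Balancing these quantities determines the final $\eps^{-12}$.
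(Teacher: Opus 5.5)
Your proposal is correct and reconstructs essentially the \cite{MORS10} argument, which the paper imports here without proof. The paper's overview in \Cref{sec:MORS-techniques} describes the same query-based test: estimate the bias and $\bW^{=1}$, then compare against $U$. Your soundness argument is also the one the paper adapts in \Cref{lem:our-MORS}: threshold the degree-1 Hermite part at the correct bias, use the pointwise-nonnegative identity $\E[(H-F)(\ell-\theta)]=\|\ell\|(\sqrt{U(\mu)}-\sqrt{\bW^{=1}[F]})$, then apply Gaussian anticoncentration. Your exponent bookkeeping does land on the claimed $O(\eps^{-12})$. Anticoncentration forces $\tau=\Theta(\eps^3)$, and estimating $(\mathrm{Stab}_\rho(F)-\mu^2)/\rho$ with $\rho=\Theta(\eps^3)$ needs both $\mathrm{Stab}_\rho(F)$ and $\mu^2$ to accuracy $\Theta(\eps^6)$, i.e.\ $O(\eps^{-12})$ queries.
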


For sample-based testing, we will use the following result due to Harms \cite{Harms19}:
\begin{theorem}
[Theorem~1.1 of \cite{Harms19}] \label{thm:mainHarms}
There is an algorithm \Harmstester~that, when given error parameter $\epsilon>0$ and access to random labeled examples $(\bx,f(\bx))$ with each $\bx \sim N(0,I_n)$, uses $\tilde{O}(\sqrt{n}/\eps^7)$ samples and has the following guarantee: 
\begin{flushleft}\begin{enumerate}
    \item If $f$ is a halfspace, the algorithm outputs \textsf{accept} with probability at least $0.99$. 
    \item If $\Vol(f^{-1}(1) \ \triangle \ h^{-1}(1))\ge \eps$ for every halfspace $h$, then the algorithm outputs \textsf{reject} with probability at least $0.99$. 
\end{enumerate}\end{flushleft}
\end{theorem}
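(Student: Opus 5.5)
The plan follows the MORS template: estimate the bias and the level-$1$ Hermite weight of $f$ from labeled samples alone, then compare them through the function $U$ described in \Cref{sec:MORS-techniques}. It is convenient to work with the $\pm1$-valued version $F = 2f-1$. Write $\mu = \E[F]$ and
\[
W = \bW^{=1}[F] = \bigl\|\E_{\bx}[F(\bx)\,\bx]\bigr\|_2^2 .
\]
The two structural facts I would use are the following.
\begin{enumerate}
\item If $F$ is a halfspace, then $W = U(\mu)$, where $U(\mu) = 4\varphi(\Phi^{-1}((1+\mu)/2))^2$ is, up to the choice of normalization, the function of \Cref{def:U}.
\item Conversely, the robust extremality statement (Theorem~26 of \cite{MORS10}, equivalently Theorem~7 of \cite{KKMO07}) holds: every $F$ satisfies $W \le U(\mu)$, and if $W \ge U(\mu) - \delta$ then $F$ is $O(\sqrt{\delta}\log(1/\delta))$-close to a halfspace in standard distance.
\end{enumerate}
Inverting the second fact, $\eps$-farness forces $U(\mu) - W \ge \delta(\eps)$ with $\delta(\eps) = \tilde{\Omega}(\eps^2)$. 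The sharpness of this inversion near the edges $|\mu|\to 1$ is one source of the $\eps$-exponent.

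The tester draws labeled pairs $(\bx^{(i)}, F(\bx^{(i)}))$. It estimates $\mu$ by the empirical mean, using $O(1/\delta^2)$ samples. Because $U$ is Lipschitz away from $\mu=\pm1$, this gives $U(\mu)$ to accuracy $\delta/4$; the regime $|\mu| > 1-\eps$ is handled separately, since there $F$ is trivially $\eps$-close to a constant function, which is a halfspace. It estimates $W$ by the U-statistic
\[
\bT = \frac{1}{m(m-1)} \sum_{i \neq j} F(\bx^{(i)})F(\bx^{(j)})\, \bx^{(i)}\cdot \bx^{(j)},
\]
which satisfies $\E[\bT] = W$ exactly. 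The tester accepts if and only if $\bT \ge \hat U - \delta/2$, where $\hat U$ is the estimate of $U(\mu)$.

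The heart of the analysis is the variance bound $\Var[\bT] = O(n/m^2 + W'/m)$. It comes from the standard U-statistic decomposition.
\begin{itemize}
\item The pairs with $\{i,j\}\cap\{k,l\} = \emptyset$ contribute nothing.
\item Pairs sharing one index contribute $O(1/m)\cdot \E_{\bx}\bigl[(v\cdot \bx)^2\bigr]$, where $v = \E[F\bx]$, so this term is at most $\|v\|^2/m \le O(1/m)$.
\item Identical pairs contribute $O(1/m^2)\cdot \E\bigl[(\bx\cdot\by)^2\bigr] = O(n/m^2)$.
\end{itemize}
Hence $m = O(\sqrt{n}/\delta + 1/\delta^2)$ samples make $|\bT - W| \le \delta/4$ with constant probability, and median-of-means boosts this to probability $0.99$. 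With $\delta = \tilde{\Theta}(\eps^{c})$ this gives the claimed $\tilde{O}(\sqrt{n}/\eps^{7})$ bound once the exponents are tracked. Completeness is then immediate from the first fact, and soundness follows from the second fact together with the two estimation guarantees.

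I expect the main obstacle to be the quantitative robust structural theorem rather than the estimation. Extracting an explicit polynomial $\delta(\eps)$ uniformly over all biases $\mu$, including $\mu$ close to $\pm1$ where $U'$ blows up, requires care. I would first try to take the \cite{MORS10} argument verbatim: project $F$ onto the direction of its degree-$1$ Hermite part, and compare $F$ to the best halfspace in that direction via the one-dimensional rearrangement inequality. I would accept a loss of a few powers of $\eps$ there, which is exactly where the $\eps^{7}$ would come from.
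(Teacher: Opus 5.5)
The paper gives no proof of this statement. It is quoted from \cite{Harms19} and used only as a black box to dispose of the case $p>0.1$ in \Cref{sec:rel-err-MORS}.

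Your proposal is a correct, self-contained proof for the Gaussian case, taking the MORS structural theorem as given. It is essentially the strategy of \cite{Harms19} specialized to $N(0,I_n)$: a pairwise $U$-statistic estimate of $\|\E[F(\bx)\bx]\|_2^2$, compared against the halfspace value at the estimated bias. It is also exactly the labeled-sample, standard-distance analogue of the paper's own \HermiteTest.

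The comparison with \HermiteTest{} is instructive. The variance bound \Cref{eq:var-calc-goal} for positive samples needs the level-$k$ inequalities to control $\|A\|_{\mathrm{op}}$ and $\|A\|_F$. With $\pm1$ labels the corresponding matrix is $\E[F(\bx)^2\bx\bx^{\top}]=I_n$, so you get $\Var[\bT]\le 4W/m+2n/(m(m-1))$ with no hypercontractivity at all.

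What citing Harms buys is generality. His result covers any unknown rotation-invariant distribution, where the radial profile must also be dealt with and the structural theorem is weaker; that is presumably the source of his $\eps^{-7}$. What your route buys is a simpler argument and a better $\eps$-dependence for the Gaussian.

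Two of your side remarks are off, though neither breaks the argument.

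First, $U$ has no derivative blow-up at the edges. With $q=(1+\mu)/2$ and \Cref{fact:moo-deng}, $\frac{d}{d\mu}U(\mu)=-4\varphi(\Phi^{-1}(q))\,\Phi^{-1}(q)$. This has absolute value at most $4\varphi(1)<1$ and tends to $0$ as $\mu\to\pm1$; it is $\sqrt{U}$ whose derivative blows up. So $U$ is globally $1$-Lipschitz, and the special case $|\mu|>1-\eps$ is unnecessary.

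Second, your exponent bookkeeping is inconsistent. Since $m=\Theta(\sqrt n/\delta+1/\delta^2)$, a lossy $\delta\approx\eps^{7}$ cannot be ``where the $\eps^7$ comes from.'' It would create a $1/\delta^2=\eps^{-14}$ term, which is not $\tilde O(\sqrt n/\eps^7)$ for small $n$. All you actually need is $\delta=\tilde\Omega(\eps^{3.5})$.

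In fact nothing is lost near the edges if you do the structural step as in \Cref{lem:our-MORS}. There, the mass of the projection near the threshold $\theta$ is bounded by its local density $\varphi(\theta)$, and a constant function is used when the minority mass is $O(\sqrt{\delta})$. This gives distance $O(\sqrt{U(\mu)-W})$ uniformly in $\mu$, so $\delta=\Theta(\eps^2)$ and $m=O(\sqrt n/\eps^2+1/\eps^4)$.

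Even the crude version, which bounds that density by $\varphi(0)$, yields $\delta=\Omega(\eps^3)$. This holds because $\sqrt{U(\mu)}=\Omega(\eps)$ whenever both values of $F$ have mass at least $\eps$.

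Either way the stated $\tilde O(\sqrt n/\eps^7)$ bound follows a fortiori. You should present your result as proving a stronger bound, rather than trying to reverse-engineer the exponent $7$.
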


%% file: sections/Broccoli-tester.tex

\section{Testing with known volume via Gaussian surface area: \\Proof of \Cref{thm:Gaussian1}}
\label{sec:broccoli}

In this section we prove \Cref{thm:Gaussian1}. 
Before we start, note that we can make a simplifying assumption, which is that $p :=\Vol(f)$ is at most $0.1$. This is because if $p>0.1$, we can just run the testing algorithm from \cite{MORS10} (\Cref{thm:mainMORS}) with error parameter $\epsilon / 10$; as $p>0.1$, standard-model $\eps/10$-testing implies $\eps$-relative-error testing. So for the rest of this section, we assume that $p \leq 0.1$. 

\subsection{Gaussian surface area}
\label{subsec:gsa}

We begin by recalling the notion of surface area in the Gaussian space. 
\begin{definition}~\label{def:surface-area}
Let $A \subseteq \mathbb{R}^n$. The surface area of $A$ (under the standard Gaussian measure $\phi_n$) is given by 
\[
\mathsf{surf}(A) := \lim_{\delta \rightarrow 0^+} \frac{\Vol(A_\delta) - \Vol(A)}{\delta}, 
\]
where $A_\delta :=\{x : d(x,A) \le \delta\}$. For sets $A$ with a smooth boundary, $\mathsf{surf}(A)$ is equivalent to 
\[
\int_{x \in \partial A} \phi_n(x) d \sigma(x), 
\]
where $d\sigma(x)$ is the standard surface area element in $\mathbb{R}^n$ and $\partial A$ denotes the boundary of the set $A$.
\end{definition}

There is a close connection between the notion of surface area and noise sensitivity, as shown by Ledoux~\cite{Ledoux:94}: 
\begin{theorem}\label{thm:noise-sensitivity-Ledoux}
For any $t \ge 0$ and any set $A \subseteq \mathbb{R}^n$ with ${\cal C}^1$ boundary,
we have 
\[
\NS_t(A) \le \frac{2 \sqrt{t}}{\sqrt{\pi}} \cdot \mathsf{surf}(A). 
\]
\end{theorem}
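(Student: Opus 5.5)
The plan is to follow Ledoux's semigroup argument. Throughout I use the Ornstein--Uhlenbeck parametrization of noise sensitivity:
\[
\NS_t(A)=\Pr[\mathbf 1_A(\bx)\neq \mathbf 1_A(\by)],\qquad \by=e^{-t}\bx+\sqrt{1-e^{-2t}}\,\bg,
\]
with $\bx,\bg\sim N(0,I_n)$ independent. If the paper uses a different convention, only the final change of variables has to be adjusted. Write $P_s$ for the OU semigroup and $L=\Delta-x\cdot\nabla$ for its generator.

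\textbf{Step 1: an algebraic identity for $\NS_t$.} With $f=\mathbf 1_A$ we have $\NS_t(A)=2\bigl(\E[f]-\E[fP_tf]\bigr)$. Put $g=1-2f$, which takes values in $\{-1,1\}$. Since $\E[P_tf]=\E[f]$ and $f^2=f$, expanding gives
\[
\E[g(f-P_tf)]=-2\bigl(\E f-\E fP_tf\bigr)=-\NS_t(A).
\]
So it suffices to bound $|\E[g(f-P_tf)]|$ for a test function $g$ with $\|g\|_\infty\le 1$.

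\textbf{Step 2: the semigroup interpolation.} I first replace $f$ by a smooth approximation $f_k$ with $f_k\to\mathbf 1_A$ in $L^1(\gamma)$ and $\E|\nabla f_k|\to\mathsf{surf}(A)$. For sets with $\mathcal C^1$ boundary this follows from mollifying and the coarea formula. Then
\[
\E[g(f_k-P_tf_k)]=-\int_0^t\E[g\,LP_sf_k]\,ds=\int_0^t\E[\nabla P_sg\cdot\nabla f_k]\,ds,
\]
using symmetry of $P_s$ and Gaussian integration by parts. This yields
\[
|\E[g(f_k-P_tf_k)]|\le \int_0^t\|\nabla P_sg\|_\infty\,ds\cdot \E|\nabla f_k|.
\]

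\textbf{Step 3: the reverse-Poincar\'e gradient bound (the key step).} Differentiating Mehler's formula gives
\[
\nabla P_sg(x)=\frac{e^{-s}}{\sqrt{1-e^{-2s}}}\,\E\bigl[g(e^{-s}x+\sqrt{1-e^{-2s}}\,\bg)\,\bg\bigr].
\]
For a unit vector $u$ and $|g|\le1$ we have $|\E[g\,(\bg\cdot u)]|\le\E|\bg\cdot u|=\sqrt{2/\pi}$. Hence
\[
\|\nabla P_sg\|_\infty\le\sqrt{2/\pi}\,\frac{e^{-s}}{\sqrt{1-e^{-2s}}}.
\]
Substituting $u=e^{-s}$ gives $\int_0^t \frac{e^{-s}}{\sqrt{1-e^{-2s}}}ds=\arccos(e^{-t})$.

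\textbf{Step 4: the elementary inequality.} I need $\arccos(e^{-t})\le\sqrt{2t}$.
\begin{itemize}
\item For $t\ge \pi^2/8$ this holds because $\arccos(e^{-t})\le\pi/2\le\sqrt{2t}$.
\item For $t<\pi^2/8$ it is equivalent to $\cos\sqrt{2t}\le e^{-t}$. This follows from $\cos\sqrt{2t}\le 1-t+t^2/6$ together with $e^{-t}\ge 1-t+t^2/2-t^3/6\ge 1-t+t^2/6$, valid for $t\le 2$.
\end{itemize}

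\textbf{Step 5: conclusion.} Combining Steps 1--4,
\[
\NS_t(A)=\lim_k|\E[g(f_k-P_tf_k)]|\le\sqrt{2/\pi}\cdot\sqrt{2t}\cdot\mathsf{surf}(A)=\frac{2\sqrt t}{\sqrt\pi}\,\mathsf{surf}(A).
\]
Passing to the limit on the left-hand side is justified by $L^1$ convergence and the boundedness of $g$.

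I expect the main technical obstacle to be Step~2's approximation. One must produce smooth $f_k$ whose Gaussian total variation $\E|\nabla f_k|$ converges to the Minkowski-content definition of $\mathsf{surf}(A)$ in Definition~\ref{def:surface-area}. The $\mathcal C^1$ boundary hypothesis is what makes this work: take $f_k$ to be a piecewise-linear function of the signed distance to $\partial A$ and apply the coarea formula. The analytic core, Step~3, is routine once Mehler's formula is written down.
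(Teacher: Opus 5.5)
Your proof is correct. The paper gives no proof of its own here; it simply cites \cite{Ledoux:94}, and what you have written is essentially Ledoux's semigroup duality argument. It in fact yields the sharp intermediate bound $\NS_t(A)\le\sqrt{2/\pi}\,\arccos(e^{-t})\,\mathsf{surf}(A)$, which is tight for origin-centered halfspaces, and your elementary inequality $\arccos(e^{-t})\le\sqrt{2t}$ correctly relaxes it to the stated constant.

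The approximation step you flag as the main obstacle is easier than you expect, because you only need $\limsup_k\E|\nabla f_k|\le\mathsf{surf}(A)$, not convergence. Take the bounded Lipschitz cutoff $f_\delta(x)=\max\{0,1-d(x,A)/\delta\}$. It satisfies $|\nabla f_\delta|\le 1/\delta$ a.e.\ on $A_\delta\setminus\overline{A}$ and $\nabla f_\delta=0$ a.e.\ elsewhere, so $\E|\nabla f_\delta|\le(\Vol(A_\delta)-\Vol(A))/\delta$ straight from \Cref{def:surface-area}. Also $f_\delta\to\mathbf{1}_{\overline{A}}=\mathbf{1}_A$ a.e.\ whenever $\mathsf{surf}(A)<\infty$. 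Your Step~2 identity holds for bounded Lipschitz $f$, since $P_sg$ is smooth with bounded derivatives for $s>0$. So you need neither mollification nor the coarea formula, and not even the $\mathcal{C}^1$ hypothesis.
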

A near-converse of this result was shown by Neeman~\cite{Neeman14}:
every set $A$ as above is close to a set $B$ whose surface area can be bounded in terms of the noise sensitivity of $A$.
\begin{theorem} [Theorem 1.2, Theorem 2.1 of \cite{Neeman14}] 
\label{thm:Neeman-reverse-Ledoux}
    Let $A \subseteq  \mathbb{R}^n$ be a set with ${\cal C}^1$ boundary and let $t, \xi >0$.  
     Then, there exists a set $B \subseteq \mathbb{R}^n$  such that \begin{enumerate}
        \item $\displaystyle\Vol(A \ \triangle \ B) \le \frac{\NS_t(A)}{\xi}$;\quad\text{and} 
        \item $\displaystyle \mathsf{surf}(B)  \leq  \sqrt{\frac{\pi}{2}} \cdot \bigg(1 + O\bigg(\frac{\xi}{\sqrt{\log(1/\xi)}}\bigg)\bigg)\cdot   \frac{1}{\sqrt{e^{2t}-1}} \cdot \NS_t(A)$.  
    \end{enumerate}
\end{theorem}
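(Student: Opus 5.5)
The plan is to take $B$ to be a super-level set of the Ornstein--Uhlenbeck smoothing of $1_A$, and to choose the threshold by averaging. Write $f=1_A$ and $g=P_t f$, where $P_t$ is the Ornstein--Uhlenbeck semigroup, so that $g(x)=\Pr[\by\in A]$ for $\by$ an $e^{-t}$-correlated copy of $x$. Since $f$ is $\{0,1\}$-valued,
\[
\NS_t(A)=\E\bigl[f(1-g)+(1-f)g\bigr]=\E\bigl[|f-g|\bigr].
\]
For a threshold $s\in(0,1)$ set $B_s:=\{x: g(x)>s\}$. The proof has three steps, carried out in this order.

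\textbf{Step 1 (volume).} If $x\in A\setminus B_s$ then $|f-g|(x)\ge 1-s$, and if $x\in B_s\setminus A$ then $|f-g|(x)>s$. Markov's inequality therefore gives $\Vol(A\triangle B_s)\le \NS_t(A)/\min(s,1-s)$. I would restrict $s$ to a window such as $[\xi,1-\xi]$, or more likely a window around $1/2$ of width governed by $\xi$, and pick $s$ from it at random. Conclusion 1 then holds, up to constants that can be absorbed by reparametrizing $\xi$, for every $s$ in the window, and hence for a random one.

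\textbf{Step 2 (surface area on average).} By the coarea formula, $\int_0^1 \mathsf{surf}(B_s)\,w(s)\,ds=\E\bigl[|\nabla g|\,w(g)\bigr]$ for any weight $w$; here $g$ is smooth, so the $B_s$ have smooth boundaries for a.e.\ $s$. To bound $|\nabla g|$ I would use the Bakry--Ledoux pointwise gradient estimate for the Ornstein--Uhlenbeck semigroup, namely that $\Phi^{-1}\circ P_t f$ is $\frac{e^{-t}}{\sqrt{1-e^{-2t}}}=\frac{1}{\sqrt{e^{2t}-1}}$-Lipschitz. This gives
\[
|\nabla g|\le \frac{1}{\sqrt{e^{2t}-1}}\;\varphi\bigl(\Phi^{-1}(g)\bigr).
\]
This is exactly where the factor $1/\sqrt{e^{2t}-1}$ in conclusion 2 comes from. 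Averaging over $s$ in the window, some $s$ satisfies
\[
\mathsf{surf}(B_s)\le \frac{1}{\sqrt{e^{2t}-1}}\cdot\frac{\E\bigl[I(g)\,1_{\{g\in\text{window}\}}\bigr]}{|\text{window}|},\qquad\text{where } I:=\varphi\circ\Phi^{-1}.
\]

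\textbf{Step 3 (comparison with $\NS_t$).} It remains to bound $\E[I(g)\,1_{\{g\in\text{window}\}}]$ by $\bigl(\sqrt{\pi/2}+o(1)\bigr)\,|\text{window}|\cdot\NS_t(A)$. The available inequality is pointwise: on $\{g\in[\xi,1-\xi]\}$ we have $|f-g|\ge\min(g,1-g)\ge\xi$, so $\NS_t(A)\ge \E[\min(g,1-g)]$. One also has $I(u)\le\sqrt{2/\pi}\cdot\sqrt{\cdot}$-type bounds; in particular $I(u)\le \varphi(0)=1/\sqrt{2\pi}$ for $u$ near $1/2$, and $I(u)\approx u\sqrt{2\ln(1/u)}$ near $0$. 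I expect this step to be the main obstacle. The pointwise comparison $I(g)\lesssim \min(g,1-g)$ loses a factor of order $\sqrt{\log(1/\xi)}$ at the ends of the window, whereas the claimed constant is $\sqrt{\pi/2}\,(1+O(\xi/\sqrt{\log(1/\xi)}))$.

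What I would try first is to choose a weight $w(s)$, rather than a uniform window, that is tuned so that $\int w$ is normalized and the ratio $w(u)\,I(u)/\min(u,1-u)$ is as flat as possible. I would then use the sharper identity $\NS_t(A)=\E|f-g|$ together with the fact that $f\in\{0,1\}$ to gain the factor $\frac{1}{2}$ relating $\E|f-g|$ to level-set masses. The exact constant $\sqrt{\pi/2}$ should emerge by matching against the case where $A$ is a halfspace, which is tight in Ledoux's inequality (\Cref{thm:noise-sensitivity-Ledoux}), and the error term should come from the tail of $w$ near the endpoints $\xi$ and $1-\xi$. If the sharp constant proves elusive, the same scheme still yields conclusion 2 with an absolute constant in place of $\sqrt{\pi/2}$, and I would then refine the weight.
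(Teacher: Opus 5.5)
The paper does not prove this theorem; it imports it from Neeman. Your Steps~1 and~2 are the right framework, and as far as I know it is the one Neeman's argument uses: super-level sets $B_s=\{P_t\mathbf 1_A>s\}$, Markov for the volume, and weighted coarea combined with the Bakry--Ledoux bound $|\nabla P_t\mathbf 1_A|\le I(P_t\mathbf 1_A)/\sqrt{e^{2t}-1}$.

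\textbf{The gap.} Your proposal does not prove conclusion~2. Step~3 is the only place the constant $\sqrt{\pi/2}$ and the error $O(\xi/\sqrt{\log(1/\xi)})$ can come from, and you leave it as a plan. Your fallback with an unspecified absolute constant is not a substitute. The theorem is useful precisely because it inverts Ledoux's inequality (\Cref{thm:noise-sensitivity-Ledoux}) up to a $1+o(1)$ factor. In the soundness proof of \GSATest\ and in \Cref{corr:bounds-on-stability-for-halfspaces}, any constant larger than $\sqrt{\pi/2}$ leaves an isoperimetric deficit of order $\varphi(\Phi^{-1}(p))$, and those arguments fail.

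Two of your tentative directions are also wrong. A narrow window around $1/2$ cannot work: if $P_t\mathbf 1_A\approx 1/2$ on most of space (e.g.\ a fine periodic stripe pattern), then $\E[I(g)\mathbf 1\{g\in W\}]/|W|$ becomes arbitrarily large compared to $\NS_t(A)$ as the window shrinks. And no extra factor $\tfrac12$ is needed; the only inequality required is $|\mathbf 1_A-g|\ge\min(g,1-g)$, which you already have.

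\textbf{The missing idea.} Take the weight that makes your ratio exactly flat,
$w(u)=\frac{\min(u,1-u)}{I(u)}\,\mathbf 1_{[\xi,1-\xi]}(u)$. Then $w(g)\,I(g)\le\min(g,1-g)\le|\mathbf 1_A-g|$ pointwise, so
\[
\int_0^1 w(s)\,\mathsf{surf}(B_s)\,ds=\E\bigl[|\nabla g|\,w(g)\bigr]\le\frac{\E\bigl[|\mathbf 1_A-g|\bigr]}{\sqrt{e^{2t}-1}}=\frac{\NS_t(A)}{\sqrt{e^{2t}-1}}.
\]
Now compute $\int_0^1 w$.
\begin{itemize}
\item Substituting $u=\Phi(v)$ (so $du=\varphi(v)\,dv$ and $I(u)=\varphi(v)$) gives $\int_0^1\frac{\min(u,1-u)}{I(u)}\,du=\int_{\R}\Phi(-|v|)\,dv=2\varphi(0)=\sqrt{2/\pi}$.
\item Let $R=\Phi^{-1}(1-\xi)$. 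The two truncated tails cost $2\int_R^\infty\Phi(-v)\,dv=2\bigl(\varphi(R)-R\,\Phi(-R)\bigr)\le 2\varphi(R)/R^2=O\bigl(\xi/\sqrt{\log(1/\xi)}\bigr)$, using the lower bound in \Cref{prop:gaussian-tails}.
\end{itemize}
Hence $\int_0^1 w=\sqrt{2/\pi}\,\bigl(1-O(\xi/\sqrt{\log(1/\xi)})\bigr)$. Averaging yields a regular value $s\in[\xi,1-\xi]$ with $\mathsf{surf}(B_s)\le\sqrt{\pi/2}\,\bigl(1+O(\xi/\sqrt{\log(1/\xi)})\bigr)\NS_t(A)/\sqrt{e^{2t}-1}$. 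Your Step~1 gives $\Vol(A\triangle B_s)\le\NS_t(A)/\xi$ for every such $s$, with no reparametrization of $\xi$.

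The constant is sharp because for halfspaces $\mathbf 1_A\approx\mathbf 1\{g>1/2\}$. So there both $|\mathbf 1_A-g|=\min(g,1-g)$ and the Bakry--Ledoux bound are (nearly) equalities.
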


We recall (see e.g.~Chapter~5 of~\cite{odonnell-book}) that the function 
\begin{equation} \label{eq:I}
I(p) := \varphi(\Phi^{-1}(p))
\end{equation}
is sometimes known as the \emph{Gaussian isoperimetric function} (cf.~\Cref{fact:halfspace-sa} below).  This function will be used in various ways throughout our arguments, including in the proof of the following simple fact:

\begin{fact}~\label{fact:important-Gcdf}
Let $0 < q \leq p \leq 1/2$ and $p / q \le (1+\epsilon)$. Then, 
\[
\varphi(\Phi^{-1}(p)) \ge \varphi(\Phi^{-1}(q)) \ge \frac{1}{(1+\epsilon)} \varphi(\Phi^{-1}(p)).
\]
\end{fact}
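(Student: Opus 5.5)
The first inequality is pure monotonicity. Since $\Phi^{-1}$ is increasing and $q \le p \le 1/2$, we have $\Phi^{-1}(q) \le \Phi^{-1}(p) \le 0$. Because $\varphi$ is increasing on $(-\infty,0]$, this gives $\varphi(\Phi^{-1}(q)) \le \varphi(\Phi^{-1}(p))$. This is the left inequality. (The statement displays it as $\varphi(\Phi^{-1}(p)) \ge \varphi(\Phi^{-1}(q))$, which is the same thing.)

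For the second inequality we work with the isoperimetric function $I(p) = \varphi(\Phi^{-1}(p))$ from \eqref{eq:I}. The plan is to show that $I(p)/p$ is nonincreasing on $(0,1/2]$. Then
\[
\frac{I(q)}{q} \ge \frac{I(p)}{p},
\qquad\text{so}\qquad
I(q) \ge \frac{q}{p}\, I(p) \ge \frac{1}{1+\epsilon}\, I(p),
\]
using the hypothesis $p/q \le 1+\epsilon$ in the last step.

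To get the monotonicity of $I(p)/p$, substitute $r = \Phi^{-1}(p) \le 0$. Then $I(p)/p = \varphi(r)/\Phi(r)$, and $r$ is increasing in $p$. So it suffices to show that $r \mapsto \varphi(r)/\Phi(r)$ is nonincreasing on $(-\infty,0]$. Writing $s = -r \ge 0$, this ratio is $\varphi(s)/\Phi(-s)$, the inverse Mills ratio. That ratio is nondecreasing in $s$, so $\varphi(r)/\Phi(r)$ is nonincreasing in $r$, which is what we want.

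The only real step is this Mills-ratio monotonicity; it is the one place needing an argument. Let $m(s) = \varphi(s)/\Phi(-s)$. Since $\varphi'(s) = -s\varphi(s)$ and $\frac{d}{ds}\Phi(-s) = -\varphi(s)$,
\[
m'(s) = m(s)\bigl(m(s) - s\bigr).
\]
So we need $m(s) \ge s$, i.e.\ $\Phi(-s) \le \varphi(s)/s$ for $s > 0$. This follows from the upper bound in \Cref{prop:gaussian-tails}. More directly, it follows from $\int_s^\infty \varphi(x)\,dx \le \int_s^\infty (x/s)\varphi(x)\,dx = \varphi(s)/s$. At $s = 0$ we have $m'(0) = m(0)^2 > 0$, so the inequality is trivial there. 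This completes the plan.
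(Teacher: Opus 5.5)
Your proof is correct, but the second inequality comes from a different key step than in the paper.

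\textbf{The paper's route.} It also reduces to $I(q)\ge \frac{q}{p}I(p)$, and gets this from concavity of $I(p)=\varphi(\Phi^{-1}(p))$ (cited from O'Donnell; it is also immediate from $I'(p)=-\Phi^{-1}(p)$, which is decreasing). It combines concavity with the implicit boundary value $I(0^+)=0$: the chord from the origin to $(p,I(p))$ lies below the graph.

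\textbf{Your route.} You show directly that $I(p)/p$ is nonincreasing. You rewrite it as the inverse Mills ratio $m(s)=\varphi(s)/\Phi(-s)$ at $s=-\Phi^{-1}(p)$, then use $m'=m(m-s)$ and the elementary tail bound $\Phi(-s)\le\varphi(s)/s$.

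\textbf{Comparison.} Both routes rest on the same inequality. The condition $\frac{d}{dp}\bigl(I(p)/p\bigr)\le 0$ is equivalent to $\varphi(r)+r\Phi(r)\ge 0$ at $r=\Phi^{-1}(p)$. Concavity plus $I(0^+)=0$ encodes exactly this, via the tangent line at $p$ evaluated as $x\to 0^+$. The paper's version is shorter and more conceptual once concavity is granted. Yours is fully self-contained and needs no boundary limit. A side benefit is that your argument gives a complete proof of \Cref{claim:facts-about-Ipp}, the monotonicity of $R(p)=\varphi(\Phi^{-1}(p))/p$. The paper's proof of that claim only rewrites $R(p)$ as $\varphi(s)/\int_{-\infty}^{s}\varphi$ and stops.
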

\begin{proof}
The function $I(\cdot)$ is easily seen to be monotonically non-decreasing in the interval $(0,1/2]$, which gives the first inequality.
Moreover, it is known that the function $I(p)$ is concave (see Exercise~5.24~\cite{odonnell-book}). This implies that 
\[
I(q) \ge \frac{q}{p} I(p)\ge \frac{1}{1+\epsilon} I(p). 
\qedhere
\]
\end{proof}

We will also need some basic facts about the function $\varphi(\Phi^{-1}(p))/p$ which we state here. 
\begin{claim}~\label{claim:facts-about-Ipp}
The function $R(p):= \frac{\varphi(\Phi^{-1}(p))}{p}$ is decreasing in the interval $(0,1/2)$.
\end{claim}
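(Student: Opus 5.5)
Substituting $t = \Phi^{-1}(p)$ reduces the claim to a statement about the Mills ratio. As $p$ ranges over $(0,1/2)$, $t$ ranges over $(-\infty, 0)$ and is strictly increasing in $p$. Moreover $R(p) = \varphi(t)/\Phi(t)$. Hence it suffices to show that $g(t) := \varphi(t)/\Phi(t)$ is strictly decreasing on $(-\infty,0)$. It is in fact decreasing on all of $\R$, but we only need the negative half-line.

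Differentiating and using $\varphi'(t) = -t\varphi(t)$ and $\Phi'(t) = \varphi(t)$ gives
\[
g'(t) = \frac{-t\varphi(t)\Phi(t) - \varphi(t)^2}{\Phi(t)^2} = -\frac{\varphi(t)}{\Phi(t)^2}\bigl(t\Phi(t) + \varphi(t)\bigr).
\]
So the claim reduces to showing $h(t) := t\Phi(t) + \varphi(t) > 0$ for $t<0$.

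This is the only step with content, and it can be done in two ways.

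\emph{Via the earlier tail bound.} Write $s = -t > 0$. Then $h(t) = \varphi(s) - s\Phi(-s)$. The upper bound in \Cref{prop:gaussian-tails} gives $\Phi(-s) \le (1/s - 1/s^3 + 3/s^5)\varphi(s)$, which yields $h(t) > 0$ whenever $1/s^2 > 3/s^4$, i.e.\ for $s > \sqrt{3}$. For small $s$ this bound is too weak, so it does not cover the whole range.

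\emph{Directly, covering all $t$.} First, $h'(t) = \Phi(t) + t\varphi(t) - t\varphi(t) = \Phi(t) > 0$, so $h$ is increasing. Second, $h(t) \to 0$ as $t \to -\infty$, since $\varphi(t)\to 0$ and $|t|\Phi(t) \le \varphi(t) \to 0$ by the standard Mills bound (the leading term of \Cref{prop:gaussian-tails}). An increasing function with limit $0$ at $-\infty$ is strictly positive everywhere; in particular $h(t) > 0$ for all $t$.

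Equivalently, $h(t) = \int_{-\infty}^t \Phi(u)\,du > 0$, which is even quicker and is the version I would write.

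Since $h>0$, we get $g'(t) < 0$, so $g$ is strictly decreasing. Composing with the increasing map $p \mapsto \Phi^{-1}(p)$ shows that $R$ is decreasing on $(0,1/2)$.

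I do not expect any real obstacle here. The only point needing care is identifying the limit of $h$ at $-\infty$, or equivalently using the integral representation $h(t) = \int_{-\infty}^t \Phi(u)\,du$ (integration by parts, with the boundary term at $-\infty$ vanishing by the Gaussian tail bound).
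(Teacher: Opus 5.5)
Your proof is correct and follows the paper's route. The paper's proof consists only of the substitution $s=\Phi^{-1}(p)\in(-\infty,0)$, which gives $R(p)=\varphi(s)/\Phi(s)$, and it leaves the monotonicity of this ratio implicit. Your computation of $g'$, with $h(t)=t\Phi(t)+\varphi(t)=\int_{-\infty}^t\Phi(u)\,du>0$, supplies exactly that missing step. (An equally short way to finish is to note that $\Phi(s)/\varphi(s)=\int_0^\infty e^{su-u^2/2}\,du$, which is visibly increasing in $s$.)
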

\begin{proof}
Let $s \in (-\infty,0)$ be chosen so that $\int_{-\infty}^s \varphi(t) dt = p$. Then, note that 
\[
R(p) = \frac{\varphi(s)}{\int_{-\infty}^s \varphi(t) dt}.
\qedhere
\]
\end{proof}

We also need the following simple fact characterizing the surface area of any halfspace in the Gaussian space in terms of its volume; the proof of the final equality is a calculation using standard bounds on the Gaussian pdf and cdf, in particular \Cref{prop:gaussian-tails}.
\begin{fact}~\label{fact:halfspace-sa}
Let $f: \mathbb{R}^n \rightarrow \{0,1\}$ be a halfspace and let $\Vol(f^{-1}(1))=p$. Then
\[
\mathsf{surf}(f^{-1}(1))  =  \varphi(\Phi^{-1}(p)) = \Theta\left(p \sqrt{\ln(1/\min\{p, 1-p\})}\right).
\]
\end{fact}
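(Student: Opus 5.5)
The statement to prove is \Cref{fact:halfspace-sa}. The plan has two parts: first reduce to a one-dimensional halfspace and compute its surface area directly, then estimate $\varphi(\Phi^{-1}(p))$ using the tail bounds of \Cref{prop:gaussian-tails}.

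\textbf{Step 1: the exact formula.} By rotation invariance of $N(0,I_n)$, I assume without loss of generality that
\[
f^{-1}(1)=\{x : x_1 \le s\},\qquad \text{equivalently } \{x: x_1\ge -s\},
\]
with $\Phi(s)=p$, so $s=\Phi^{-1}(p)$. The set $A_\delta$ is then $\{x_1\le s+\delta\}$, which gives
\[
\Vol(A_\delta)-\Vol(A)=\Phi(s+\delta)-\Phi(s).
\]
Dividing by $\delta$ and letting $\delta\to 0^+$ yields $\mathsf{surf}(A)=\Phi'(s)=\varphi(s)=\varphi(\Phi^{-1}(p))$. The same value comes from the boundary integral: the boundary is the hyperplane $\{x_1=s\}$, and integrating $\varphi_n$ over it factors as $\varphi(s)$ times the integral of $\varphi_{n-1}$ over $\R^{n-1}$, which equals $1$. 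Degenerate halfspaces (constant functions, $p\in\{0,1\}$) are excluded, or are handled by the convention that both sides vanish.

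\textbf{Step 2: reduction to small $p$.} Since $\varphi$ is even, $\varphi(\Phi^{-1}(p))=\varphi(\Phi^{-1}(1-p))$, so I may assume $p\le 1/2$ and replace $\min\{p,1-p\}$ by $p$. For $1-p\le 1/2$ note that $p\asymp 1$, so the claimed bound reads $\Theta(\sqrt{\ln(1/(1-p))})$ and follows from the small-$p$ case applied to $1-p$.

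\textbf{Step 3: the asymptotic estimate.} Put $r=-\Phi^{-1}(p)\ge 0$, so that $p=\Phi(-r)$.
\begin{itemize}
\item \emph{Bounded $p$, say $p\in[c,1/2]$.} Here $r$ ranges over a compact interval, so $\varphi(r)$ is bounded above and below by constants. Likewise $p\sqrt{\ln(1/p)}\asymp 1$ for $p$ bounded away from $0$ and $1$. The bound therefore holds trivially on this range.
\item \emph{Small $p$, with $r\ge 2$.} By \Cref{prop:gaussian-tails}, $p=\Phi(-r)=\Theta(\varphi(r)/r)$, so $\varphi(r)=\Theta(p\,r)$. It remains to show $r=\Theta(\sqrt{\ln(1/p)})$. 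By \Cref{fact:Gaussian-cdf-estimate}, $p=\Theta(e^{-r^2/2}/r)$, hence $\ln(1/p)=r^2/2+\ln r+O(1)$. For $r\ge 2$ this is $\Theta(r^2)$: the upper bound uses $\ln r\le r^2$, and the lower bound follows because $r^2/2$ dominates the additive $O(1)$ term, after taking $r$ large enough and absorbing the remaining bounded range into the previous case. This gives $\varphi(\Phi^{-1}(p))=\Theta(p\sqrt{\ln(1/p)})$.
\end{itemize}

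The only mildly delicate point is Step 3. One has to splice the regimes with uniform constants, in particular making sure that the lower-order $O(1)$ and $\ln r$ terms do not spoil $\ln(1/p)\asymp r^2$ near the threshold. This is handled by choosing the cutoff between the cases large enough that $r^2/2$ dominates, and treating everything below the cutoff by the compactness argument.
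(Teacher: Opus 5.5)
Your Steps 1 and 3 are correct. For $p\le 1/2$ they carry out exactly the calculation the paper has in mind: the paper takes $\mathsf{surf}(f^{-1}(1))=\varphi(\Phi^{-1}(p))$ as standard and says the asymptotic is a computation with \Cref{prop:gaussian-tails}.

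The error is in Step 2. Write $q:=1-p\le 1/2$. Your small-$p$ case gives $\varphi(\Phi^{-1}(p))=\varphi(\Phi^{-1}(q))=\Theta\big(q\sqrt{\ln(1/q)}\big)$. But the target for $p\ge 1/2$ is $\Theta\big(p\sqrt{\ln(1/q)}\big)=\Theta\big(\sqrt{\ln(1/q)}\big)$. The two differ by the factor $q$, which tends to $0$ as $p\to 1$. So the claim that this case ``follows from the small-$p$ case applied to $1-p$'' is false, and so is the preceding ``without loss of generality $p\le 1/2$'': the right-hand side is not symmetric under $p\leftrightarrow 1-p$, because of the prefactor $p$.

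No other argument can repair this, because the statement as literally written fails near $p=1$. The ratio of left side to right side behaves like $\sqrt{2}\,(1-p)\to 0$. For example, with $1-p=10^{-10}$ the left side is below $10^{-9}$, while $p\sqrt{\ln(1/(1-p))}\approx 4.8$. Likewise, at $p=1$ the right-hand side is not $0$, so your ``both sides vanish'' convention for degenerate halfspaces does not hold there.

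What your argument, combined with the symmetry $\varphi(\Phi^{-1}(p))=\varphi(\Phi^{-1}(1-p))$, actually proves is the corrected statement
\[
\varphi(\Phi^{-1}(p))=\Theta\big(\min\{p,1-p\}\sqrt{\ln(1/\min\{p,1-p\})}\big)\quad\text{for all }p\in(0,1).
\]
This is presumably what was intended. The same slip appears in \Cref{fact:U-asymptotics}, whose correct form has prefactor $\min\{p,1-p\}^2$ rather than $p^2$.

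Every use of this fact in the paper has $p\le 1/2$, and mostly $p\le 0.1$. There $\min\{p,1-p\}=p$, the literal statement is true, and your Steps 1 and 3 fully establish it. You should therefore either restrict the claim to $p\le 1/2$ or state and prove the corrected two-sided form. Do not assert a reduction that would ``prove'' a false bound.
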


\subsection{Other basic algorithmic and structural ingredients}
\label{subsec:cherry-coke-zero}

\noindent {\bf Gaussian noise sensitivity.} For any $x \in \mathbb{R}^n$ and any $t \ge 0$, we let $N_t(x)$ denote the distribution over $\mathbb{R}^n$ defined by $e^{-t} x + \sqrt{1-e^{-2t}} \by$ where $\by\sim N(0,I_n)$.
Further, for any function $f: \mathbb{R}^n \rightarrow \mathbb{R}$ and $t \ge 0$, we let $P_t f(x) = \mathbf{E}_{\by \sim N_t(x)}[f(\by)]$.

\begin{definition}~\label{def:noise-sensitivity}
For any Boolean function $f: \mathbb{R}^n \rightarrow \{0,1\}$ and $t \ge 0$, we define
$\NS_t(f)$ (read as the ``noise sensitivity" of $f$) as
\[
\NS_t(f) = \Prx_{\bx \sim N(0,I_n), \by \sim N_t(\bx)} [f(\bx) \not  = f(\by)]. 
\]
For a set $A \subseteq \mathbb{R}^n$, we define $\NS_t(A)$ to be  $\NS_t(\mathbf{1}_A)$ where $\mathbf{1}_A(\cdot)$ is the indicator function of $A$. 
\end{definition}

We observe that using calls to $\SAMP(f)$ and $\MQ(f)$, we can efficiently  estimate the (normalized) noise sensitivity of $f$:
\begin{lemma} [Estimating the normalized noise sensitivity]~\label{lem:estimate-sensitivity}
For noise parameter $t>0$ and error parameter $\kappa>0$, there is an algorithm \EstSense~which for any 
 $f: \mathbb{R}^n \rightarrow \{0,1\}$, given query and sample access to $f$, 
 outputs a $\pm \kappa$
 accurate additive estimate to $\NS_t(f) / \Vol[f]$ with confidence $0.99$ using $\Theta(\kappa^{-2} )$ samples from $\SAMP(f)$ and the same number of queries to $\MQ(f)$. 
\end{lemma}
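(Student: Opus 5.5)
The plan is to write the normalized noise sensitivity as a probability under a distribution we can sample from with the two oracles, and then apply a Chernoff bound. By symmetry of the Ornstein--Uhlenbeck noise, the pair $(\bx,\by)$ with $\bx \sim N(0,I_n)$ and $\by \sim N_t(\bx)$ is exchangeable. Hence
\[
\NS_t(f) = \Pr[f(\bx)=1, f(\by)=0] + \Pr[f(\bx)=0,f(\by)=1] = 2\Pr[f(\bx)=1,f(\by)=0].
\]
Conditioning on $f(\bx)=1$, an event of probability $\Vol(f)$, this gives
\[
\frac{\NS_t(f)}{\Vol(f)} = 2\Prx_{\bx \sim f^{-1}(1),\, \by \sim N_t(\bx)}[f(\by)=0],
\]
where $\bx \sim f^{-1}(1)$ means $\bx$ is drawn from $N(0,I_n)$ conditioned on $f^{-1}(1)$, which is exactly the output distribution of $\SAMP(f)$.

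The exchangeability is immediate. The vector $(\bx,\by)$ is jointly Gaussian with $\bx,\by$ each distributed as $N(0,I_n)$, and the cross-covariance is $e^{-t}I_n$, which is symmetric.

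The algorithm \EstSense{} runs as follows:
\begin{enumerate}
\item Draw $m = \Theta(\kappa^{-2})$ independent samples $\bx^{(1)},\dots,\bx^{(m)}$ from $\SAMP(f)$.
\item For each $i$, draw an independent $\bz^{(i)} \sim N(0,I_n)$ and set $\by^{(i)} = e^{-t}\bx^{(i)} + \sqrt{1-e^{-2t}}\,\bz^{(i)}$, so that $\by^{(i)} \sim N_t(\bx^{(i)})$.
\item Query $\MQ(f)$ at each $\by^{(i)}$.
\item Output $\hat{\mu} = \frac{2}{m}\sum_{i=1}^m \mathbf{1}[f(\by^{(i)})=0]$.
\end{enumerate}
This uses exactly $m$ samples and $m$ queries.

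For correctness, note that the indicators $\mathbf{1}[f(\by^{(i)})=0]$ are i.i.d.\ Bernoulli random variables with mean $\mu/2$, where $\mu := \NS_t(f)/\Vol(f)$. By Hoeffding's inequality, the empirical mean is within $\kappa/2$ of $\mu/2$ with probability at least $1-2e^{-m\kappa^2/2}$. Taking the constant in $m$ large enough, this probability is at least $0.99$, and then $|\hat{\mu}-\mu| \le \kappa$.

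The one step needing care is the symmetry identity, which is what allows us to query only points near positive samples rather than needing negative samples. It is routine once one notes that $N_t$ is a reversible Markov kernel with respect to $N(0,I_n)$. If $\Vol(f)=0$, $\SAMP(f)$ is undefined; we implicitly assume $\Vol(f)>0$, as in the rest of the paper.
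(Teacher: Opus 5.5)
Your proposal is correct and follows the paper's proof: you use exchangeability of $(\bx,\by)$ to write $\NS_t(f) = 2\Vol(f)\cdot\Pr[f(\by)=0 \mid f(\bx)=1]$, then estimate that conditional probability by drawing from $\SAMP(f)$, applying the noise, and querying $\MQ(f)$. The only difference is that you spell out details the paper leaves to ``standard sampling,'' namely the covariance argument for exchangeability and an explicit Hoeffding bound.
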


\begin{proof}
Sample $\bx \sim N(0,I_n)$ and $\by \sim N_t(\bx)$. Then, 
note that $$\NS_t(f) = \Pr \big[f(\bx) \not = f(\by)\big]=\Pr\big [ \bx \in f^{-1}(1) \land \by \not \in f^{-1}(1)\big] + \Pr \big[ \by \in f^{-1}(1) \land \bx \not \in f^{-1}(1)\big]. $$
As the distribution of $(\bx, \by)$ is exchangeable, it follows that 
\[
\NS_t(f) = 2 \Pr \big[ \bx \in f^{-1}(1) \land \by \not \in f^{-1}(1)\big] = 2 \Vol(f) \cdot \Pr\big[f(\by)= 0 \hspace{0.08cm}  | \hspace{0.08cm}  f(\bx)=1\big]. 
\]
Standard sampling shows that the RHS probabilty can be estimated to error $\pm \kappa/2$ with confidence $0.99$ using $\Theta(\kappa^{-2})$ calls to $\SAMP(f)$ and the same number of queries to $f$.   The lemma follows. 
\end{proof}

Another key ingredient we require is the following deep result of Barchiesi et~al.~\cite{barchiesi2017sharp} which obtains a sharp stability version of the Gaussian isoperimetric inequality. In particular, they establish the following: 
\begin{theorem}
[Main theorem of \cite{barchiesi2017sharp} and subsequent discussion: sharp stability version of~Gaussian isoperimetric inequality] \label{thm:BBJ}
Let $E$ be a measurable subset of $\mathbb{R}^n$ and let $s \in \R$ satisfy $\Vol(E) = \Phi(s)$, i.e., the Gaussian volume of $E$ is the same as that of the one-dimensional halfspace $\{x: x \leq s\}$. Define the \emph{Gaussian isoperimetric deficit} (namely, the gap between the two sides of the Gaussian isoperimetric inequality) to be
\[D(E) := \mathsf{surf}(E)-\varphi\big(\Phi^{-1}(\Vol(E)\big)=\mathsf{surf}(E) -\frac{1}{\sqrt{2\pi}}e^{-s^2/2}.\]
 Then, there is a halfspace $H = \{x : x \cdot \omega \leq  s\}$  (for some unit vector $\omega$) such that 
\[
\Vol\big(E \triangle H\big) ^2 \le C'(1+s^2) e^{-s^2/2}\cdot  D(E),
\]
 {where $C'>0$ is a universal constant (which can be taken to be at most $320\sqrt{2} \pi^{2}$).}
\end{theorem}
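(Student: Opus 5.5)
The plan is to obtain \Cref{thm:BBJ} by restating the main theorem of Barchiesi, Brancolini and Julin~\cite{barchiesi2017sharp} in the normalization used here, and then rearranging it. The only real work is matching definitions and conventions.

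\textbf{Step 1: match definitions.} Their perimeter is the Gaussian perimeter of $E$. This agrees with $\mathsf{surf}(E)$ from \Cref{def:surface-area} for sets with $\mathcal{C}^1$ boundary, and for general measurable sets it is the Minkowski-content / relaxed version of the same quantity. Their deficit is $P_\gamma(E)-\varphi(s)$ with $\gamma(E)=\Phi(s)$, which is exactly $D(E)$ here. Their asymmetry index is
\[
\alpha(E):=\min_{\omega\in S^{n-1}}\Vol\big(E\,\triangle\,\{x:x\cdot\omega\le s\}\big).
\]
The minimum ranges over halfspaces with the same volume as $E$ (open versus closed halfspaces differ on a null set, so this is harmless).

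\textbf{Step 2: invert their inequality.} Their main theorem states, in the form I intend to use, that
\[
D(E)\ \ge\ c\,\frac{e^{s^2/2}}{1+s^2}\,\alpha(E)^2
\]
for a universal constant $c>0$. Taking $C'=1/c$ and letting $\omega$ be a minimizer, which exists by compactness of the sphere and continuity of $\omega\mapsto\Vol(E\triangle H_\omega)$, gives
\[
\Vol(E\triangle H)^2\le C'(1+s^2)e^{-s^2/2}D(E)
\]
for $H=\{x:x\cdot\omega\le s\}$. The explicit value $C'\le 320\sqrt2\,\pi^2$ comes from the constant tracking in the discussion following their main theorem; I would cite it rather than redo it.

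\textbf{Step 3: degenerate cases.} If $\mathsf{surf}(E)=\infty$, the inequality is trivial. If $\Vol(E)\in\{0,1\}$, then $s=\mp\infty$, and the statement is read with the convention that $H$ is $\emptyset$ or $\mathbb{R}^n$, so both sides vanish up to null sets.

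\textbf{Main obstacle.} The one delicate point is the sign and exponent convention on $e^{\pm s^2/2}$. The factor $e^{-s^2/2}$ on the right-hand side here corresponds to $e^{+s^2/2}$ in the lower bound on the deficit. As a consistency check, let $|s|\to\infty$, so the volume $p$ is small. Then $\alpha(E)\le 2p\approx e^{-s^2/2}/|s|$, and the deficit is naturally measured at scale $\varphi(s)\sim e^{-s^2/2}$. I would use this scaling check to confirm the normalization before invoking the cited result.
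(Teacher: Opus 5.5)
Your route is the paper's: \Cref{thm:BBJ} is imported from \cite{barchiesi2017sharp}, and the only real content is translating conventions. But Step~1 gets wrong the one translation that matters. As the remark after \Cref{thm:BBJ} records, \cite{barchiesi2017sharp} normalize the Gaussian perimeter so that $\{x\cdot\omega\le s\}$ has perimeter $e^{-s^2/2}$. So their perimeter is $\sqrt{2\pi}\,\mathsf{surf}(E)$, and their deficit is $P(E)-e^{-s^2/2}=\sqrt{2\pi}\,D(E)$, not $P_\gamma(E)-\varphi(s)=D(E)$ as you assert.

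Hence if their estimate reads $\alpha(E)^2\le K(1+s^2)e^{-s^2/2}\bigl(P(E)-e^{-s^2/2}\bigr)$, the valid constant here is $C'=\sqrt{2\pi}\,K$. Your choice $C'=1/c$, with $c$ read off from their paper, is too small by a factor $\sqrt{2\pi}$, and the citation does not justify it. For the qualitative statement (``some universal $C'$'') this is harmless. But any explicit value you import, such as the $320\sqrt{2}\,\pi^2$ in the statement, must be rescaled accordingly. The ``main obstacle'' you flag is not where the danger lies: your scaling check does settle the sign of the exponent, but it cannot detect an $s$-independent factor like $\sqrt{2\pi}$.

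Two smaller points of precision. First, the main theorem of \cite{barchiesi2017sharp} is phrased in terms of the gap between barycenters. The Fraenkel-asymmetry bound you use in Step~2 is derived from it in the discussion that follows, which is why the paper cites ``main theorem and subsequent discussion''; you should cite it the same way.

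Second, for general measurable $E$, their perimeter (defined via the reduced boundary) and the Minkowski-type $\mathsf{surf}(E)$ of \Cref{def:surface-area} are not the same quantity. What is true is that the perimeter is at most the lower outer Minkowski content, and the latter is $+\infty$ when $\Vol(\overline{E}\setminus E)>0$. This inequality goes in the right direction, since it can only make $D(E)$ larger, so the transfer is valid. It should be stated as an inequality rather than an identification.
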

\begin{remark}
The main theorem in \cite{barchiesi2017sharp} defines the isoperimetric deficit $D(E)=\mathsf{surf}(E) - e^{-s^2/2}$ --- i.e. vis-a-vis their definition, we have an extra factor of $1/\sqrt{2\pi}$. The reason is just that while our definition of the surface area of a set $A$ is $\int_{x \in \partial A} \phi_n(x) d\sigma(x)$, the definition in \cite{barchiesi2017sharp} is equivalent to defining it as $\smash{\int_{x \in \partial A} \sqrt{2\pi} \phi_n(x) d\sigma(x)}$. 
\end{remark}

An easy corollary of this is the following: 
\begin{corollary}\label{corr:BBJ}
Let $E$ be a measurable subset of  $\mathbb{R}^n$ such that $\Vol(E)=p$ with $p \le 0.1$.
Suppose $$D(E) 
\le \frac{\beta^2 p}{{\log^{3/2}(1/p)}}$$ for some $\beta$. Then  there is a halfspace $H$ that satisfies $\Vol(H)=p$
and $\Vol(E \ \triangle \ H) \le C \beta p$
for some absolute constant $C>0$. 
(Recalling  $\varphi(\Phi^{-1}(p)) = \Theta(p \sqrt{\log(1/p)})$, an equivalent~statement~is~if $$D(E) \le \frac{\beta^2}{\log^2(1/p)}\cdot \varphi(\Phi^{-1}(p)),$$ 
then there is a halfspace $H$ that satisfies $\Vol(H)=p$ and  $\Vol(E \ \triangle \ H) \le C \beta p$.) 
\end{corollary}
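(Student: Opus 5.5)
This is a direct plug-in into \Cref{thm:BBJ}. Write $p=\Phi(s)$; since $p\le 0.1$ we have $s<0$, with $|s|=\Theta(\sqrt{\log(1/p)})$ and $|s|$ bounded below by an absolute constant (because $\Phi^{-1}(0.1)\approx -1.28$). \Cref{thm:BBJ} gives a halfspace $H=\{x:x\cdot\omega\le s\}$. Its volume is exactly $\Phi(s)=p$, so $\Vol(H)=p$ comes for free. We also get
\[
\Vol(E\triangle H)^2\le C'(1+s^2)e^{-s^2/2}\,D(E).
\]
It remains to express the right-hand side in terms of $p$.

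\textbf{Step 1: rewrite $e^{-s^2/2}$.} We have $e^{-s^2/2}=\sqrt{2\pi}\,\varphi(s)=\sqrt{2\pi}\,\varphi(\Phi^{-1}(p))$. By \Cref{fact:halfspace-sa} (equivalently \Cref{fact:Gaussian-cdf-estimate}, $p=\Phi(s)=\Theta(e^{-s^2/2}/|s|)$), this is $\Theta(p\sqrt{\log(1/p)})$. Likewise $1+s^2=\Theta(\log(1/p))$, where we use that $|s|$ is bounded away from $0$. Together,
\[
(1+s^2)e^{-s^2/2}=\Theta\big(p\log^{3/2}(1/p)\big).
\]

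\textbf{Step 2: plug in the hypothesis on $D(E)$.} Inserting $D(E)\le \beta^2p/\log^{3/2}(1/p)$ gives $\Vol(E\triangle H)^2\le O(\beta^2p^2)$. Taking square roots yields $\Vol(E\triangle H)\le C\beta p$ for an absolute constant $C$.

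\textbf{Step 3: the equivalent formulation.} Since $\varphi(\Phi^{-1}(p))=\Theta(p\sqrt{\log(1/p)})$, the hypothesis $D(E)\le \frac{\beta^2}{\log^2(1/p)}\varphi(\Phi^{-1}(p))$ implies $D(E)\le O(\beta^2 p/\log^{3/2}(1/p))$. The constant factor is absorbed into $C$ (by rescaling $\beta$ by a constant), and the first form then applies.

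The only mildly delicate point is making sure every $\Theta$ holds uniformly for $p\in(0,0.1]$. Near $p=0.1$, $\log(1/p)$ is bounded below by the positive constant $\log 10$, and $|s|\ge 1.28$, so none of the constants degenerate. The estimates from \Cref{prop:gaussian-tails} therefore apply uniformly across the range.
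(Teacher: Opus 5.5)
Your proposal is correct and follows the paper's proof. Both apply \Cref{thm:BBJ} and use $e^{-s^2/2}=\Theta(\varphi(\Phi^{-1}(p)))=\Theta(p\sqrt{\log(1/p)})$ and $s^2=\Theta(\log(1/p))$ to get $\Vol(E\triangle H)^2=O(\beta^2p^2)$; the only cosmetic difference is that the paper substitutes the second (equivalent) form of the hypothesis directly, whereas you start from the first form and reduce the second to it by rescaling $\beta$.
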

\begin{proof}
Let $H = \{x: x_1 \le s\}$  be a one-dimensional halfspace such that $\Vol(H ) =p$. Note that $$\varphi (\Phi^{-1}(p)) = {\frac 1 {\sqrt{2 \pi}}}e^{-s^2/2}$$ 
so $e^{-s^2/2}=\Theta(\varphi(\Phi^{-1}(p)))$
  and using $\varphi(\Phi^{-1}(p)) = \Theta(p \sqrt{\log(1/p)})$,
   we have 
   $s^2=\Theta(\log (1/p))$. 
   
Now, applying \Cref{thm:BBJ}, we have that there is a halfspace
  $H$ with $\Vol(H)=p$ and
\begin{align*}
    \Vol(E \triangle H)^2 &\le O\left(\log (1/p)
    \cdot \varphi(\Phi^{-1}(p))\cdot \frac{\beta^2}{\log^2 (1/p)}
    \cdot \varphi(\Phi^{-1}(p))\right).
\end{align*}
This finishes the proof by plugging in 
$\varphi(\Phi^{-1}(p)) = \Theta(p \sqrt{\log(1/p)})$. 
\end{proof}

\subsection{Proof of \Cref{thm:Gaussian1}}

We now have all the necessary ingredients required to prove \Cref{thm:Gaussian1}. 
The algorithm alluded to in \Cref{thm:Gaussian1}, called \GSATest, is given in \Cref{alg:GSATest}.

\begin{algorithm}
\caption{A relative-error LTF tester over Gaussian space using samples and queries.}
\label{alg:GSATest}
\vspace{0.5em}
\textbf{Input:} $\SAMP(f)$, $\MQ(f)$, $\eps \in (0,1]$, $\hat{p} \in [(1+\zeta)^{-1}p, (1+\zeta)p]$ where $p = \vol(f) \leq 0.1$  
and $\zeta = c_1 \epsilon^2 /\log^2(1/p)$ for a suitably absolute constant $c_1 > 0$.  \\[0.5em]
\textbf{Output:} ``Accept'' or ``reject'' 

\ 

\GSATest$\pbra{\SAMP(f),\MQ(f), \hat{p}, \eps}$:
\begin{flushleft}\begin{enumerate}
    
    \item Set $p_{\mathsf{lb}} := \hat{p}/(1+ \zeta)$ (as in~\Cref{eq:plb}) and $\kappa,t$ (as in \Cref{eq:Definition-kappa,eq:def-t-xi}) as:
    \begin{align*}
\kappa := \frac{\zeta  \sqrt{t} \cdot \varphi(\Phi^{-1}(p_{\mathsf{lb}}))}{2\sqrt{\pi} \cdot p_{\mathsf{lb}}} \qquad\text{and}\qquad
t := \frac{c_2\epsilon^{10}}{\log^{10}(1/p_{\mathsf{lb}})}.
    \end{align*}
    \item Run the algorithm \EstSense~
    to compute a value $\balpha$, which, by \Cref{lem:estimate-sensitivity}, is a $\pm \kappa$-accurate estimate of $\NS_t(f)/\Vol[f]$ with confidence 0.99.
    
    \item If 
    \[
\balpha \leq \frac{2 \sqrt{t}}{\sqrt{\pi}} \cdot\frac{\varphi(\Phi^{-1}(p_{\mathsf{lb}}))}{p_{\mathsf{lb}}} +\kappa
\]
then output ``accept,'' otherwise output ``reject.''
\end{enumerate}
\end{flushleft}
\end{algorithm}

\begin{proofof}{\Cref{thm:Gaussian1}}
The algorithm is given a value $\hat{p}$ such that 
$p:=\Vol(f)$ satisfies
\begin{equation}~\label{eq:bound-p-hatp}
 \frac{p}{1 + \zeta} \le \hat{p} \le (1+ \zeta) p,
\end{equation}
where $\zeta = c_1 \epsilon^2 /\log^2(1/p)$ for a suitable absolute constant $c_1 > 0$.

The algorithm and its analysis use three main parameters, which are $\kappa$, $t$, and $\xi$. The precise values of $t$ and $\xi$ (in terms of $p$ and  $\epsilon$) are set later (see \eqref{eq:def-t-xi}), but looking ahead, importantly, both $\xi$ and $t$ are set to be $\poly(\log(1/p), 1/\epsilon)$.
The parameter $\kappa$ is defined as
\begin{equation}~\label{eq:Definition-kappa}
\kappa := \frac{\zeta  \sqrt{t}\cdot  \varphi(\Phi^{-1}(p_\mathsf{lb}))}{2\sqrt{\pi} \cdot p_{\mathsf{lb}}}
\le \frac{\zeta  \sqrt{t}\cdot  \varphi(\Phi^{-1}(p  ))}{ 2\sqrt{\pi} \cdot p }.
\end{equation}
using \Cref{fact:important-Gcdf} and $p_{\mathsf{lb}} \leq p$.
We also define the parameter $p_{\mathsf{lb}} $ as 
\begin{equation}
\label{eq:plb}
p_{\mathsf{lb}} :=  {\hat{p}}\big/({1+ \zeta}) .
\end{equation}

\medskip
\noindent {\bf Correctness analysis.} There are two parts of the analysis establishing correctness. We first start with the easy part, which is completeness. 

\begin{claim} [Completeness] \label{claim:broccoli-completeness}
If $f$ is a halfspace with $\Vol(f)=p$ 
and the input parameter $\hat{p}$ satisfies \eqref{eq:bound-p-hatp}, then $f$ passes the test with probability at least $0.99$.
\end{claim}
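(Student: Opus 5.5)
The plan is to show that, for a halfspace $f$, the true normalized noise sensitivity $\NS_t(f)/\Vol(f)$ is at most the main term $\frac{2\sqrt{t}}{\sqrt{\pi}}\cdot\frac{\varphi(\Phi^{-1}(p_{\mathsf{lb}}))}{p_{\mathsf{lb}}}$ of the acceptance threshold. The additive $+\kappa$ in the threshold then absorbs the estimation error of \EstSense. By \Cref{lem:estimate-sensitivity}, with probability at least $0.99$ the estimate $\balpha$ satisfies $\balpha \le \NS_t(f)/\Vol(f) + \kappa$. On that event the test accepts, so the probability bound comes entirely from \Cref{lem:estimate-sensitivity}.

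First, since $f$ is a halfspace, its set $f^{-1}(1)$ has $\mathcal{C}^1$ (indeed flat) boundary. So \Cref{thm:noise-sensitivity-Ledoux} applies and gives $\NS_t(f)\le \frac{2\sqrt t}{\sqrt\pi}\,\mathsf{surf}(f^{-1}(1))$. By \Cref{fact:halfspace-sa}, $\mathsf{surf}(f^{-1}(1))=\varphi(\Phi^{-1}(p))$. Dividing by $\Vol(f)=p$ yields
\[
\frac{\NS_t(f)}{\Vol(f)} \le \frac{2\sqrt t}{\sqrt\pi}\cdot\frac{\varphi(\Phi^{-1}(p))}{p} = \frac{2\sqrt t}{\sqrt\pi}\, R(p).
\]
(If one prefers to avoid any worry about degenerate halfspaces, note that when $p\le 0.1$ the halfspace is non-trivial, so this applies.)

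Next, I compare $R(p)$ with $R(p_{\mathsf{lb}})$. From \eqref{eq:bound-p-hatp}, $p_{\mathsf{lb}}=\hat p/(1+\zeta)\le p$, and also $p_{\mathsf{lb}}>0$. Since $p\le 0.1<1/2$, both $p$ and $p_{\mathsf{lb}}$ lie in $(0,1/2)$. By \Cref{claim:facts-about-Ipp}, $R$ is decreasing on this interval, so $R(p)\le R(p_{\mathsf{lb}})$. Combining the two steps,
\[
\balpha \le \frac{\NS_t(f)}{\Vol(f)}+\kappa \le \frac{2\sqrt t}{\sqrt\pi}\cdot\frac{\varphi(\Phi^{-1}(p_{\mathsf{lb}}))}{p_{\mathsf{lb}}}+\kappa,
\]
which is exactly the acceptance condition.

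There is no real obstacle here. The only point needing care is the direction of monotonicity: the threshold is computed at the lower bound $p_{\mathsf{lb}}$, and $R$ is decreasing, so this is the conservative choice. Step 2 of \Cref{alg:GSATest} explicitly states that \EstSense\ runs with confidence $0.99$, so I can cite \Cref{lem:estimate-sensitivity} directly.
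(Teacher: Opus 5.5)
Your proposal is correct and follows the paper's proof essentially line for line. Ledoux's bound combined with the halfspace surface-area formula gives $\NS_t(f)/\Vol(f) \le \frac{2\sqrt{t}}{\sqrt{\pi}}\cdot\frac{\varphi(\Phi^{-1}(p))}{p}$, and the fact that this ratio is decreasing on $(0,1/2)$, together with $p_{\mathsf{lb}}\le p$, moves the bound to the threshold; the $\pm\kappa$ guarantee of \EstSense\ then gives acceptance with probability at least $0.99$.
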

\begin{proof}
As $f$ is a halfspace, by \Cref{thm:noise-sensitivity-Ledoux} and \Cref{fact:halfspace-sa} we have
\[
\frac{\NS_t(f)}{\Vol(f)} \le \frac{2 \sqrt{t}}{\sqrt{\pi}} \cdot \frac{\varphi(\Phi^{-1}(p))}{p}.
\]
Note that $p_{\mathsf{lb}} \le p$ and the function $\Lambda (p) := \varphi(\Phi^{-1}(p))/p$ is decreasing in the interval $(0, 1/2)$.  Thus,  
\[
\frac{\NS_t(f)}{\Vol(f)} \le \frac{2 \sqrt{t}}{\sqrt{\pi}} \cdot \frac{\varphi(\Phi^{-1}(p_{\mathsf{lb}}))}{p_{\mathsf{lb}}}.
\]
As $\balpha \le \NS_t(f)/\Vol(f) + \kappa$ with probability $0.99$ by \Cref{lem:estimate-sensitivity}, this finishes the proof. 
\end{proof}

The next claim analyzes the soundness of the test. 
\begin{claim} [Soundness] \label{clm:soundness-surface-area-test}
Suppose $f : \mathbb{R}^n \rightarrow \{0,1\}$ has $\Vol(f)=p$, and suppose that $p$ and $\hat{p}$  satisfy \eqref{eq:bound-p-hatp}. If $\reldist(f,g) \ge \epsilon$ for every halfspace $g$, then $f$ fails the test with probability at least 0.99.
\end{claim}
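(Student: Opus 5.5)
We argue by contrapositive: assuming the test accepts with probability more than $0.01$, we show that $f$ is $\eps$-close in relative distance to some halfspace. With probability $0.99$ the estimate satisfies $\balpha \ge \NS_t(f)/p - \kappa$. So if $f$ is accepted on such a run, then
\[
\NS_t(f) \le p\left(\frac{2\sqrt t}{\sqrt\pi}\cdot\frac{\varphi(\Phi^{-1}(p_{\mathsf{lb}}))}{p_{\mathsf{lb}}} + 2\kappa\right).
\]
Since $p_{\mathsf{lb}} \ge p/(1+\zeta)^2$, \Cref{fact:important-Gcdf} gives $\varphi(\Phi^{-1}(p_{\mathsf{lb}}))/p_{\mathsf{lb}} \le (1+\zeta)^2\,\varphi(\Phi^{-1}(p))/p$. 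Combining this with the bound \eqref{eq:Definition-kappa} on $\kappa$ yields
\[
\NS_t(f) \le \frac{2\sqrt t}{\sqrt\pi}\,(1+O(\zeta))\,\varphi(\Phi^{-1}(p)).
\]
(If $f^{-1}(1)$ does not have $\mathcal C^1$ boundary, we would first approximate it by a smooth set, for instance via a level set of $P_\delta f$, changing volume and noise sensitivity negligibly. I expect this to be a routine technicality.)

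Next we apply \Cref{thm:Neeman-reverse-Ledoux} with this $t$ and a parameter $\xi$ to be chosen. It produces a set $B$ with $\Vol(f^{-1}(1)\triangle B) \le \NS_t(f)/\xi = O(\sqrt{t}\, p\sqrt{\log(1/p)}/\xi)$. For $t \ll 1$ we have $\sqrt{\pi/2}\cdot 1/\sqrt{e^{2t}-1} = \sqrt{\pi/(4t)}\,(1+O(t))$, so
\[
\mathsf{surf}(B) \le (1+O(\xi)+O(t)+O(\zeta))\,\varphi(\Phi^{-1}(p)).
\]
We choose $\xi$ so that the volume change is at most $\eps p/10$, which requires $\xi \gg \sqrt{t\log(1/p)}/\eps$. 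The volume of $B$ is then $p' = p(1\pm\eps/10)$. We need to compare $\mathsf{surf}(B)$ with $\varphi(\Phi^{-1}(p'))$, and this step is the delicate one. \Cref{fact:important-Gcdf} only gives $\varphi(\Phi^{-1}(p')) \ge (1-O(\eps))\,\varphi(\Phi^{-1}(p))$, which would leave a deficit of order $\eps\,\varphi(\Phi^{-1}(p))$, far too large.

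To fix this, we make the volume discrepancy much smaller than $\eps$. Concretely, we choose $\xi$ so that $\Vol(f^{-1}(1)\triangle B) \le \gamma p$ with $\gamma = \eps^2/\log^2(1/p)$, which needs $\sqrt{t\log(1/p)}/\xi \lesssim \gamma$. Then the deficit satisfies
\[
D(B) \le O(\xi + t + \zeta + \gamma)\,\varphi(\Phi^{-1}(p')).
\]
By the second form of \Cref{corr:BBJ}, to get $\Vol(B\triangle H) \le \eps p/2$ we need $D(B) \le c\,\eps^2\log^{-2}(1/p)\,\varphi(\Phi^{-1}(p'))$. This forces $\xi,\gamma,\zeta \lesssim \eps^2/\log^2(1/p)$, which matches the choice $\zeta = c_1\eps^2/\log^2(1/p)$. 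Then $t$ must satisfy $\sqrt{t\log(1/p)} \lesssim \xi\gamma \approx \eps^4/\log^4(1/p)$, so $t \approx \eps^8/\log^9(1/p)$; the paper's choice $t = c_2\eps^{10}/\log^{10}(1/p)$ is comfortably within this range. We must also handle the fact that $p'$ may exceed $0.1$ slightly, and that the algorithm's parameters use $p_{\mathsf{lb}}$ in place of $p$. Both only cost constants.

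Finally, the triangle inequality gives
\[
\Vol(f^{-1}(1)\triangle H) \le \gamma p + \eps p/2 < \eps p,
\]
so $f$ is $\eps$-close in relative distance to the halfspace $H$, contradicting the hypothesis. Hence $f$ is rejected with probability at least $0.99$.

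The main obstacle is the parameter bookkeeping. All the multiplicative losses (the $\zeta$-error in $\hat p$, the $\kappa$ slack, Neeman's $(1+O(\xi))$ factor, the $O(t)$ expansion, and the volume shift $p\to p'$) must sum to a relative deficit of $O(\eps^2/\log^2(1/p))$. At the same time, the volume change $\NS_t(f)/\xi$ must stay small, and this is what forces $t$ to be a high power of $\eps/\log(1/p)$.
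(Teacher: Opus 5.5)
Your proposal is correct and follows essentially the same route as the paper's proof: the contrapositive bound $\NS_t(f) \le (1+O(\zeta))\tfrac{2\sqrt t}{\sqrt\pi}\varphi(\Phi^{-1}(p))$, then Neeman's reverse-Ledoux theorem with an analysis-only parameter $\xi$, then control of the volume shift so that the isoperimetric deficit is $O(\tau+\zeta+\xi)\,\varphi(\Phi^{-1}(p_{\mathsf{sm}}))$ (your $\gamma$ plays the role of the paper's $\tau=\Theta(\sqrt{t\log(1/p)}/\xi)$), and finally \Cref{corr:BBJ} plus the triangle inequality. Your extra care about $p_{\mathsf{lb}}\ge p/(1+\zeta)^2$, the smoothed volume possibly slightly exceeding $0.1$, and the $\mathcal{C}^1$-boundary hypothesis only tightens details that the paper glosses over.
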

\begin{proof}
We prove the contrapositive. Suppose $f$ passes the test with probability greater than $0.01$. By Line~2 of the algorithm and \Cref{lem:estimate-sensitivity}, it must be the case that
\[
\frac{\NS_t(f)}{\Vol(f)} \le \frac{2 \sqrt{t}}{\sqrt{\pi}} \cdot\frac{\varphi(\Phi^{-1}(p_{\mathsf{lb}}))}{p_{\mathsf{lb}}} +2 \kappa.
\]
Equivalently (recalling that $\Vol(f)=p$), we get that 
\[
\NS_t(f) \le \frac{2 \sqrt{t} p}{\sqrt{\pi}}\cdot  \frac{\varphi(\Phi^{-1}(p_{\mathsf{lb}}))}{p_{\mathsf{lb}}} +2 \kappa \cdot p. 
\]
Now, using \Cref{fact:important-Gcdf} and $p_{\mathsf{lb}} \leq p \le (1+\zeta)p_{\mathsf{lb}}$, we get that 
$$
\NS_t(f) \le\frac{2 \sqrt{t}p}{\sqrt{\pi}} \cdot\frac{\varphi(\Phi^{-1}(p))}{p} \cdot (1+\zeta) +2 \kappa \cdot p \nonumber = \frac{2 (1+\zeta) \sqrt{t}\cdot  \varphi(\Phi^{-1}(p))}{\sqrt{\pi}} + 2 \kappa \cdot p. 
$$ 
Plugging in the value of $\kappa$ from \eqref{eq:Definition-kappa} in the above estimate, we get 
\begin{equation}~\label{eq:sens1diff}
\NS_t(f) \le \frac{2  \sqrt{t}\cdot \varphi(\Phi^{-1}(p))}{\sqrt{\pi}} + \frac{3 \zeta  \sqrt{t}\cdot  \varphi(\Phi^{-1}(p))}{\sqrt{\pi}} \le \frac{3 \sqrt{t}\cdot  \varphi(\Phi^{-1}(p))}{\sqrt{\pi}},
 \end{equation}
 where we use the fact that $\zeta \le 1/3$ when $c_1$ is sufficiently small.
For any parameter $\xi>0$, we can now apply \Cref{thm:Neeman-reverse-Ledoux} to get that there is another function $\fsm:\mathbb{R}^n \rightarrow \zo$ satisfying the two bounds \Cref{eq:volume-diff} and \Cref{eq:fruit} given below:  {The first bound, using \Cref{eq:sens1diff}, is}
\begin{equation}~\label{eq:volume-diff}
 \Vol(f  \triangle \fsm) \le \frac{\NS_t(f)}{\xi} \le 
 \frac{3 \sqrt{t}\cdot \varphi(\Phi^{-1}(p))}{\sqrt{\pi} \cdot \xi}.
\end{equation}
For the second bound, let us now define $\tau$ as
\begin{equation} \label{eq:def-tau}\tau :=  \frac{3 \sqrt{t} \cdot \varphi(\Phi^{-1}(p))}{\sqrt{\pi} \cdot \xi \cdot p} =\Theta\pbra{\frac{\sqrt{t \log(1/p)}}{\xi}},
\end{equation}
where we use \Cref{fact:halfspace-sa}
in the last inequality. 
(Looking ahead to \Cref{eq:def-t-xi}, note that our choice of parameters will ensure that $\tau \le 1/10$.) By \Cref{eq:volume-diff}, we have that
\begin{equation} \label{eq:ppsmbound}
(1+ \tau) p \ge p_{\mathsf{sm}}:=\Vol(f_{\mathsf{sm}}) \ge (1-\tau) p.  
\end{equation}
The second bound given by \Cref{thm:Neeman-reverse-Ledoux} is
\begin{align}
    \mathsf{surf}(\fsm) &\le \sqrt{\frac{\pi}{2}} \bigg(1 + O\bigg(\frac{\xi}{\sqrt{\log(1/\xi)}} \bigg)\bigg) \cdot \frac{1}{\sqrt{e^{2t}-1}} \cdot \NS_t(f)  \label{eq:fruit}\\[1ex] 
    &\le  \sqrt{\frac{\pi}{2}} \cdot \frac{1}{\sqrt{e^{2t}-1}} \cdot \bigg( \frac{2  \sqrt{t} \cdot\varphi(\Phi^{-1}(p))}{\sqrt{\pi}} + \frac{3\zeta  \sqrt{t}\cdot  \varphi(\Phi^{-1}(p))}{\sqrt{\pi}} \bigg) +  \frac{\NS_t(f)}{\sqrt{e^{2t}-1}} \cdot  O\bigg(\frac{\xi}{\sqrt{\log(1/\xi)}} \bigg), \nonumber
\end{align}
where we used the first estimate of \eqref{eq:sens1diff}. Plugging in $e^{2t} -1 \ge 2t$
 and the second bound of \eqref{eq:sens1diff}, 
\begin{equation}~\label{eq:deficit-surf-area}
 \mathsf{surf}(\fsm)\le (1+3\zeta)\cdot \varphi(\Phi^{-1}(p)) + \frac{O(\xi) \cdot \varphi(\Phi^{-1}(p))}{\sqrt{\log(1/\xi)}}.
\end{equation}
Thus, defining the set $E:= f_{\mathsf{sm}}^{-1}(1)$, the isoperimetric deficit $D(E)$ for this set $E$ satisfies
\begin{eqnarray}
    D(E) = \mathsf{surf}(f_{\mathsf{sm}}) - \varphi(\Phi^{-1}(p_{\mathsf{sm}})) \le (1+3\zeta)\cdot \varphi(\Phi^{-1}(p)) + \frac{O(\xi) \cdot \varphi(\Phi^{-1}(p))}{\sqrt{\log(1/\xi)}} - \varphi(\Phi^{-1}(p_{\mathsf{sm}})), \nonumber
\end{eqnarray}
where we use \eqref{eq:deficit-surf-area} to get the inequality. Now, recall that by \Cref{eq:ppsmbound} we have $$p/p_{\mathsf{sm}} \le 1/(1-\tau) \le 1+2\tau$$ 
using $\tau \le 1/10$. 
Combining this with \Cref{fact:important-Gcdf}, it follows that $\varphi(\Phi^{-1}(p))\le (1+2\tau)\varphi(\Phi^{-1}(p_{\mathsf{sm}})$ and 
\begin{eqnarray}
    D(E) \le \left(2\tau+3\zeta(1+2\tau)+O\left(\frac{\xi}{\sqrt{\log(1/\xi)}}\right)\right)\cdot \varphi(\Phi^{-1}(p_{\mathsf{sm}})).
\end{eqnarray}
Now using \Cref{corr:BBJ} with this estimate (along with the fact $\Vol(E) = p_{\mathsf{sm}}$ and $\tau \le 1/10$), 
there is a halfspace $H$ such that 
\begin{equation} \label{eq:diff-H-E}
\Vol(H \, \triangle \,  E) = O(\beta p_{\mathsf{sm}}) = O(\beta p)
\end{equation}
where
\[
\beta = \log(1/p) \cdot O\bigg(\sqrt{\tau}  + \sqrt{\zeta} + \frac{\sqrt{\xi}}{\log^{1/4}(1/\xi)} \bigg). 
\]
Using \eqref{eq:def-tau},  it follows that
\begin{equation}~\label{eq:def-beta-bound}
\beta = \log(1/p) \cdot O\bigg(\frac{(t \log(1/p))^{1/4}}{\sqrt{\xi}}  + \sqrt{\zeta} + \frac{\sqrt{\xi}}{\log^{1/4}(1/\xi)} \bigg). 
\end{equation}
We can now use the fact that $\zeta = c_1 \epsilon^2 /\log^2(1/p)$ for a sufficiently small $c_1>0$ and finally set the parameters $t$ and $\xi$ as follows
\begin{equation}~\label{eq:def-t-xi}
t = \frac{c_2\epsilon^{10}}{\log^{10}(1/p_{\mathsf{lb}})} \qquad\text{and}\qquad  \xi = \frac{c_3\epsilon^{2.1}}{\log^{2.1}(1/p_{\mathsf{lb}})},
\end{equation}
for sufficiently small constants $c_2 \ll c_3 \ll 1$,
 to get from \Cref{eq:diff-H-E,eq:def-beta-bound} that
\begin{equation} \label{eq:almostthere}
\Vol (H \, \triangle \, E) \le {\frac \epsilon 2} \cdot p.
\end{equation}
Combining \Cref{eq:volume-diff} with \Cref{eq:almostthere}, and observing that by \Cref{eq:def-t-xi} we have that $\eqref{eq:volume-diff}$ is bounded from above by $ {  \eps  } p/2$ as well (with room to spare), by the triangle inequality we get that $\Vol(f^{-1}(1)  \triangle H) \leq \eps p$, i.e.~$f$ has relative distance at most $\eps$ from the halfspace $H$.
This establishes the contrapositive and concludes the proof of \Cref{clm:soundness-surface-area-test}.
\end{proof}

\noindent {\bf Complexity analysis.} 
We finally note that $\kappa$ defined in \eqref{eq:Definition-kappa} is $\Theta(\epsilon^{7}/ \log^{6.5}(1/p))$. As the 
sample complexity of~\Cref{lem:estimate-sensitivity} scales as $O(\kappa^{-2})$, our final sample complexity is $O(\eps^{-14} \cdot \log^{13}(1/p)).$
This concludes the proof of \Cref{thm:Gaussian1}.
\end{proofof}
 

%% file: sections/MORS-tester.tex

\section{Sample-based testing with known volume: Proof of \Cref{thm:Gaussian2}} \label{sec:rel-err-MORS}

In this section we prove \Cref{thm:Gaussian2}. 
Before we start, similar to \Cref{sec:broccoli}
we note that we can make a simplifying assumption, which is that $p :=\Vol(f)$ is at most $0.1$. This is because if $p>0.1$, we can just run the testing algorithm from \cite{Harms19} (\Cref{thm:mainHarms}) with error parameter $\epsilon / 10$; as $p>0.1$, standard-model $\eps/10$-testing implies $\eps$-relative-error testing. So for the rest of this section, we assume that $p \leq 0.1$. 

\subsection{Hermite analysis over $N(0, I_n)$}
\label{subsec:hermite}

Our notation and terminology follow Chapter~11 of ~\cite{odonnell-book}. We say that an $n$-dimensional \emph{multi-index} is a tuple $\alpha \in \N^n$, and we define 
\[
|\alpha| := \sum_{i=1}^n \alpha_i.
\]

For $n \in \N_{>0}$, we write $L^2(\R^n)$ to denote the space of functions $f: \R^n \to \R$ that have finite second moment under the Gaussian distribution, i.e. $f\in L^2(\R^n)$ if 
\[
\|f\|^2 = \Ex_{\bx \sim N(0,I_n)} \left[f(\bx)^2\right] < \infty.
\]
We view $L^2(\R^n)$ as an inner product space with 
\[\la f, g \ra := \Ex_{\bx \sim  N(0,I_n)}\big[f(\bx)g(\bx)\big].\]
We recall the Hermite basis for $L^2(\R)$:

\begin{definition}[Hermite basis]
	The \emph{Hermite polynomials} $(h_j)_{j\in\N}$ are the univariate polynomials 
	$$h_j(x) = \frac{(-1)^j}{\sqrt{j!}} \exp\left(\frac{x^2}{2}\right) \cdot \frac{d^j}{d x^j} \exp\left(-\frac{x^2}{2}\right).$$
\end{definition}

For example, we have
$$h_0(x)=1,\quad h_1(x) =x,\quad \text{and}\quad h_2(x)=\frac{x^2-1}{\sqrt{2}}.$$
The following fact is standard:

\begin{fact} [Proposition~11.33 of~\cite{odonnell-book}] \label{fact:hermite-orthonormality}
	The Hermite polynomials $(h_j)_{j\in\N}$ form a complete, orthonormal basis for $L^2(\R)$. 
 
 For $n > 1$, the collection of $n$-variate polynomials given by $(h_\alpha)_{\alpha\in\N^n}$ where
	$$h_\alpha(x) := \prod_{i=1}^n h_{\alpha_i}(x_i)$$
	forms a complete, orthonormal basis for $L^2(\R^n)$. 
\end{fact}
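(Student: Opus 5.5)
The statement is Fact~\ref{fact:hermite-orthonormality}: the univariate Hermite polynomials form a complete orthonormal basis of $L^2(\R)$, and their tensor products do so for $L^2(\R^n)$. The plan has three steps: orthonormality via a generating function, completeness in one dimension via a Fourier/moment-generating argument, and the $n$-dimensional case by tensorization.

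\textbf{Orthonormality.} I will use the generating function
\[
\exp\Big(tx - \tfrac{t^2}{2}\Big) = \sum_{j\ge 0} \frac{t^j}{\sqrt{j!}}\, h_j(x).
\]
This follows from the Rodrigues-type definition by Taylor-expanding $\exp(-(x-t)^2/2)$ in $t$ and multiplying by $\exp(x^2/2)$. Then for $\bx\sim N(0,1)$ the Gaussian moment generating function gives
\[
\E\big[e^{s\bx - s^2/2}\, e^{t\bx - t^2/2}\big] = e^{-(s^2+t^2)/2}\, e^{(s+t)^2/2} = e^{st} = \sum_{j} \frac{(st)^j}{j!}.
\]
Comparing coefficients of $s^j t^k$ on both sides yields $\frac{1}{\sqrt{j!\,k!}}\,\E[h_j h_k] = \mathbf{1}[j=k]/j!$, that is, $\E[h_j h_k] = \mathbf{1}[j=k]$. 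Exchanging sum and expectation is justified by absolute convergence, dominated by $\E[e^{(|s|+|t|)|\bx|}]<\infty$.

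\textbf{Completeness in one dimension.} This is the main obstacle. Since $h_j$ has degree exactly $j$ with a nonzero leading coefficient, the span of $\{h_j\}$ is the set of all polynomials, so it suffices to show polynomials are dense in $L^2(\R)$. Suppose $g\in L^2(\R)$ satisfies $\E[g(\bx)\bx^j]=0$ for all $j$. Define
\[
F(z) := \E\big[g(\bx)\, e^{z\bx}\big].
\]
By Cauchy--Schwarz, $\E[|g(\bx)|e^{|z||\bx|}] \le \|g\|\,\E[e^{2|z||\bx|}]^{1/2} < \infty$, so $F$ is entire in $z\in\mathbb{C}$. Its power-series coefficients are the moments $\E[g(\bx)\bx^j]/j!$, all of which vanish, so $F\equiv 0$. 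In particular $F(i\theta)=0$ for all real $\theta$, which says the Fourier transform of the integrable function $g\varphi$ is zero. Hence $g\varphi=0$ almost everywhere, so $g=0$ in $L^2(\R)$.

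\textbf{Tensorization to $n$ dimensions.} Orthonormality of $h_\alpha$ is immediate from the product structure of $N(0,I_n)$ and Fubini, since $\E[h_\alpha h_\beta] = \prod_i \E[h_{\alpha_i} h_{\beta_i}]$. For completeness I repeat the same argument with the multivariate entire function $F(z) := \E[g(\bx)e^{z\cdot \bx}]$ for $z\in\mathbb{C}^n$. The span of the $h_\alpha$ is exactly the set of all $n$-variate polynomials, so orthogonality of $g$ to every $h_\alpha$ kills all mixed moments and hence all Taylor coefficients of $F$. Thus $F\equiv 0$, and Fourier uniqueness on $\R^n$ gives $g=0$. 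The one delicate point throughout is the integrability bound that makes $F$ entire; the Gaussian weight $\varphi_n$ supplies it.
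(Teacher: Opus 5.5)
The paper gives no proof of this fact; it is imported from the cited textbook, so there is no in-paper argument to compare against. Your proof is correct and self-contained, and it follows the standard route. The identity $\E[e^{s\bx-s^2/2}e^{t\bx-t^2/2}]=e^{st}$ gives orthonormality. Completeness reduces to density of polynomials, which you prove by showing that the Fourier transform of the integrable function $g\varphi$ vanishes. The $n$-variate case follows from Fubini together with the same Fourier-uniqueness argument on $\R^n$.

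There are two small points of rigor.

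\textbf{The sum--expectation interchange in the orthonormality step.} This needs a pointwise bound that you assert but do not justify, namely $\sum_j \frac{|t|^j}{\sqrt{j!}}\,|h_j(x)| \le e^{|t||x|+t^2/2}$. It holds because
$$\sqrt{j!}\,h_j(x)=\sum_{m}\frac{(-1)^m j!}{m!\,(j-2m)!\,2^m}\,x^{j-2m}.$$
Replacing every coefficient by its absolute value turns the generating function $e^{tx-t^2/2}$ into $e^{|t||x|+t^2/2}$. Your stated dominating function $e^{(|s|+|t|)|\bx|}$ is therefore missing the harmless constant factor $e^{(s^2+t^2)/2}$, and this reason should be written out. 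Alternatively, you can differentiate the identity $j$ times in $s$ and $k$ times in $t$ at $0$, with differentiation under the expectation justified by local domination.

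\textbf{Holomorphy of $F$ is not needed.} For each fixed $z$, your Cauchy--Schwarz estimate already justifies expanding $e^{z\bx}$ termwise inside the expectation. This gives $F(z)=\sum_j z^j\,\E[g(\bx)\bx^j]/j!=0$ directly, and the multi-index version works the same way on $\R^n$.
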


Given a function $f \in L^2(\R^n)$ and $\alpha \in \N^n$, we define its \emph{Hermite coefficient on} $\alpha$ as $\wh{f}(\alpha) = \la f, h_\alpha \ra$. It follows that $f:\R^n\to\R$ can be uniquely expressed as 
\[
    f = \sum_{\alpha\in\N^n} \wh{f}(\alpha)h_\alpha
\]
with the equality holding in $L^2(\R^n)$; we will refer to this expansion as the \emph{Hermite expansion} of $f$. One can check that Parseval's and Plancharel's identities hold in this setting:
\[\abra{f,f} = \sum_{\alpha\in \N^n}\wh{f}(\alpha)^2 \qquad\text{and}\qquad \abra{f,g} = \sum_{\alpha\in \N^n}\wh{f}(\alpha)\wh{g}(\alpha).\]
It is also readily verified that $\E[f(\bx)] = \wh{f}(0^n)$ and $\Var[f(\bx)] = \sum_{\alpha\neq 0^n} \wh{f}(\alpha)^2$ where ${\bx\sim N(0,I_n)}$.
 
Finally, we write $\smash{\bW^{=k}[f]}$ for the \emph{Hermite weight of $f$ at level-$k$}, i.e. 
\[
    \bW^{=k}[f] := \sum_{|\alpha| = k} \wh{f}(\alpha)^2,
\]
with $\bW^{\leq k}[f]$ defined similarly. 

We will require bounds on the level-$1$ and $2$ weights of Boolean-valued functions over $\R^n$; towards this, we recall the \emph{level-$k$ inequalities} (see e.g. Proposition~11 of~\cite{de2024detecting} or Section~9.5 of~\cite{odonnell-book} for more information): 

\begin{proposition}[Proposition~11 of~\cite{de2024detecting}] \label{prop:level-k}
    Suppose $f:\R^n\to\zo$ and $\smash{1 \leq k \leq 2\log\pbra{\frac{1}{\vol(f)}}}$. Then we have 
    \[
        \bW^{\leq k}[f] \leq \vol(f)^2\cdot\pbra{C \log\pbra{\frac{1}{\vol(f)}}}^k
    \]
    where $C>0$ is an absolute constant independent of $n$. 
\end{proposition}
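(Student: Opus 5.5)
The plan is to prove this level-$k$ inequality by Gaussian hypercontractivity, following the standard Boolean-cube argument (Section~9.5 of \cite{odonnell-book}). Write $p := \vol(f)$ and let $f^{\le k} := \sum_{|\alpha|\le k}\wh{f}(\alpha)h_\alpha$, a polynomial of degree at most $k$. Since $f$ is $\zo$-valued, Plancherel gives
\[
\bW^{\leq k}[f] = \la f, f^{\le k}\ra = \Ex\big[f(\bx)\, f^{\le k}(\bx)\big].
\]

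Apply H\"older with conjugate exponents $q' = q/(q-1)$ and $q$:
\[
\bW^{\leq k}[f] \le \|f\|_{q'}\,\|f^{\le k}\|_q = p^{1-1/q}\,\|f^{\le k}\|_q.
\]
Gaussian hypercontractivity for degree-$k$ polynomials (Chapter~11 of \cite{odonnell-book}) gives $\|f^{\le k}\|_q \le (q-1)^{k/2}\|f^{\le k}\|_2 = (q-1)^{k/2}\sqrt{\bW^{\le k}[f]}$. Dividing through by $\sqrt{\bW^{\le k}[f]}$ and squaring yields
\[
\bW^{\le k}[f] \le p^{2-2/q}(q-1)^k.
\]

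Now choose $q = 1 + 2\log(1/p)/k$, which is at least $2$ because of the hypothesis $k \le 2\log(1/p)$. Then:
\begin{itemize}
\item $2/q \le 2/(1+2\log(1/p)/k) \le k/\log(1/p)$, so $p^{-2/q} \le e^{k}$;
\item $(q-1)^k = (2\log(1/p)/k)^k \le (2\log(1/p))^k$.
\end{itemize}
Combining gives $\bW^{\le k}[f]\le p^2 (2e\log(1/p))^k$, which is the claim with $C = 2e$, independent of $n$.

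The only delicate step is choosing $q$ so that the factor $p^{-2/q}$ is absorbed as $e^{O(k)}$ while $q$ remains a legitimate exponent ($q\ge 2$). This is exactly where the restriction $k\le 2\log(1/p)$ is used. The remaining steps are routine H\"older and hypercontractivity applications, and I don't expect obstacles there.
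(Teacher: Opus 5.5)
Your proof is correct and is the standard hypercontractivity argument that the paper has in mind: the paper gives no proof of its own, citing Proposition~11 of \cite{de2024detecting} and remarking that the inequality is an easy consequence of Gaussian hypercontractivity. Your H\"older-plus-$(2,q)$-hypercontractivity step with $q-1 = 2\log(1/p)/k$ is exactly the dual form of the $\|T_\rho f\|_2 \le \|f\|_{1+\rho^2}$ argument with $\rho^2 = k/(2\log(1/p))$ in Section~9.5 of \cite{odonnell-book}; if you keep the $1/k$ in $(q-1)^k$, you even recover the sharper bound $p^2\bigl(\frac{2e}{k}\log(1/p)\bigr)^k$.
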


We note that \Cref{prop:level-k} is an easy consequence of \emph{hypercontractivity} of the standard Gaussian random variables (cf. Chapters~9 and~11 of~\cite{odonnell-book}).

\subsection{Other preliminaries}
\label{subsec:mors-prelims}

We introduce the following notation:

\begin{definition}
\label{def:U}
Define the function $U: (0,1) \to [0, 1/2\pi]$ as
    \[
		U(p) := \pbra{\varphi(\Phi^{-1}(1-p))}^2.
    \]
\end{definition}

Note that $U(\cdot)$ is a function that is symmetric around $1/2$ with $U(1/2) = 1/2\pi$. 
It is easy to verify (using integration by parts) that $U(p)$ is the level-$1$ Hermite weight of an LTF with volume $p$ (or equivalently, of volume $1-p$):

\begin{fact}[Proposition~25 of~\cite{MORS10}]
\label{fact:meaning-of-U}
    Let $p \in (0,1)$ and let $t := \Phi^{-1}(p)$. Then 
    \[
        U(1-p) = \Ex_{\bx\sim N(0, 1)}\sbra{\mathbf{1}\cbra{\bx\geq t}\cdot \bx}^2 = \bW^{=1}[f]
    \]
    where $f:\R^n\to\R$ is the halfspace $f(x) = \mathbf{1}\{x\cdot v \geq t\}$ for any unit vector $v\in S^{n-1}$. 
\end{fact}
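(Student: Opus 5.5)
The plan is to compute both sides directly and reduce everything to a one-dimensional calculation, using rotation invariance of $N(0,I_n)$. First, $\bW^{=1}[f]$ depends only on the degree-one Hermite coefficients $\wh{f}(e_i) = \Ex[f(\bx)\bx_i]$. Collecting them into the vector $\mu := \Ex_{\bx\sim N(0,I_n)}[f(\bx)\bx] \in \R^n$ gives $\bW^{=1}[f] = \|\mu\|^2$.

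For $f(x) = \mathbf{1}\{x\cdot v \geq t\}$ with $v$ a unit vector, complete $v$ to an orthonormal basis and decompose $\bx = (\bx\cdot v)v + \bx^\perp$. Here $\bg := \bx\cdot v \sim N(0,1)$ is independent of $\bx^\perp$, and $\bx^\perp$ has mean zero. Hence $\mu = \Ex[\mathbf{1}\{\bg\geq t\}\bg]\, v + \Ex[\mathbf{1}\{\bg\ge t\}]\,\Ex[\bx^\perp] = \Ex[\mathbf{1}\{\bg\geq t\}\bg]\,v$. This gives $\bW^{=1}[f] = \Ex_{\bg\sim N(0,1)}[\mathbf{1}\{\bg\geq t\}\bg]^2$, which is the middle equality of the statement.

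It remains to evaluate the one-dimensional expectation. Since $\varphi'(x) = -x\varphi(x)$, we have $\int_t^\infty x\varphi(x)\,dx = \varphi(t)$; this is the ``integration by parts'' mentioned before the statement. So the middle quantity equals $\varphi(t)^2 = \varphi(\Phi^{-1}(p))^2$. Comparing with $U(1-p) = \varphi(\Phi^{-1}(1-(1-p)))^2 = \varphi(\Phi^{-1}(p))^2$ by \Cref{def:U} gives the first equality.

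One last consistency check concerns the volume. The halfspace $\{x\cdot v\geq t\}$ has volume $1-\Phi(t) = 1-p$. This is compatible with the surrounding remark that $U$ gives the level-1 weight of an LTF of volume $p$ or $1-p$, because $\varphi(\Phi^{-1}(q)) = \varphi(\Phi^{-1}(1-q))$ by the symmetry of $\varphi$.

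I do not expect any real obstacle here. The only points needing care are the independence and mean-zero argument that kills the orthogonal component of $\mu$, and keeping the $p$ versus $1-p$ bookkeeping consistent with the definition of $U$.
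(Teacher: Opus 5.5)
Your proposal is correct and follows essentially the same route as the paper. The paper gives no detailed proof: it cites Proposition~25 of MORS10 and remarks that the fact follows by integration by parts. That is exactly your argument, namely using rotation invariance to reduce $\bW^{=1}[f]=\|\Ex[f(\bx)\bx]\|^2$ to the one-dimensional identity $\int_t^\infty x\varphi(x)\,dx=\varphi(t)$. Your bookkeeping $U(1-p)=\varphi(\Phi^{-1}(p))^2=\varphi(t)^2$ is also right.
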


We will also need the following bound on $U$ from~\Cref{fact:halfspace-sa}: 

\begin{fact}[Proposition~24 of~\cite{MORS10}] \label{fact:U-asymptotics}
For any $p\in (0,1)$, $U(p) = \Theta(p^2 \ln(1/\min\{p, 1-p\}))$.
\end{fact}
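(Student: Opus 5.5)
We must prove \Cref{fact:U-asymptotics}: $U(p) = \Theta(p^2\ln(1/\min\{p,1-p\}))$ for every $p\in(0,1)$.

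\textbf{Step 1 (symmetry).} Since $\varphi$ is even and $\Phi^{-1}(1-p) = -\Phi^{-1}(p)$, we have $U(p)=U(1-p)$. So it suffices to treat $p\in(0,1/2]$, where $\min\{p,1-p\}=p$.

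\textbf{Step 2 (reduction to the isoperimetric function).} For $p\le 1/2$,
\[
U(p)=\varphi(\Phi^{-1}(1-p))^2=\varphi(\Phi^{-1}(p))^2 = I(p)^2 .
\]
The claim is therefore equivalent to $I(p)=\Theta\bigl(p\sqrt{\ln(1/p)}\bigr)$ on $(0,1/2]$. This is exactly the asymptotic already recorded in \Cref{fact:halfspace-sa}, and squaring it gives the result. For completeness, I would rederive it as follows.

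\textbf{Step 3 (small $p$).} Suppose $p\le p_0$ for a small absolute constant $p_0$. Write $r:=-\Phi^{-1}(p)>0$, so that $p=\Phi(-r)$ and $r$ is bounded below by an absolute constant.

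First, \Cref{fact:Gaussian-cdf-estimate} gives $p=\Theta(e^{-r^2/2}/r)=\Theta(\varphi(r)/r)$. Hence $\varphi(r)=\Theta(rp)$.

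Second, we need $r=\Theta(\sqrt{\ln(1/p)})$. Take logarithms in $p=\Theta(e^{-r^2/2}/r)$ to get $\ln(1/p)=r^2/2+\ln r+O(1)$. For $r$ at least a large enough constant, this is $\Theta(r^2)$, which is what we need.

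Combining the two, $I(p)=\varphi(r)=\Theta(p\sqrt{\ln(1/p)})$. Squaring gives $U(p)=\Theta(p^2\ln(1/p))$.

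The only delicate point in this step is checking that the $\ln r$ term does not disturb $\ln(1/p)=\Theta(r^2)$. This is why $p_0$ must be chosen small enough that $r$ exceeds a suitable constant.

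\textbf{Step 4 (moderate $p$).} Suppose $p\in[p_0,1/2]$. Then $\Phi^{-1}(p)$ lies in a compact interval, so $U(p)$ is bounded above and below by positive absolute constants. The same holds for $p^2\ln(1/p)$, since $\ln(1/p)\ge \ln 2$. Hence the two quantities agree up to constant factors in this range as well, which completes the proof.
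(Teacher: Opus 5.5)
Your Steps 2--4 are correct for $p\in(0,1/2]$ and follow the route the paper intends: write $U(p)=I(p)^2$ and invoke the Gaussian tail estimate behind \Cref{fact:halfspace-sa}. The gap is in Step 1. The symmetry $U(p)=U(1-p)$ lets you restrict to $p\le 1/2$ only if the right-hand side is also invariant under $p\mapsto 1-p$, and $p^2\ln(1/\min\{p,1-p\})$ is not. For $p>1/2$, what your argument actually gives (applied to $q=1-p$) is $U(p)=\Theta\bigl((1-p)^2\ln(1/(1-p))\bigr)$. This differs from the claimed $p^2\ln(1/(1-p))$ by the factor $(p/(1-p))^2$, which is unbounded as $p\to 1$.

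This cannot be repaired, because the statement as literally written is false near $p=1$. The function $U$ takes values in $[0,1/2\pi]$; in fact $U(p)=I(1-p)^2\to 0$. By contrast, $p^2\ln(1/(1-p))\to\infty$ as $p\to 1^-$.

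The correct version replaces $p^2$ by $\min\{p,1-p\}^2$. Equivalently, keep the stated form but restrict to $p\le 1-c$ for an absolute constant $c>0$; the range $[1/2,1-c]$ is then handled exactly like your Step 4, with both sides bounded above and below by constants. With a symmetric right-hand side your Step 1 becomes legitimate and your proof is complete.

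The slip is harmless for the paper, which only uses the fact for $p\le 1/2$ (e.g.\ to get $V(p)=\Theta(\ln(1/p))$). However, \Cref{fact:halfspace-sa}, which you cite in Step 2, has the same issue: its factor $p$ should be $\min\{p,1-p\}$. So it cannot be used to cover $p>1/2$ either.
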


\subsection{The main technical lemma}
\label{sec:our-mors-26}

The analysis of our sample-based testing algorithm, \HermiteTest, is described in \Cref{alg:MORS}. It   relies crucially on the following lemma:

\begin{lemma}
\label{lem:our-MORS}
	Suppose $f:\R^n\to\zo$ is a function with $p := \E[f]$ satisfying $0 < p < 0.1$, and let $t := \Phi^{-1}(1-p)$. For {$\ell \leq (10t)^{-1}$}, if 
	\[
    U(1-p)- \bW^{=1}[f]  \leq (\ell\,p\, t)^2
	\]
	holds, then $\reldist(f, g) \leq O(\ell\,t)$ for some LTF $g:\R^n\to\zo$. 
\end{lemma}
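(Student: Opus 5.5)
Let $v\in\R^n$ be the level-1 Hermite vector of $f$, with $v_i=\E[f(\bx)\bx_i]$, so that $\bW^{=1}[f]=\|v\|^2$. If $v=0$ then $\bW^{=1}[f]=0$. In that case the hypothesis gives $U(1-p)\le(\ell p t)^2\le p^2/100$. This contradicts \Cref{fact:U-asymptotics}, which gives $U(1-p)=\Theta(p^2 t^2)$ with $t\ge\Phi^{-1}(0.9)$ bounded below, provided constants are handled with care. So assume $v\ne0$ and set $w=v/\|v\|$. The candidate is the LTF $g(x)=\mathbf{1}\{w\cdot x\ge t\}$, which has $\E[g]=p$ exactly. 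By \Cref{fact:meaning-of-U}, $\E[g(\bx)\,w\cdot\bx]=\sqrt{U(1-p)}=\varphi(t)$.

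The key step is a variational comparison of $f$ and $g$ along the direction $w$. Write $\bz=w\cdot\bx\sim N(0,1)$. Since $f$ and $g$ have the same mean, the pointwise inequality $(g-f)(\bz-t)\ge 0$ holds, because $g=1$ exactly when $\bz\ge t$. It follows that
\[
\E[(g-f)\bz]=\E[(g-f)(\bz-t)]=\E\big[|g-f|\cdot|\bz-t|\big].
\]
The left side equals $\varphi(t)-\|v\|=\sqrt{U(1-p)}-\sqrt{\bW^{=1}[f]}$. Using $\sqrt a-\sqrt b=(a-b)/(\sqrt a+\sqrt b)\le(a-b)/\sqrt a$ together with the hypothesis, this is at most $(\ell p t)^2/\varphi(t)$. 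Since $\varphi(t)=\Theta(pt)$ by \Cref{fact:Gaussian-cdf-estimate} (the same fact as $U(1-p)=\Theta(p^2t^2)$), we get
\[
\E\big[|f-g|\cdot|\bz-t|\big]\le O(\ell^2 p t).
\]

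It remains to convert this weighted bound into a bound on $\E|f-g|$. Let $\delta=\E|f-g|=\Vol(f\triangle g)$. The disagreement set has Gaussian measure $\delta$ but is weighted by $|\bz-t|$, so it is cheapest when concentrated near the hyperplane $\bz=t$. I would do a bathtub argument in the one-dimensional marginal of $\bz$. The mass of $f\triangle g$ lying in the slab $|\bz-t|\le s$ is at most $\Pr[|\bz-t|\le s]$. For $s\le 1/t$ this is $O(s\,\varphi(t))=O(spt)$, since $\varphi$ varies by at most a constant factor on $[t-1/t,\,t+1/t]$. Choose $s=c\,\delta/(pt)$ so that the slab captures at most $\delta/2$ of the disagreement mass. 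Then at least $\delta/2$ of the mass sits at distance at least $s$ from the hyperplane, which gives
\[
\frac{\delta}{2}\cdot\frac{c\,\delta}{pt}\le O(\ell^2 pt),
\qquad\text{hence}\qquad \delta\le O(\ell\, p\, t).
\]
Dividing by $p$ yields $\reldist(f,g)\le O(\ell t)$, as claimed.

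The main obstacle is checking the regime in which the slab estimate is valid, namely that $s\le 1/t$. Here the assumption $\ell\le(10t)^{-1}$ is used. If $\delta\le C\ell p t$ already, we are done. Otherwise we may argue with $s$ capped at $1/t$: in the capped case $\delta\lesssim pt\cdot(1/t)=p$, but the disagreement outside the slab still contributes $\gtrsim \delta/t$ to the weighted integral. This forces $\delta\lesssim \ell^2pt^2\le \ell p t/10$, so the cap is consistent. I would also confirm that the $(\sqrt a+\sqrt b)$ denominator step and the constant in $\varphi(t)=\Theta(pt)$ hold uniformly for all $p<0.1$.
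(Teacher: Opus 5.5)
Your proposal is correct and is essentially the paper's proof: the same LTF $g=\mathbf{1}\{w\cdot x\ge t\}$ built from the normalized degree-1 Hermite vector, the same identity $\E\big[|f-g|\cdot|w\cdot\bx-t|\big]=\sqrt{U(1-p)}-\sqrt{\bW^{=1}[f]}\le O(\ell^2pt)$, and a slab argument to turn this into $\Vol(f\triangle g)=O(\ell pt)$. The paper uses the fixed slab $|w\cdot\bx-t|\le\ell/2$ (of mass $O(\ell pt)$ because $\ell t\le 0.1$) rather than your adaptive width $s=c\delta/(pt)$; in your version the concern about $s\le 1/t$ is moot, since $\delta\le\Vol(f)+\Vol(g)=2p$ always gives $s\le 2c/t$.
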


\Cref{lem:our-MORS} can be viewed as a relative-error strengthening of Theorem~26 of~\cite{MORS10}. Our proof is inspired by (and closely follows) that of Theorem~26 from~\cite{MORS10}; as will be clear from the proof, the coefficients of the LTF $g$ can be ``read off'' from degree-1 Hermite coefficients of $f$.

\begin{proof}
	Consider the function $h : \R^n \to \R$ 
 defined by 
	\[
		h(x) := \frac{1}{\sigma}\sumi \wh{f}(e_i)x_i - t \qquad\text{where}\qquad \sigma := \sqrt{\bW^{=1}[f]}.
	\]
	Note that 
        \begin{equation} \label{eq:vol-h}
            \Ex_{\bx\sim N(0, I_n)}\sbra{\mathbf{1}\cbra{h(\bx) \geq 0}} = \Ex_{\bx\sim N(0, I_n)}\sbra{\frac{1}{\sigma}\sumi \wh{f}(e_i)\bx_i \geq t} = \Ex_{\bx\sim N(0,1)}\sbra{\bx \geq t} = p,
        \end{equation}
        since $\sigma^2 = \sumi \wh{f}(e_i)^2$. 
        Let 
	\[
		\err := \Prx_{\bx\sim N(0,I_n)}\sbra{\Indicator\cbra{h(\bx) \geq 0} \neq f(\bx)}.
	\]
	We will show that $\err \leq O(p\,t\,\ell)$ which immediately implies the desired result. For this purpose,~by Parseval's formula, we have 
	\begin{equation} \label{eq:mors-26-fh-corr}
		\Ex_{\bx\sim N(0, I_n)}\big[f(\bx)\cdot h(\bx)\big] = \sigma - tp. 
	\end{equation}
	Furthermore, 
	\begin{align}
		\Ex_{\bx\sim N(0, I_n)}\sbra{h(\bx)\cdot\Indicator\cbra{h(\bx) \geq 0}}
		&= \Ex_{\bx\sim N(0,I_n)}\sbra{\pbra{\frac{1}{\sigma}\sumi \wh{f}(e_i)\bx_i}\Indicator\cbra{h(\bx) \geq 0}} -  tp \nonumber\\[0.8ex]		
		&= \Ex_{\bx\sim N(0,1)}\big[\bx\cdot \Indicator\cbra{\bx\geq t}\big]- tp\label{eq:mors-26-rot-inv-app} \\[0.4ex]
		&= \sqrt{U(1-p)} - tp,\label{eq:mors-26-h-ind-corr}
	\end{align}
	where~\Cref{eq:mors-26-rot-inv-app} uses the fact that a linear combination of independent Gaussian random variables is itself a Gaussian (with variance equal to the sum of the squared weights, in this case~$1$)  and~\Cref{eq:mors-26-rot-inv-app} relies on~\Cref{fact:meaning-of-U}.
	
    Note that 
	\begin{equation} \label{eq:spider}
		h(g)\pbra{\Indicator\cbra{h(g) \geq 0} - f(g)}
		= 
		\begin{cases}
			|h(g)| & \Indicator\cbra{h(g) \geq 0} \neq f(g) \\ 
			0 & \Indicator\cbra{h(g) \geq 0} = f(g)
		\end{cases}.
	\end{equation}
	Combining~\Cref{eq:mors-26-fh-corr,eq:mors-26-h-ind-corr}, we get 
	\begin{align}
		\Ex_{\bx\sim N(0,I_n)}\sbra{h(\bx)\pbra{\Indicator\cbra{h(\bx) \geq 0} - f(\bx)}} 
		&= \sqrt{U(1-p)} - \sigma \nonumber \\[-1.5ex]
		&\leq \frac{(p\,t\,\ell)^2}{\sqrt{U(1-p)}} \label{eq:mors-26-fact-8-app} \\[0.3ex]
		&\leq O(p\,t\,\ell^2) \label{eq:mors-26-gaussian-tails-app}
	\end{align}
	where~\Cref{eq:mors-26-fact-8-app} used the following simple \Cref{fact:simple} and \Cref{eq:mors-26-gaussian-tails-app} used \Cref{prop:gaussian-tails}:
 \[
    \sqrt{U(1-p)} 
    = \phi\circ\Phi^{-1}(p) 
    = \phi(t)
    \geq t\cdot \Phi(-t) 
    = t p.
 \]
\begin{fact} \label{fact:simple}
Let $a>0$ and $b,\eps\ge 0$ such that $a-b\le  \eps $. Then we have $\sqrt{a}-\sqrt{b}\le \eps\big/\sqrt{a}$.
\end{fact}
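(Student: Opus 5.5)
Since $a>0$ and $b\ge 0$, I rationalize the difference of square roots and bound the denominator from below. The identity I use is
\[
\sqrt{a}-\sqrt{b} \;=\; \frac{a-b}{\sqrt{a}+\sqrt{b}},
\]
whose denominator is strictly positive because $a>0$.

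There are two cases, according to the sign of $a-b$.

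If $a-b\le 0$, then $\sqrt{a}-\sqrt{b}\le 0\le \eps/\sqrt{a}$, using $\eps\ge 0$. The claim holds trivially.

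If $a-b>0$, the numerator satisfies $0< a-b\le \eps$ by hypothesis. The denominator satisfies $\sqrt{a}+\sqrt{b}\ge \sqrt{a}>0$, since $\sqrt{b}\ge 0$. Because the numerator is positive, shrinking the denominator to $\sqrt a$ only increases the fraction. Hence
\[
\sqrt{a}-\sqrt{b}\;\le\; \frac{a-b}{\sqrt a}\;\le\; \frac{\eps}{\sqrt a}.
\]

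No step here is a real obstacle. The only care needed is to handle the sign of $a-b$ separately. Otherwise the step replacing the denominator by the smaller quantity $\sqrt a$ could reverse the inequality, which is why the case $a-b\le 0$ is dispatched first.
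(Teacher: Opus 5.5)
Your proof is correct and is essentially the paper's argument. Both proofs dispatch the case where $\sqrt{a}-\sqrt{b}\le 0$ as trivial and otherwise use $a-b\ge\sqrt{a}\,(\sqrt{a}-\sqrt{b})$: you obtain this by rationalizing and dropping $\sqrt{b}$ from the denominator, while the paper writes it as $a-b\ge a-\sqrt{ab}$.
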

\begin{proof}
The claim is trivial if $b>a$ since the LHS is negative, so 
  we assume without loss of generality that $b\le a$.
Then we have $\eps\ge a-b\ge a-\sqrt{ab}=\sqrt{a}(\sqrt{a}-\sqrt{b})$ and the inequality follows.
\end{proof}
 
We will next show that 
\begin{equation} \label{eq:mors-26-final-goal}
    \Prx_{\bx\sim N(0,I_n)}\sbra{|h(\bx)| \leq \frac{\ell}{2}} = O(p\,t\,\ell), 
\end{equation}
which completes the proof. To see this, it follows from~\Cref{eq:spider,eq:mors-26-gaussian-tails-app,eq:mors-26-final-goal} that
\[
    \frac{\ell}{2}\cdot\pbra{\err - \Prx_{\bx\sim N(0,I_n)}\sbra{|h(\bx)| \leq \frac{\ell}{2}} } \leq O(p\,t\,\ell^2) \qquad\text{and so}\qquad \err = O(p\,t\,\ell)
\] 
as desired. 

We will now establish~\Cref{eq:mors-26-final-goal}. Note that  
\begin{align}
    \Prx_{\bx\sim N(0,I_n)}\sbra{|h(\bx)| \leq \frac{\ell}{2}} 
    &= \Prx_{\bx\sim N(0,I_n)}\sbra{\frac{1}{\sigma}\sumi \wh{f}(e_i)\bx_i \in \sbra{t \pm \frac{\ell}{2}}} \nonumber \\[1ex]
    &= \Prx_{\bx\sim N(0, 1)}\sbra{\bx \in \sbra{t \pm \frac{\ell}{2}}} \nonumber \\[-0.5ex]
    &\leq {\ell}\cdot\phi\pbra{t - \frac{\ell}{2}} = \ell\cdot\phi(t)\cdot\frac{\phi\pbra{t - \frac{\ell}{2}}}{\phi(t)}. \label{eq:mors-26-orange}
\end{align}
\Cref{eq:mors-26-orange} is thanks to {$t \geq \ell/2$} (by assumption on $t$ and $\ell$)
and the fact that $\phi(\cdot)$ is decreasing on $(0, \infty)$. 
Note that 
\[
    \frac{\phi\pbra{t - \frac{\ell}{2}}}{\phi(t)} = {\exp\pbra{- \frac{\ell^2}{8} + \frac{t\ell}{2}} = O(1)}
\]
thanks to the assumption that {$t\ell \leq 0.1$}. Plugging this back into~\Cref{eq:mors-26-orange} and using 
\[
    \phi(t)\pbra{\frac{1}{t} - \frac{1}{t^3}} \leq 1-\Phi(t) = p,
\]
which is a consequence of~\Cref{prop:gaussian-tails}, we get 
\[
    \Prx_{\bx\sim N(0,I_n)}\sbra{|h(\bx)| \leq \frac{\ell}{2}} \leq O(\ell\cdot\phi(t)) = O\pbra{\frac{\ell\, p\, t}{1- ({1}/{t^2})}} = O(\ell\,p\,t),
\]
where we relied on the fact that $p \leq {0.1}$ (and so $t \geq 1.01$), which establishes~\Cref{eq:mors-26-final-goal}.
\end{proof}

\subsection{Proof of \Cref{thm:Gaussian2}}
\label{subsec:proof-our-MORS}

We record the following easy corollary of \Cref{fact:important-Gcdf}: 

\begin{corollary}  \label{cor:U-lipschitzness}
    Suppose that $\hat{p} \in [(1+\tau)^{-1}p, (1+\tau)p]$. Then we have
    \[
        \frac{U(\wh{p})}{\hat{p}^2} \in \sbra{\frac{1}{(1+\tau)^4}\frac{U(p)}{p^2},  {(1+\tau)^4}\frac{U(p)}{p^2}}.
    \]
\end{corollary}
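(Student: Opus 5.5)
The plan is to write $U(p)/p^2 = \big(\varphi(\Phi^{-1}(1-p))/p\big)^2 = \big(\varphi(\Phi^{-1}(p))/p\big)^2 = R(p)^2$, using the symmetry of $\varphi$ and $\Phi^{-1}(1-p) = -\Phi^{-1}(p)$, where $R$ is the function from \Cref{claim:facts-about-Ipp}. So it suffices to show that $R(\hat p)/R(p)$ lies in $[(1+\tau)^{-2},(1+\tau)^{2}]$. Squaring then gives the exponent $4$. This is in fact stronger than needed.

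First take the case $\hat p \le p$. Set $q := \hat p$, so $p/q \le 1+\tau$. \Cref{fact:important-Gcdf} gives $I(p) \ge I(q) \ge I(p)/(1+\tau)$. Dividing by $q \in [p/(1+\tau), p]$ yields
\[
R(q) = \frac{I(q)}{q} \le \frac{(1+\tau)\,I(p)}{p} = (1+\tau)R(p),
\]
and
\[
R(q) \ge \frac{I(p)}{(1+\tau)p} = \frac{R(p)}{1+\tau}.
\]
(In fact the monotonicity in \Cref{claim:facts-about-Ipp} already gives $R(q) \ge R(p)$.) The case $\hat p \ge p$ is identical with the roles of $p$ and $\hat p$ exchanged. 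Hence $R(\hat p)/R(p) \in [(1+\tau)^{-1}, 1+\tau]$, and squaring gives the claim with exponent $2$, which certainly implies exponent $4$.

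The main obstacle is that \Cref{fact:important-Gcdf} requires both arguments to be at most $1/2$. In our setting $p \le 0.1$, so this holds as long as $\hat p \le (1+\tau)p \le 1/2$, which is true whenever $\tau$ is small (e.g.\ $\tau \le 1$, as in our applications). If one wants the statement for $\hat p$ beyond $1/2$, I would use the symmetry $U(x) = U(1-x)$ and handle that range separately. It is simpler, though, to note this restriction, since it is all that \HermiteTest\ needs.
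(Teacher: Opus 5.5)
Your proof is correct and follows the paper's route: both reduce the claim to \Cref{fact:important-Gcdf} via $U(x)=I(x)^2$. The paper bounds $U(\hat p)/U(p)$ and $p^2/\hat p^2$ separately, losing $(1+\tau)^{\pm 2}$ each, whereas you pair the monotonicity half of that fact against the bound on $1/\hat p$ and so get the sharper exponent $2$.

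Your caveat that $p,\hat p\le 1/2$ is exactly the hypothesis the paper uses implicitly; it holds here since $p\le 0.1$ and $\tau$ is small. It cannot be removed by the symmetry $U(x)=U(1-x)$ you suggest. As $p\to 1$ with $\hat p=p/(1+\tau)$, one has $I(p)\to 0$ while $I(\hat p)$ stays bounded away from $0$, so the ratio of the two sides blows up.
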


\begin{proof}
    As an immediate consequence of~\Cref{fact:important-Gcdf}, we have 
    \[
        U(\hat{p}) \in \sbra{(1+\tau)^{-2}U(p), (1+\tau)^2U(p)}.
    \]
    The result now follows thanks to $\hat{p} \in [(1+\tau)^{-1}p, (1+\tau)p]$. 
\end{proof}

We can now turn to the proof of~\Cref{thm:Gaussian2}:

\begin{algorithm}
\caption{The sample-based relative-error LTF tester over Gaussian space.}
\label{alg:MORS}
\vspace{0.5em}
\textbf{Input:} $\SAMP(f)$, $\hat{p} \in [(1+\eta)^{-1}p, (1+\eta)p]$ where $p = \vol(f) \leq 0.1$, $\eps \in (0,1]$ and $\eta = c_2\eps^2/\log(1/p)$ for a suitably small absolute constant $c_2>0$.\\[0.5em]
\textbf{Output:} ``Accept'' or ``reject'' 

\ 

\HermiteTest$\pbra{\SAMP(f), \hat{p}, \eps}$:
\begin{enumerate}
    \item Set 
    \[
        m :={\frac {\sqrt{n}}{\eps^2}} + {\frac {\log^2(1/p)}{\eps^4}},
    \]
    and draw $\x{1}, \ldots, \x{m}, \y{1}, \ldots, \y{m} \leftarrow \SAMP(f)$. 
    
    \item Compute 
    	\[
    		\bT := \frac{1}{m^2} \sum_{i, j = 1}^m \x{i}\cdot \y{j}. 
    	\]
    \item Let 
    \[
        \tau := \frac{U(\wh{p})}{\wh{p}^2},
    \]
    and output ``accept'' if $|\bT - \tau| \leq c\eps^2$ for a constant $c$ implicit in the proof of\newline \Cref{thm:Gaussian2}; output ``reject'' otherwise. 
\end{enumerate}

\end{algorithm}

\begin{proof}[Proof of~\Cref{thm:Gaussian2}]
Thanks to the discussion at the start of~\Cref{sec:rel-err-MORS}, we may assume without loss of generality that $p = \Vol(f) < 0.1$. 

We first characterize the expectation of the estimator $\bT$. To do this, note that 
\begin{align}
	\Ex[\bT] 
	&= \frac{1}{m^2}\Ex\sbra{\sum_{i, j=1}^m \x{i}\cdot\y{j}} \nonumber \\
	&= \sum_{\ell = 1}^n \Ex_{\bx, \by \sim \SAMP(f)}\sbra{\bx_\ell\cdot\by_\ell} \nonumber \\ 
	&= \sum_{\ell = 1}^n \frac{1}{p^2}\Ex_{\bx,\by\sim N(0, I_n)}\sbra{f(\bx)\bx_\ell \cdot f(\by) \by_\ell} \nonumber \\
	&= \sum_{\ell = 1}^n \pbra{\frac{1}{p}\cdot\Ex_{\bx\sim N(0, I_n)}\sbra{f(\bx)\cdot\bx_\ell}}^2 \label{eq:hermite-calc-indep} \\[0.6ex]
	&= \frac{\bW^{=1}[f]}{p^2}, \label{eq:dlns-mean-formula}
\end{align}
where \Cref{eq:hermite-calc-indep} relies on the independence of the samples from $\SAMP(f)$. 

Next, we will show that for any $f:\R^n\to\zo$ with $\vol(f) = p$, we have 
\begin{equation} \label{eq:var-calc-goal}
    \Varx\sbra{\bT} = O\pbra{\frac{1}{m}\log^2\pbra{\frac{1}{p}} + \frac{1}{m^2}\pbra{n + \log^2\pbra{\frac{1}{p}}}} = O\pbra{\frac{\log^2(1/p)}{m} + \frac{n}{m^2} }.
\end{equation}
In particular, for $m$ as in the statement of~\Cref{thm:Gaussian2}, we have
\begin{equation}
    \Var[\bT] \leq O(\eps^4). \label{eq:actual-variance-bound-eps-fourth}
\end{equation}

We first show how \Cref{eq:var-calc-goal} implies \Cref{thm:Gaussian2} before turning to its proof. 
Note that: 
\begin{itemize}
    \item If $f$ is an LTF,  then using \Cref{eq:dlns-mean-formula}, \Cref{cor:U-lipschitzness}, \Cref{fact:meaning-of-U} and~\Cref{fact:U-asymptotics}, we have 
    \[
        \abs{\tau - \E [{\bT}]} 
        = \abs{\frac{U(\hat{p})}{\hat{p}^2} -\frac{U(p)}{p^2}} 
        \leq  O(\eta)\cdot\frac{U(p)}{p^2} = O(\eta)\cdot\log\pbra{\frac{1}{p}}.
    \]

\item If $\reldist(f, g) \geq \eps$ for every LTF $g:\R^n\to\zo$, then it follows from the triangle inequality, \Cref{eq:dlns-mean-formula}, the contrapositive of \Cref{lem:our-MORS}, \Cref{cor:U-lipschitzness}, \Cref{fact:meaning-of-U} and~\Cref{fact:U-asymptotics} that 
\[
    \abs{\tau- \E[\bT]} \geq \abs{ \frac{U(p)}{p^2} - \E[\bT]} - \abs{\frac{U(\hat{p})}{\hat{p}^2} -\frac{U(p)}{p^2}} = \Omega(\eps^2) - O(\eta)\cdot\log\pbra{\frac{1}{p}},
\]
where we set $\ell$ in \Cref{lem:our-MORS} to be 
$\eps/(10t)$ so that $\ell\le (10t)^{-1}$ holds.
\end{itemize}
The theorem follows by Chebyshev's inequality thanks to~\Cref{eq:actual-variance-bound-eps-fourth} and  $\eta = c_2\eps^2/\log(1/p)$.

The remainder of the proof will establish~\Cref{eq:var-calc-goal}.
Our proof will follow a similar strategy to that of Theorem~13 of~\cite{de2024detecting}. 
First, note that
\[
    \Varx[\bT] = \frac{1}{m^4}\cdot\sum_{i,j,k,\ell = 1}^T \Cov\pbra{\x{i}\cdot\y{j}, \x{k}\cdot\y{\ell}}
\]
where $\Cov(\bX, \bY) = \E[\bX\cdot\bY] - \E[\bX]\cdot\E[\bY]$ is the covariance of the random variables $\bX$ and $\bY$. 
In particular, we have 
\begin{align}
    \Cov\pbra{\x{i}\cdot\y{j}, \x{k}\cdot\y{\ell}}
    &= \Ex\sbra{(\x{i}\cdot\y{j})\cdot (\x{k}\cdot\y{\ell})} - \Ex\sbra{(\x{i}\cdot\y{j})}^2         \nonumber \\
    &\leq \Ex\sbra{(\x{i}\cdot\y{j})\cdot (\x{k}\cdot\y{\ell})}      \label{eq:drop-covariance-second-term}
\end{align}
where all the random variables are drawn from  $\SAMP(f)$.
Note that
\begin{flushleft}
\begin{itemize}
    \item If $i \neq k$ and $j \neq \ell$, then $\Cov\pbra{\x{i}\cdot\y{j}, \x{k}\cdot\y{\ell}} = 0$ thanks to independence.

    \item If $i = k$ but $j \neq \ell$, then 
    \begin{align}
        \Ex\sbra{(\x{i}\cdot\y{j})\cdot (\x{k}\cdot\y{\ell})}
        &= \Ex\sbra{\y{j}\cdot\underbrace{\Ex\sbra{\x{i}\cdot(\x{i})^\top}}_{=:A}\cdot \y{\ell}} \nonumber \\ 
        &= \Ex\sbra{\y{j}} A\Ex\sbra{\y{\ell}} \nonumber \\
        &\leq \vabs{\Ex\sbra{\y{\ell}}}_2^2 \cdot \|A\|_{\mathrm{op}} \nonumber \\
        &= \frac{\bW^{=1}[f]}{p^2}\cdot \|A\|_{\mathrm{op}}\label{eq:lego}
    \end{align}
    where~\Cref{eq:lego} relies on~\Cref{eq:dlns-mean-formula}. 
    Note that $A=(A_{i,j})$ is an $n\times n$ matrix with 
    \begin{equation} \label{eq:A-def}
        A_{i,j} = \Ex_{\bx\sim\SAMP(f)}\sbra{\bx_i\cdot\bx_j}.
    \end{equation}
    We will control $\|A\|_{\mathrm{op}}$ shortly; for now, note that the case when $i \neq k$ but $j = \ell$ is identical by symmetry.

    \item If $i = k$ and $j = \ell$, then 
    \begin{align}
         \Ex\sbra{(\x{i}\cdot\y{j})\cdot (\x{k}\cdot\y{\ell})} 
         &= \Ex\sbra{(\bx\cdot\by)^2} \nonumber \\
         &= \Ex\sbra{\sum_{i,j=1}^n \bx_i\bx_j \by_i\by_j} \nonumber \\
         &= \sum_{i,j=1}^n \Ex\sbra{\bx_i\bx_j}^2 = \|A\|_F^2. \label{eq:case-3-frob}
    \end{align}
\end{itemize}\end{flushleft}
It follows from \Cref{eq:lego,eq:case-3-frob} that 
\begin{equation} \label{eq:var-almost-there}
    \Var[\bT] \leq \frac{1}{m^4}\pbra{m^3\pbra{\frac{\bW^{=1}[f]\cdot\|A\|_{\mathrm{op}}}{p^2}} + m^2\|A\|_F^2}.
\end{equation}

We will rely on the level-$k$ inequality (\Cref{prop:level-k}) to control both $\|A\|_{\mathrm{op}}$ as well as $\|A\|_F$, starting with the former.
First, note that 
\[
    \|A\|_{\mathrm{op}} \leq \vabs{\mathrm{diag}(A)}_{\mathrm{op}} + \vabs{A - \mathrm{diag}(A)}_{\mathrm{op}} \leq\vabs{\mathrm{diag}(A)}_{\mathrm{op}} + \vabs{A - \mathrm{diag}(A)}_F
\]
where $\mathrm{diag}(A)$ is the $n\times n$ diagonal matrix given by the diagonal entries of $A$.  

We will first give an upper bound on $\vabs{\mathrm{diag}(A)}_{\mathrm{op}}$:
\begin{align*}
    {\mathrm{diag}(A)}_{i,i} = \Ex_{\bx\sim\SAMP(f)}\sbra{\bx_i^2} 
    &= \frac{1}{p} \Ex_{\bx\sim N(0, I_n)}\sbra{f(\bx)\cdot\bx_i^2}.
\end{align*}    
Given that $x^2=\sqrt{2}h_2(x)+h_0(x),$
  we have by Cauchy-Schwarz that
$$
\mathrm{diag}(A)_{i,i}=\frac{1}{p}\cdot \left(\hat{f}(0^n)+\sqrt{2}\cdot \hat{f}(2e_1)\right)\le O(1)\cdot \frac{1}{p}\cdot \sqrt{\bW^{\leq 2}[f]}\le O\left(\log\left(\frac{1}{p}\right)\right),
$$
where the last inequality used \Cref{prop:level-k}.
It immediately follows that
\begin{equation} \label{eq:diag-op-norm}
    \vabs{\mathrm{diag}(A)}_{\mathrm{op}} \leq O\pbra{\log\pbra{\frac{1}{p}}}. 
\end{equation}

Turning to $\|A - \mathrm{diag}(A)\|_F$, we have
\begin{align}
    \vabs{A - \mathrm{diag}(A)}_F^2 
    &= \sum_{i\neq j} \Ex_{\bx\sim\SAMP(f)}\sbra{\bx_i\cdot\bx_j}^2 \nonumber \\
    &= \frac{1}{p^2} \sum_{i\neq j}\Ex_{\bx\sim N(0, I_n)}\sbra{f(\bx)\cdot\bx_i\bx_j}^2 \nonumber \\
    &= \frac{1}{p^2}\sum_{i\neq j} \wh{f}(e_i + e_j)^2 \nonumber \\
    &\leq O\pbra{\log^2\pbra{\frac{1}{p}}} \label{eq:off-diag-frob},
\end{align}
where~\Cref{eq:off-diag-frob} relied on the level-$k$ inequality (\Cref{prop:level-k}) again. 

Combining~\Cref{eq:diag-op-norm,eq:off-diag-frob} and applying \Cref{prop:level-k} on $\bW^{=1}[f]$ yields
\begin{equation} \label{eq:case2-var}
    \frac{\bW^{=1}[f]\cdot\|A\|_{\mathrm{op}}}{p^2} 
    \leq 
    \frac{\bW^{=1}[f]}{p^2}\cdot O\pbra{\log\pbra{\frac{1}{p}}}
    \leq 
    O\pbra{\log^2\pbra{\frac{1}{p}}}.
\end{equation}

Finally, thanks to~\Cref{eq:off-diag-frob}, we can bound $\|A\|_F^2$ as follows:
\begin{align*}
    \|A\|_F^2 &\leq O\pbra{\log^2\pbra{\frac{1}{p}}} + \sumi \Ex_{\bx\sim\SAMP(f)}\sbra{\bx_i^2}^2 
    =O\pbra{\log^2\pbra{\frac{1}{p}}}+\frac{1}{p^2}
    \sum_{i=1}^n\left(\hat{f}(0^n)+\sqrt{2}\hat{f} (2e_i)\right)^2.
\end{align*}
Using Cauchy-Schwarz, we have
$$
\sum_{i=1}^n\left(\hat{f}(0^n)+\sqrt{2}\hat{f}(2e_i)\right)^2
\le O(1)\sum_{i=1}^n \hat{f}(0^n)^2+\hat{f}(2e_i)^2
\le O\left(np^2+\bW^{=2}[f]\right).
$$
Plugging in \Cref{prop:level-k} and
combining these two inequalities, we have
\begin{equation}
    \|A\|_F^2 \leq  
    O\pbra{n + \log^2\pbra{\frac{1}{p}}}. 
    \label{eq:case3-var}
\end{equation}

\Cref{eq:var-calc-goal} now follows immediately from~\Cref{eq:var-almost-there,eq:case2-var,eq:case3-var}, which completes~the proof of~\Cref{thm:Gaussian2}.
\end{proof}

%% file: sections/combined-tester.tex

\section{Testing with unknown volume: Proof of \Cref{thm:Gaussian3}}
\label{subsec:unknown-p-testing}

In this section we prove  \Cref{thm:Gaussian3}. 
 Throughout this section we write $p$ to denote $\Vol(f)$, and we assume that $p \leq 0.01$.  This is without loss of generality, because by making $O(1)$ queries on random points from $N(0,I_n)$ we can get an estimate of $p$ that is additively accurate to $\pm 0.001$ with high probability, and if this estimate is larger than (say) $0.005$ then we can run the $O(1/\eps^5)$-query tester of \cite{MosselNeeman15}. 

\subsection{Structural ingredients} \label{combined-structural}

We now list some of the fundamental structural theorems we will need in this section. 
The first are the results of Borell and Sudakov-Tsirelson~\cite{Borell:75, Borell:85, ST:78} which show that for both noise sensitivity and surface area, halfspaces are the minimizers. 
\begin{theorem}[Halfspaces are minimizers of noise sensitivity and surface area]\label{thm:Borell}
Let $A \subseteq \mathbb{R}^n$ be a measurable set with  $\Vol(A) = p$ and $t> 0$. Let $H_p$ be a halfspace be such that $\Vol(H_p)=p$. Then, 
\begin{enumerate}
    \item $\mathsf{surf}(A) \ge \mathsf{surf}(H_p)$. 
    \item $\NS_t(A) \ge \NS_t(H_p)$.
\end{enumerate}
\end{theorem}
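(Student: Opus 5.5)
Both parts are classical, due to Borell and Sudakov--Tsirelson, and in the paper they are quoted rather than reproved. The plan I would follow is to prove part~2, the noise-sensitivity statement, directly, and then derive part~1 from it by letting $t\to 0^+$.

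\textbf{Reduction of part~2 to noise stability.} Let $\rho=e^{-t}$ and let $(\bx,\by)$ be $\rho$-correlated standard Gaussians, so that $\by\sim N_t(\bx)$. By the exchangeability computation already used in \Cref{lem:estimate-sensitivity},
\[
\NS_t(A)=2\bigl(\Vol(A)-\Pr[\bx\in A,\by\in A]\bigr).
\]
Since $\Vol(A)=\Vol(H_p)=p$, part~2 is equivalent to the noise-stability inequality
\[
\Pr[\bx\in A,\by\in A]\le \Pr[\bx\in H_p,\by\in H_p].
\]
Define $J_\rho(a,b):=\Pr[\bg_1\le\Phi^{-1}(a),\,\bg_2\le\Phi^{-1}(b)]$ for $\rho$-correlated standard normals $\bg_1,\bg_2$. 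Then the right-hand side equals $J_\rho(p,p)$.

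\textbf{Main step: the semigroup argument.} I would follow the Mossel--Neeman proof. Let $f=\mathbf 1_A$, and consider
\[
\Psi(s):=\Ex\bigl[J_\rho\bigl(P_sf(\bx),P_sf(\by)\bigr)\bigr],\qquad s\in[0,\infty).
\]
At $s=0$ this is $\Ex[J_\rho(f(\bx),f(\by))]=\Pr[\bx\in A,\by\in A]$, because $J_\rho(1,1)=1$ and $J_\rho(a,b)=0$ whenever $a=0$ or $b=0$. As $s\to\infty$, $P_sf\to p$, so $\Psi(s)\to J_\rho(p,p)$. It therefore suffices to show $\Psi'(s)\ge 0$.

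To compute $\Psi'$, I would differentiate using $\partial_sP_sf=LP_sf$ and integrate by parts with respect to the joint Gaussian law of $(\bx,\by)$. This should express $\Psi'(s)$ as the expectation of a quadratic form in $\nabla P_sf(\bx)$ and $\nabla P_sf(\by)$. The form's coefficient matrix is built from the Hessian of $J_\rho$ and the cross-covariance $\rho I_n$.

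The key analytic fact is that $J_\rho$ satisfies
\[
\partial_{ab}J_\rho=\frac{\rho}{I(a)I(b)}\,\text{(Gaussian density factor)},
\]
with $I$ as in \eqref{eq:I}. After the substitution $a=\Phi(u)$, $b=\Phi(v)$, the resulting matrix
\[
\begin{pmatrix}I(a)^2\partial_{aa}J & \rho I(a)I(b)\partial_{ab}J\\ \rho I(a)I(b)\partial_{ab}J & I(b)^2\partial_{bb}J\end{pmatrix}
\]
(tensored with $I_n$) is negative semidefinite with the correct sign convention, and a zero-determinant computation then gives $\Psi'\ge0$.

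Technical care is needed for regularity at $s=0$ and at the boundary values $a,b\in\{0,1\}$. I would handle both by running the argument on $[\delta,\infty)$, where $P_\delta f$ is smooth and takes values in $(0,1)$, and then letting $\delta\to0$.

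\textbf{Part~1 from part~2.} For a set $A$ with $\mathcal C^1$ boundary, \Cref{thm:noise-sensitivity-Ledoux} gives $\mathsf{surf}(A)\ge \frac{\sqrt\pi}{2\sqrt t}\NS_t(A)$ for every $t>0$. Combining this with part~2 yields
\[
\mathsf{surf}(A)\ge \frac{\sqrt\pi}{2\sqrt t}\,\NS_t(H_p)\qquad\text{for every } t>0.
\]
For the one-dimensional halfspace, a direct computation gives $\NS_t(H_p)=\frac{2\sqrt t}{\sqrt\pi}\varphi(\Phi^{-1}(p))(1+o(1))$ as $t\to0$. Letting $t\to0$ therefore yields $\mathsf{surf}(A)\ge\varphi(\Phi^{-1}(p))=\mathsf{surf}(H_p)$, using \Cref{fact:halfspace-sa}.

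For general measurable $A$, I would use the Minkowski-content definition of surface area (\Cref{def:surface-area}) and approximate $A$ by smooth sets, with lower semicontinuity handling the limit. Alternatively, one can argue directly: apply the noise-stability bound to $A$ and to the enlargement $A_\delta$.

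\textbf{Expected obstacle.} I expect the hardest step to be the sign computation for $\Psi'$, namely verifying that the matrix built from the Hessian of $J_\rho$ is negative semidefinite, together with the regularity and limiting justifications around it. The rest of the argument is bookkeeping.
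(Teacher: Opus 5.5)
Your main line is correct, but it takes a genuinely different route from the paper. The paper gives no proof of \Cref{thm:Borell} and simply cites Borell and Sudakov--Tsirelson. Classically, item~1 is obtained in the stronger enlargement form $\Vol(A_\delta)\ge\Phi(\Phi^{-1}(p)+\delta)$ via the Gaussian limit of L\'evy's spherical isoperimetry, and item~2 via Ehrhard symmetrization. You instead prove item~2 by the Mossel--Neeman interpolation. You then deduce item~1 from item~2 via \Cref{thm:noise-sensitivity-Ledoux} and the small-$t$ asymptotics of $\NS_t(H_p)$. You rightly compute those asymptotics directly: using \Cref{corr:bounds-on-stability-for-halfspaces} there would be circular, since its lower bound relies on item~1. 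Your route is self-contained and uses only semigroup calculus plus a tool the paper already quotes. The cited route gives item~1 for every measurable set at once, with no regularity issues.

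The one step that does not go through as written is extending item~1 from $\mathcal C^1$-boundary sets to arbitrary measurable $A$; \Cref{thm:noise-sensitivity-Ledoux} is only available for $\mathcal C^1$ boundaries. Lower semicontinuity gives $\mathsf{surf}(A)\le\liminf_k\mathsf{surf}(A_k)$ for approximants $A_k\to A$. That is the wrong direction for transferring $\mathsf{surf}(A_k)\ge I(\Vol(A_k))$ to $A$. The ``enlargement'' alternative is not an argument as stated either. Noise stability gives no direct handle on $\Vol(A_\delta)-\Vol(A)$, and a naive pointwise comparison of $\bx$ and $\by$ loses a dimension-dependent factor, since $|\bx-\by|\approx\sqrt{2tn}$. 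A correct fix has two parts:
\begin{enumerate}
\item Show $\liminf_{\delta\to0}(\Vol(A_\delta)-\Vol(A))/\delta\ge\mathrm{Per}_\gamma(A)$, where $\mathrm{Per}_\gamma$ is the Gaussian-weighted De Giorgi perimeter. This follows from the coarea formula for $d(\cdot,A)$, and it is where lower semicontinuity legitimately enters.
\item Show $\mathrm{Per}_\gamma(A)\ge I(p)$. This needs a recovery-type approximation rather than lower semicontinuity. For example, using $\nabla P_\delta=e^{-\delta}P_\delta\nabla$ and coarea, $\mathrm{Per}_\gamma(A)\ge\mathbf E|\nabla P_\delta\mathbf 1_A|=\int_0^1\mathsf{surf}(\{P_\delta\mathbf 1_A>s\})\,ds\ge\int_0^1 I(\Vol\{P_\delta\mathbf 1_A>s\})\,ds\to I(p)$ as $\delta\to0$. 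Here your smooth case is applied to the level sets, which are smooth for a.e.\ $s$ by Sard's theorem.
\end{enumerate}
Equivalently, the same coarea argument extends Ledoux's bound to $\NS_t(A)\le\frac{2\sqrt t}{\sqrt\pi}\mathrm{Per}_\gamma(A)$.

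A smaller slip: your displayed formula for $\partial_{ab}J_\rho$ has a stray $\rho$. With $u=\Phi^{-1}(a)$, $v=\Phi^{-1}(b)$ and $\varphi_\rho$ the $\rho$-correlated bivariate normal density, one has $I(a)I(b)\,\partial_{ab}J_\rho=\varphi_\rho(u,v)$ and $I(a)^2\partial_{aa}J_\rho=I(b)^2\partial_{bb}J_\rho=-\rho\,\varphi_\rho(u,v)$. So your matrix is $\rho\,\varphi_\rho(u,v)\bigl(\begin{smallmatrix}-1&1\\1&-1\end{smallmatrix}\bigr)$. Writing $F=P_s\mathbf 1_A$, $u=\Phi^{-1}(F(\bx))$ and $v=\Phi^{-1}(F(\by))$, you get $\Psi'(s)=\rho\,\mathbf E\bigl[\varphi_\rho(u,v)\,|\nabla F(\bx)/I(F(\bx))-\nabla F(\by)/I(F(\by))|^2\bigr]\ge 0$. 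The regularity issue you flag is settled by the bound $|\nabla P_s g|\le I(P_s g)/\sqrt{e^{2s}-1}$ for $[0,1]$-valued $g$. This makes every integrand bounded for $s\ge\delta$, even where $P_s\mathbf 1_A$ approaches $0$ or $1$ at infinity.
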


We will use the above theorem  and \Cref{thm:Neeman-reverse-Ledoux} to get a fairly sharp lower bound on the noise sensitivity~of a halfspace of volume $p$ over the Gaussian space. 
\begin{corollary}\label{corr:bounds-on-stability-for-halfspaces}
    Let $f: \mathbb{R}^n \rightarrow \{0,1\}$ be a halfspace with $\Vol(f)= p\le 0.1$, and let $t:0 \le t <1/2$ be such that $t \log (1/p) \le c$ for some sufficiently small constant $c>0$. Let $\mu := t \log(1/p)$. Then 
    \[
   \frac{2\sqrt{t}}{\sqrt{\pi}} \varphi(\Phi^{-1}(p)) \cdot \big(1-\Theta(\mu^{1/4})\big) \le \NS_t(f) \le \frac{2\sqrt{t}}{\sqrt{\pi}} \varphi(\Phi^{-1}(p)). 
    \]
\end{corollary}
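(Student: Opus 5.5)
The upper bound is immediate: a halfspace has smooth ($\mathcal{C}^1$) boundary, so \Cref{thm:noise-sensitivity-Ledoux} gives $\NS_t(f) \le \frac{2\sqrt t}{\sqrt\pi}\,\mathsf{surf}(f^{-1}(1))$. By \Cref{fact:halfspace-sa}, $\mathsf{surf}(f^{-1}(1)) = \varphi(\Phi^{-1}(p))$, which is the right-hand inequality. The real work is the lower bound. I will run the argument from the soundness proof of \Cref{thm:Gaussian1} in reverse: apply Neeman's reverse-Ledoux theorem (\Cref{thm:Neeman-reverse-Ledoux}) to $A=f^{-1}(1)$, and use Gaussian isoperimetry (\Cref{thm:Borell}, part 1) to lower bound the surface area of the resulting set $B$.

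Write $N:=\NS_t(f)$ and fix $\xi>0$, to be chosen later. \Cref{thm:Neeman-reverse-Ledoux} gives a set $B$ with $\Vol(A\triangle B)\le N/\xi$ and
\[
\mathsf{surf}(B)\le \sqrt{\tfrac{\pi}{2}}\Bigl(1+O\bigl(\xi/\sqrt{\log(1/\xi)}\bigr)\Bigr)\frac{N}{\sqrt{e^{2t}-1}}.
\]
The already-proved upper bound gives $N\le \frac{2\sqrt t}{\sqrt\pi}\varphi(\Phi^{-1}(p)) = O(\sqrt t\, p\sqrt{\log(1/p)})$. Hence $\Vol(B)=q$ with $|q-p|\le \tau p$, where $\tau = O(\sqrt{t\log(1/p)}/\xi)=O(\sqrt\mu/\xi)$.

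Next I apply \Cref{thm:Borell} to $B$: $\mathsf{surf}(B)\ge \varphi(\Phi^{-1}(q))$. \Cref{fact:important-Gcdf}, together with monotonicity of $I$ and $q\ge (1-\tau)p$, gives $\varphi(\Phi^{-1}(q))\ge (1-O(\tau))\varphi(\Phi^{-1}(p))$. This needs $q\le 1/2$, which holds since $p\le 0.1$ and $\tau$ is small. Rearranging, and using $\sqrt{e^{2t}-1}\ge\sqrt{2t}$, I get
\[
N \ge \frac{2\sqrt t}{\sqrt\pi}\,\varphi(\Phi^{-1}(p))\bigl(1-O(\tau)-O(\xi)\bigr).
\]
For this direction I actually need the factor $\sqrt{e^{2t}-1}/\sqrt{2t}$ bounded from below, which it is since $e^{2t}-1\ge 2t$. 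The extra factor $\sqrt{\pi/2}\cdot\sqrt{2t}\cdot\frac{2}{\sqrt\pi}$ simplifies correctly to $2\sqrt{t}/\sqrt\pi$ after inverting.

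Finally I balance the two error terms $\sqrt\mu/\xi$ and $\xi$ by choosing $\xi=\mu^{1/4}$. Both terms become $O(\mu^{1/4})$, which gives the factor $1-\Theta(\mu^{1/4})$. The smallness assumption $\mu\le c$ keeps $\tau\le 1/10$, so the volume comparison is legitimate. The main point needing care is exactly this bookkeeping: making sure the lower bound on $\mathsf{surf}(B)$ is taken at the perturbed volume $q$ and transferred back to $p$ with only an $O(\tau)$ loss. I expect \Cref{fact:important-Gcdf} (concavity of $I$) to handle this cleanly. The remaining conditions, $t<1/2$ and $\xi<1$, are only needed so that the $O(\cdot)$ terms in Neeman's bound apply.
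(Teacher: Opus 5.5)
Your proposal is correct and follows essentially the same route as the paper. The upper bound comes from Ledoux plus the halfspace surface-area formula. The lower bound applies Neeman's reverse-Ledoux theorem to $f^{-1}(1)$, then Borell's isoperimetric inequality on the resulting set $B$, then \Cref{fact:important-Gcdf} to transfer from $\Vol(B)$ back to $p$, and finally balances the Neeman parameter at $\Theta(\mu^{1/4})$, exactly matching the paper's bookkeeping with $\beta\asymp\sqrt{\mu}/\delta$.
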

\begin{proof}
The upper bound is a consequence of \Cref{thm:noise-sensitivity-Ledoux} and \Cref{fact:halfspace-sa}. To get the lower bound, note that for any parameter
$\beta >0$, by \Cref{thm:Neeman-reverse-Ledoux}, there exists a set $B$ satisfying item (2) of \Cref{thm:Neeman-reverse-Ledoux} such that 
\[
\Vol(B) \ge p - \frac{\NS_t(A)}{\beta} \ge p - \frac{2 \sqrt{t}}{\sqrt{\pi} \beta} \cdot \varphi(\Phi^{-1}(p)). 
\]
If we set $\beta \asymp \frac{\sqrt{t \log(1/p)}}{\delta}$, then using \Cref{fact:halfspace-sa} we get that 
\[
\frac{2 \sqrt{t}}{\sqrt{\pi} \beta} \cdot \varphi(\Phi^{-1}(p)) \le \delta p,
\quad \text{and hence} \quad  
\Vol(B)  \ge (1-\delta) p. 
\]
Note that by Borell's isoperimetry theorem (Item~1 of \Cref{thm:Borell}), if $\Vol(B)=q \geq (1-\delta)p$, then 
\begin{equation}~\label{eq:lb-surfacearea}
\mathsf{surf} (B) \ge \varphi (\Phi^{-1}(q)) \ge (1-\delta)\varphi(\Phi^{-1}(p)),
\end{equation}
where we used \Cref{fact:important-Gcdf} for the second inequality. Now recalling that $B$ satisfies item (2) of \Cref{thm:Neeman-reverse-Ledoux}, we get that
\[
\NS_t(A) \ge \frac{\sqrt{2(e^{2t}-1)}}{\sqrt{\pi}} \cdot \mathsf{surf}(B) \cdot \frac{1}{ \left(1 + O\left(\frac{\beta}{\sqrt{\log(1/\beta)}}\right)\right)}.
\]
Plugging \eqref{eq:lb-surfacearea} into the above and using $e^{2t} - 1 \geq 2t,$ we get 
\[
\NS_t(A) \geq \frac{2\sqrt{t}}{\sqrt{\pi}} \varphi(\Phi^{-1}(p)) 
\cdot (1-\delta) \cdot (1-O(\beta)) 
\ge \frac{2\sqrt{t}}{\sqrt{\pi}} \varphi(\Phi^{-1}(p)) 
\cdot  (1-O(\beta  + \delta)).
\]
 Since $\beta \asymp \frac{\sqrt{t  \log(1/p)}}{\delta}$, we can set $\beta = \Theta((t \log (1/p)^{1/4})$ to obtain $\delta=\Theta(\beta)$ and
\[
\NS_t(A) \geq \frac{2\sqrt{t}}{\sqrt{\pi}} \varphi(\Phi^{-1}(p))  \cdot \left(1-\Theta\left((t \log (1/p))^{1/4} \right)\right).
\]
This finishes the proof. 
\end{proof}

Next, we prove \Cref{lem:our-MORS2}
  which is essentially the contrapositive of 
  \Cref{lem:our-MORS}:

\begin{lemma}
\label{lem:our-MORS2}
    Let $f:\R^n\to\zo$ be a function with $p = \Vol(f)$ satisfying $0 < p < 0.1$. Assume~that $\reldist(f,g) \ge \eps $ for every LTF $g$ (where $\eps$ is at most some suitably small absolute constant). Then
	\[
		\sqrt{U(p)} - \sqrt{\bW^{=1}[f]} \ge \Omega\left({\frac {(\eps p)^2} { \sqrt{U(p)}}}\right).
	\]
\end{lemma}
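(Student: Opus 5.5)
We derive \Cref{lem:our-MORS2} as the contrapositive of \Cref{lem:our-MORS}, rewritten in square-root form. Write $t := \Phi^{-1}(1-p)$, so that $U(p)=U(1-p)=\varphi(t)^2$ by the symmetry of $U$. By \Cref{prop:gaussian-tails} and $p<0.1$ we have $\varphi(t)=\Theta(pt)$ and $t=\Theta(\sqrt{\log(1/p)})\ge 1.01$. Hence $\sqrt{U(p)}=\Theta(pt)$, and the target inequality is equivalent to
\[
\sqrt{U(p)}-\sqrt{\bW^{=1}[f]} \ \ge\ \Omega\!\left(\frac{\eps^2 p}{t}\right).
\]

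Suppose this fails, i.e.\ $\sqrt{U(p)}-\sigma \le c\,\eps^2 p/t$ for a small constant $c$, where $\sigma:=\sqrt{\bW^{=1}[f]}$. We do \emph{not} pass through $U(1-p)-\bW^{=1}[f]$ and then reapply \Cref{fact:simple}. Instead we reuse the internal argument of the proof of \Cref{lem:our-MORS}, which only used the quantity $\sqrt{U(1-p)}-\sigma$. Concretely, with $h(x)=\frac1\sigma\sum_i\wh f(e_i)x_i - t$, the proof established
\[
\Ex\bigl[h(\bx)(\Indicator\{h(\bx)\ge0\}-f(\bx))\bigr]=\sqrt{U(1-p)}-\sigma .
\]
It also showed that $\Pr[|h(\bx)|\le \ell/2]=O(\ell p t)$ whenever $\ell t\le 0.1$. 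Combining these via \eqref{eq:spider} gives
\[
\frac{\ell}{2}\bigl(\err - O(\ell p t)\bigr)\le \sqrt{U(1-p)}-\sigma \le c\,\eps^2p/t .
\]

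Choose $\ell := \eps/(Ct)$ for a large constant $C$; this satisfies $\ell t\le 0.1$. Then
\[
\err \le O(\ell p t) + \frac{2c\,\eps^2 p}{t\ell} = O(\eps p/C) + 2cC\,\eps p .
\]
Picking $C$ large and then $c$ small makes this less than $\eps p$. By \eqref{eq:vol-h} the halfspace $\Indicator\{h\ge 0\}$ has volume $p$, so $\reldist(f,\Indicator\{h\ge0\})=\err/p<\eps$, contradicting the hypothesis. (If $\sigma=0$, then $\sqrt{U(p)}-\sigma=\Theta(pt)$, which already exceeds the required bound since $\eps$ is small.)

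The only delicate point is the bookkeeping in the third step. One must check that the quantity controlled in the proof of \Cref{lem:our-MORS} really is the square-root gap, so that the right-hand side scales as $\eps^2 p/t$. Invoking \Cref{lem:our-MORS} as a black box on the squared gap would lose a factor, because $U-\bW^{=1}=(\sqrt U-\sigma)(\sqrt U+\sigma)$ and $\sqrt{U}+\sigma$ could be as small as $\sqrt U$. That route would give a threshold of $(\eps p)^2$ on $U-\bW^{=1}$, which corresponds to the weaker square-root bound $(\eps p)^2/\sqrt{U}=\eps^2 p/t$ anyway. So both routes work, and the direct route avoids the factor-of-two issue.
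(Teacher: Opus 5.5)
Your proof is correct and is essentially the paper's argument. The paper applies the contrapositive of \Cref{lem:our-MORS} as a black box to get $U(p)-\bW^{=1}[f]\ge\Omega((\eps p)^2)$, then divides by $\sqrt{U(p)}+\sqrt{\bW^{=1}[f]}\le 2\sqrt{U(p)}$, citing $\bW^{=1}[f]\le U(p)$. You instead rerun that lemma's proof directly on the square-root gap, which skips the round trip through \Cref{fact:simple} and does not need $\sigma\le\sqrt{U(p)}$ as a separate input (it follows from \eqref{eq:spider}). The only slip is in your closing commentary: the constant lost in the black-box route comes from $\sqrt{U}+\sigma$ possibly being as large as $2\sqrt{U}$, not as small as $\sqrt{U}$, and that loss is harmless.
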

\begin{proof}
    The contrapositive formulation of \Cref{lem:our-MORS} gives us (recall $U(p)=U(1-p)$) that
\[
 U(p) - \bW^{=1}[f] \ge \Omega((\eps p)^2),
\]
which is equivalent to
\[
\sqrt{U(p)} - \sqrt{\bW^{=1}[f]} \ge\Omega\left( {\frac {(\eps p)^2} { \sqrt{U(p)} + \sqrt{\bW^{=1}[f]}}}\right)
\geq
\Omega\left({\frac {(\eps p)^2} { \sqrt{U(p)}}}\right),
\]
where for the last inequality we again used that $U(p) \geq \bW^{=1}[f].$
\end{proof}

We define the function $V: (0,1) \to \R_{\geq 0}$ as follows:
\begin{equation} \label{eq:def-of-V}
V(p) := {\frac {U(p)}{p^2}}, \quad\text{so} \quad V(p)=\Theta(\ln(1/p))
\text{ for $0 < p \leq 1/2$ by \Cref{fact:U-asymptotics}}.
\end{equation}
Looking ahead, our algorithm will form a ``guess'' $p_2$ of the true value of $p := \Vol(f)$ by inverting the function $V$ on an estimate of ${  {U(p)}/{p^2}}$ that is obtained from samples. To analyze this algorithm we need a structural result which gives us control on how much $V^{-1}$ can deviate from the true value of $V^{-1}({ {U(p)}/{p^2}})=p$ when it is evaluated on an estimate of ${ {U(p)}/{p^2}}$ rather than the exact value; this result is given in \Cref{claim:v-inverse}.
(We remark that for a quick intuitive understanding of \Cref{lemma:stitch-bostanci}, \Cref{clm:bound-psi}, \Cref{lem:bounds-on-V} and \Cref{claim:v-inverse} the reader may find it useful to plug in the asymptotic bounds provided in \Cref{eq:def-of-V,eq:def-of-f}, but we also give full proofs below.)

We require some setup before embarking on the proof of \Cref{claim:v-inverse}. 
For convenience, define the function $\psi: (0,1)\to\R$ as 
\begin{equation} \label{eq:def-of-f}
    \psi(p) = \frac{I(p)}{p},
    \quad\text{so} \quad \psi(p)=\Theta(\sqrt{\ln(1/p)})
\text{ for $0 < p \leq 1/2$ by \Cref{fact:U-asymptotics}},
\end{equation}
where $I(p) = \phi(\Phi^{-1}(p))$ is the Gaussian isoperimetric function (\Cref{eq:I}) with $V(p) = \psi(p)^2$. 

\begin{fact} \label{fact:moo-deng}
    We have $I'(p) = -\Phi^{-1}(p)$.
\end{fact}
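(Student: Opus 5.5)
The plan is to differentiate the composition $I = \varphi\circ\Phi^{-1}$ with the chain rule and the inverse function theorem, and then use the defining ODE of the Gaussian density, $\varphi'(s) = -s\,\varphi(s)$.

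First, I write $s := \Phi^{-1}(p)$, so that $\Phi(s)=p$. Since $\Phi' = \varphi > 0$ everywhere, $\Phi$ is a smooth strictly increasing bijection from $\R$ onto $(0,1)$. The inverse function theorem then gives that $\Phi^{-1}$ is differentiable on $(0,1)$, with
\[
(\Phi^{-1})'(p) = \frac{1}{\Phi'(\Phi^{-1}(p))} = \frac{1}{\varphi(s)}.
\]

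Next, I apply the chain rule to $I(p)=\varphi(\Phi^{-1}(p))$ and obtain
\[
I'(p) = \varphi'(s)\cdot (\Phi^{-1})'(p) = \frac{\varphi'(s)}{\varphi(s)}.
\]
Differentiating $\varphi(s) = (2\pi)^{-1/2}e^{-s^2/2}$ directly gives $\varphi'(s) = -s\,\varphi(s)$. Substituting this in, the factor $\varphi(s)$ cancels, and we get $I'(p) = -s = -\Phi^{-1}(p)$, as claimed.

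There is no real obstacle in this argument. The only point that needs justification is the differentiability of $\Phi^{-1}$, and this follows from $\varphi$ being strictly positive and continuous. As a sanity check, $I'(p) > 0$ on $(0,1/2)$ because $\Phi^{-1}(p) < 0$ there, which matches the monotonicity used in \Cref{fact:important-Gcdf}. Differentiating once more gives $I''(p) = -1/\varphi(\Phi^{-1}(p)) < 0$, which recovers the concavity of $I$ cited there.
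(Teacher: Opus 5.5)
Your proof is correct and matches the paper's argument: the chain rule applied to $\varphi\circ\Phi^{-1}$, the inverse function theorem giving $(\Phi^{-1})'(p)=1/\varphi(\Phi^{-1}(p))$, and the identity $\varphi'(s)=-s\,\varphi(s)$. Your extra observation that $I''(p)=-1/\varphi(\Phi^{-1}(p))<0$ is also correct, and it recovers directly the concavity of $I$ that the paper cites from an external exercise.
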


\begin{proof}
    This is a calculus exercise: since $I(p)=\phi(\Phi^{-1}(p)),$ taking the derivative we get
    \[
    I'(p) = \phi'(\Phi^{-1}(p)) \cdot (\Phi^{-1})'(p)
    =-\Phi^{-1}(p) \cdot \phi(\Phi^{-1}(p)) \cdot {\frac 1 {\phi(\Phi^{-1}(p))}} = -\Phi^{-1}(p),
    \]
   where the second equality used $\phi(x) = {\frac 1 {\sqrt{2 \pi}}} e^{-x^2/2}$ and the fact that $$(\Phi^{-1})'(p) = {\frac 1 {\Phi'(\Phi^{-1}(p))}} = {\frac 1 {\phi(\Phi^{-1}(p)}}.$$
   This finishes the proof of the statement.
\end{proof}

We will use the following estimate for $\psi'(p)$:

\begin{lemma} \label{lemma:stitch-bostanci}
    Suppose that $p \in (0, 0.1]$. We have 
    \[
        \psi'(p) = -\Theta\pbra{\frac{1}{p\sqrt{\log\pbra{ {1}/{p}}}}}.
    \]
\end{lemma}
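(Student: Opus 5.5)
We compute $\psi'$ directly by the quotient rule and then estimate the resulting expression with the Gaussian tail bounds. Since $\psi(p) = I(p)/p$, \Cref{fact:moo-deng} gives
\[
\psi'(p) = \frac{p\, I'(p) - I(p)}{p^2} = \frac{-p\,\Phi^{-1}(p) - \varphi(\Phi^{-1}(p))}{p^2}.
\]
Set $s := -\Phi^{-1}(p) > 0$, so that $p = \Phi(-s)$. Because $p \le 0.1$, we have $s \ge 1.28$, which is bounded away from $0$. In this notation the numerator is $ps - \varphi(s)$, and hence
\[
\psi'(p) = -\frac{\varphi(s) - s\,\Phi(-s)}{p^2}.
\]
Everything therefore reduces to a two-sided estimate on the Mills-ratio deficit $\varphi(s) - s\Phi(-s)$.

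For this we use \Cref{prop:gaussian-tails}. The upper bound $\Phi(-s) \le (1/s - 1/s^3 + 3/s^5)\varphi(s)$ gives
\[
\varphi(s) - s\Phi(-s) \ge \varphi(s)\left(\frac{1}{s^2} - \frac{3}{s^4}\right),
\]
and the lower bound $\Phi(-s) \ge (1/s - 1/s^3)\varphi(s)$ gives
\[
\varphi(s) - s\Phi(-s) \le \frac{\varphi(s)}{s^2}.
\]
Combining, $\varphi(s) - s\Phi(-s) = \Theta(\varphi(s)/s^2)$, provided $s$ is large enough that $1/s^2 - 3/s^4 \ge c/s^2$. This requires $s > \sqrt{3}$, i.e.\ roughly $p \le 0.04$.

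The range $p \in [0.04, 0.1]$, where $s \in [1.28, 1.75]$, is the main obstacle, and only a minor one. Here it suffices to know that $\varphi(s) - s\Phi(-s) > 0$ and is continuous on a compact interval, so it is bounded above and below by absolute constants. Positivity holds because $\varphi(s) - s\Phi(-s) = \int_s^\infty (u - s)\varphi(u)\,du > 0$. On the same compact interval $\varphi(s)/s^2$, $p$, and $\log(1/p)$ are also all $\Theta(1)$, so the claimed estimate holds trivially there.

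Finally we convert to $p$. By \Cref{fact:Gaussian-cdf-estimate}, $p = \Theta(\varphi(s)/s)$, so $\varphi(s) = \Theta(ps)$. Moreover $s = \Theta(\sqrt{\log(1/p)})$, which follows from \Cref{fact:halfspace-sa} exactly as in the proof of \Cref{corr:BBJ}. Therefore
\[
\psi'(p) = -\Theta\!\left(\frac{ps/s^2}{p^2}\right) = -\Theta\!\left(\frac{1}{ps}\right) = -\Theta\!\left(\frac{1}{p\sqrt{\log(1/p)}}\right),
\]
which is the claimed estimate.
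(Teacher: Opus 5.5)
Your proof is correct and follows essentially the same route as the paper: you differentiate $\psi = I/p$ using $I'(p) = -\Phi^{-1}(p)$, sandwich the Mills-ratio deficit $\varphi(s) - s\Phi(-s)$ (equivalent to the paper's $\varphi(z)/(pz) - 1$) between the two bounds of \Cref{prop:gaussian-tails}, and convert using $s = \Theta(\sqrt{\log(1/p)})$. Your separate compactness argument for $p \in [0.04, 0.1]$, based on $\varphi(s) - s\Phi(-s) = \int_s^\infty (u-s)\varphi(u)\,du > 0$, is more careful than the paper. The paper's lower bound $\frac{z^2-3}{z^4-z^2+3}$ is nonpositive for $z \le \sqrt{3}$, so on its own it does not cover that range.
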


\begin{proof}
    For convenience, let $-z := \Phi^{-1}(p)$. Note that standard Gaussian tail bounds (\Cref{prop:gaussian-tails}) imply that 
    \[
        z = \Theta\pbra{\sqrt{\log\pbra{1/p}}}.
    \]
    Using the chain rule (recall the definition of $\psi$, cf.~\Cref{eq:def-of-f}) and~\Cref{fact:moo-deng}, we get 
    \begin{align}
        \psi'(p) 
         = \frac{-1}{p}\pbra{\frac{\phi(z)}{p} - z}  
         = -\frac{z}{p}\left(\frac{\phi(z)}{pz}-1\right)
         \label{eq:happy-halloween}
    \end{align}
    Next, note that~\Cref{prop:gaussian-tails} implies  
    \[
        \phi(z)\pbra{\frac{1}{z} - \frac{1}{z^3}} 
        \leq p 
        \leq 
        \phi(z)\pbra{\frac{1}{z} - \frac{1}{z^3} + \frac{3}{z^5}},~~\text{i.e.}~~
          {1 - \frac{1}{z^2}}
        \leq 
        \frac{zp}{\phi(z)} 
        \leq 
          {1 - \frac{1}{z^2} + \frac{3}{z^4}}.
    \]
    Plugging this into~\Cref{eq:happy-halloween} then gives
    \[
    -\frac{z}{p}\pbra{\frac{1}{z^2-1}} 
    \geq 
    \psi'(p)
    \geq 
    -\frac{z}{p}\pbra{\frac{z^2 - 3}{z^4 - z^2 + 3}},
    \]
    and so $\psi'(p) = -\Theta(1/zp)$. 
    Recalling that $z = \Theta(\sqrt{\log(1/p)})$ completes the proof. 
\end{proof}

We also require the following bounds on the function $\psi(\cdot)$. 
\begin{claim}~\label{clm:bound-psi}
In the interval $(0,0.1)$, the function $\psi(\cdot)$ is monotonically decreasing. Furthermore, if $p_2 > p_1$, then, 
\[
    \psi(p_1) - \psi(p_2) = \Theta \bigg( \sqrt{\log(1/p_1)} - \sqrt{\log(1/p_2)} \bigg). 
\]
\end{claim}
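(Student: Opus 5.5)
Monotonicity follows from \Cref{lemma:stitch-bostanci}: for $p \in (0,0.1]$ we have $\psi'(p) = -\Theta(1/(p\sqrt{\log(1/p)})) < 0$, so $\psi$ is strictly decreasing on $(0,0.1)$. (Equivalently, $\psi = R$ from \Cref{claim:facts-about-Ipp}, which is decreasing on $(0,1/2)$.)

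For the quantitative statement, fix $0 < p_1 < p_2 < 0.1$. By the fundamental theorem of calculus,
\[
\psi(p_1) - \psi(p_2) = \int_{p_1}^{p_2} -\psi'(q)\, dq = \Theta\pbra{\int_{p_1}^{p_2} \frac{dq}{q\sqrt{\log(1/q)}}},
\]
where the second step applies the two-sided estimate of \Cref{lemma:stitch-bostanci} pointwise. The integrand is positive and the implied constants are absolute, so the $\Theta$ passes through the integral.

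To evaluate the integral, substitute $u = \log(1/q)$, so $du = -dq/q$. This gives
\[
\int_{p_1}^{p_2} \frac{dq}{q\sqrt{\log(1/q)}} = \int_{\log(1/p_2)}^{\log(1/p_1)} u^{-1/2}\, du = 2\pbra{\sqrt{\log(1/p_1)} - \sqrt{\log(1/p_2)}},
\]
which is the claimed bound.

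The only point needing care is the endpoint. \Cref{lemma:stitch-bostanci} is stated for $p \in (0,0.1]$, and here $p_2 < 0.1$, so the pointwise derivative bound is available on all of $[p_1,p_2]$. Natural logarithms are used in the substitution, and any other base changes only the constants. So there is no real obstacle; the work is in \Cref{lemma:stitch-bostanci}, which supplies derivative bounds that are uniform over the whole interval. If the claim were needed up to $p_2 \le 1/2$, I would add a compactness argument: on $[0.1,1/2]$ both $-\psi'$ and $1/(q\sqrt{\log(1/q)})$ are continuous and bounded above and below by positive constants, so the same argument goes through.
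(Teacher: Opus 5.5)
Your proof is correct and takes essentially the same route as the paper: both integrate the uniform two-sided bound $\psi'(q) = -\Theta\big(1/(q\sqrt{\log(1/q)})\big)$ from \Cref{lemma:stitch-bostanci} over $[p_1,p_2]$, and evaluate $\int_{p_1}^{p_2} dq/(q\sqrt{\log(1/q)})$ as $2\big(\sqrt{\log(1/p_1)}-\sqrt{\log(1/p_2)}\big)$. Your explicit substitution $u=\log(1/q)$ and your check that $[p_1,p_2]$ stays inside the lemma's range spell out steps the paper leaves implicit.
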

\begin{proof}
\Cref{lemma:stitch-bostanci}
proves that $\psi'(p) \le 0$ in the interval $p \in (0,0.1)$. Further, for constants $c, C>0$, 
\[
- \frac{C}{p\sqrt{\log(1/p)}}\le \psi'(p) \le - \frac{c}{p\sqrt{\log(1/p)}}.
\]
Integrating, we obtain 
\[
\int_{p_1}^{p_2}- \frac{C}{t\sqrt{\log(1/t)}} dt \le \psi(p_2) - \psi(p_1) \le \int_{p_1}^{p_2}- \frac{c}{t\sqrt{\log(1/t)}} dt.
\]
Thus, we have 
\[
2C \left(\sqrt{\log(1/p_2)}- \sqrt{\log(1/p_1)}\right) \le \psi(p_2) -\psi(p_1) \le 2c \left(\sqrt{\log(1/p_2)}- \sqrt{\log(1/p_1)}\right). 
\]
This finishes the proof. 
\end{proof}

\begin{lemma}\label{lem:bounds-on-V}
For the function $V(p)=\psi^2(p)$, the function $V^{-1}(\cdot)$ is monotonically decreasing.~For any $a>0$ with $ V^{-1}(a)\le 0.1$ and any $\zeta\in (0,c_0]$ for some sufficiently small constant $c_0$, we have  
$$
V^{-1}(a)-V^{-1}(a+\zeta)=\Theta\big(\zeta V^{-1}(a)\big).
$$
\end{lemma}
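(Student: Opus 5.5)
The plan is to reduce everything to the derivative of $V$ and apply the mean value theorem. Write $V=\psi^2$. On $(0,0.1]$ we have $\psi>0$, and by \Cref{lemma:stitch-bostanci} $\psi'<0$. Hence
\[
V'(p)=2\psi(p)\psi'(p)=-\Theta\!\left(\sqrt{\log(1/p)}\cdot\frac{1}{p\sqrt{\log(1/p)}}\right)=-\Theta(1/p),
\]
where we also used $\psi(p)=\Theta(\sqrt{\log(1/p)})$ from \eqref{eq:def-of-f}. So $V$ is strictly decreasing on $(0,0.1]$, and therefore its inverse $V^{-1}$ is decreasing on the corresponding range of values, which gives the first claim. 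I read $V^{-1}$ as the inverse of $V$ restricted to $(0,0.1]$; this is the branch relevant when $V^{-1}(a)\le 0.1$.

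For the quantitative bound, set $p_a:=V^{-1}(a)\le 0.1$ and $p_b:=V^{-1}(a+\zeta)$. Since $V^{-1}$ is decreasing, $p_b<p_a$, so $p_b$ also lies in $(0,0.1]$ and the derivative estimate applies on $[p_b,p_a]$. By the mean value theorem,
\[
\zeta=V(p_b)-V(p_a)=|V'(\xi)|\,(p_a-p_b)=\Theta(1/\xi)\,(p_a-p_b)\qquad\text{for some }\xi\in[p_b,p_a].
\]
It then suffices to show $p_b\ge p_a/K$ for an absolute constant $K$. Given that, $\xi=\Theta(p_a)$ and $p_a-p_b=\Theta(\zeta p_a)$, which is the claim.

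The main obstacle is this comparability of $p_a$ and $p_b$, and I would get it by integrating $V'$ rather than bootstrapping. For any $K\ge1$,
\[
V(p_a/K)-V(p_a)=\int_{p_a/K}^{p_a}|V'(s)|\,ds\ \ge\ c\int_{p_a/K}^{p_a}\frac{ds}{s}=c\ln K .
\]
Taking $K=e^{2c_0/c}$, the right-hand side is at least $2c_0>\zeta$. Since $V$ is decreasing and $V(p_b)=V(p_a)+\zeta<V(p_a/K)$, we get $p_b>p_a/K$, as needed.

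This step is exactly where the hypothesis $\zeta\le c_0$ is used. Without it, the quantity $p_a-p_b$ would saturate near $p_a$ rather than scale linearly in $\zeta$. The remaining constants come only from \Cref{lemma:stitch-bostanci} and the asymptotics \eqref{eq:def-of-f}.
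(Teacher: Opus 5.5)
Your argument is correct and follows essentially the same route as the paper. Both proofs take the derivative estimate $V'(p)=2\psi(p)\psi'(p)=-\Theta(1/p)$ on $(0,0.1]$ (equivalently, the paper's $(V^{-1})'(x)=-\Theta(V^{-1}(x))$), apply the mean value theorem, and then use the fact that $V^{-1}(a+\zeta)$ and $V^{-1}(a)$ are within a constant factor of each other. The only difference is in how you get that comparability: the paper bootstraps from its upper bound $V^{-1}(a)-V^{-1}(a+\zeta)\le 0.1\,V^{-1}(a)$, which needs $c_0$ small, whereas your estimate $V(p_a/K)-V(p_a)\ge c\ln K$ gives $V^{-1}(a+\zeta)\ge V^{-1}(a)/K$ directly and so only needs $\zeta$ bounded by some constant.
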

\begin{proof}
As  $V(\cdot)$ is monotonically decreasing, so is $V^{-1}(\cdot)$. Taking the derivative of $V^{-1}(\cdot)$, we have
$$
\big(V^{-1}(x)\big)'=\frac{1}{V'(V^{-1}(x))} 
=\frac{1}{2\psi(V^{-1}(x))\cdot \psi'(V^{-1}(x))}.
$$
For any $x$ with $V^{-1}(x)\le 0.1$, 
by plugging in \Cref{eq:def-of-f} and \Cref{lemma:stitch-bostanci}, we have 
$$
\big(V^{-1}(x)\big)'=-\Theta\left(V^{-1}(x)\right).
$$ 
Given that $V^{-1}(\cdot)$ is decreasing, we have $V^{-1}(x)\le 0.1$ for all $x\in [a,a+\zeta]$ and thus,
$$
V^{-1}(a)-V^{-1}(a+\zeta)\le \zeta \cdot \Theta\left(  {V^{-1}(a)}\right).
$$
For the other direction, when $c_0$ is sufficiently small, we have 
from the inequality above that
$$
V^{-1}(a)-V^{-1}(a+\zeta)\le \zeta\cdot \Theta\left(  {V^{-1}(a)}\right)\le c\cdot \Theta\left(  {V^{-1}(a)}\right)\le 0.1\cdot  {V^{-1}(a)}  
$$
and thus, $V^{-1}(a+\zeta)\ge 0.9\cdot V^{-1}(a)$. 
As a result, we have 
$$
V^{-1}(a)-V^{-1}(a+\zeta)\ge \zeta \cdot \Theta\left(  {V^{-1}(a+\zeta)}\right)\ge \zeta\cdot \Theta\left(V^{-1}(a )\right).
$$
This finishes the proof of the lemma.
\end{proof}

We have the following corollary of \Cref{lem:bounds-on-V}:
 
\begin{corollary}
\label{claim:v-inverse}
 {Let $0 < p,\hat{p} \leq 0.1$, with $a:=V(p)$ and 
  $\hat{a}:=V (\hat{p})$  such that $|a-\hat{a}|\le c_0$ for some sufficiently small constant $c_0$. Then there are two positive
  constants $\tau_1$ and $\tau_2$ such that 
}
\begin{enumerate}
    \item  $|\hat{p} - p| \leq  \tau_1 p|\hat{a}-a|$; and 
    
    \item If $a - \hat{a} \ge 0$, then $\hat{p} - p \geq  \tau_2p(a-\hat{a})$.  
\end{enumerate}
\end{corollary}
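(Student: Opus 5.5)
The plan is to apply \Cref{lem:bounds-on-V} directly, with the two points ordered by the sign of $\hat a - a$. Since $p=V^{-1}(a)$ and $\hat p = V^{-1}(\hat a)$, set $\zeta := |\hat a - a| \le c_0$ and consider two cases.

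\emph{Case $\hat a \ge a$.} Take $a$ as the base point. We have $V^{-1}(a)=p\le 0.1$, and $\hat a = a+\zeta$. \Cref{lem:bounds-on-V} then gives
\[
p-\hat p = V^{-1}(a)-V^{-1}(a+\zeta)=\Theta(\zeta p).
\]
In particular $|\hat p - p| \le \tau_1 p|\hat a - a|$ for a suitable constant $\tau_1$.

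\emph{Case $\hat a < a$.} Take $\hat a$ as the base point. We have $V^{-1}(\hat a)=\hat p\le 0.1$ by hypothesis, and $a=\hat a+\zeta$. \Cref{lem:bounds-on-V} yields
\[
\hat p - p = \Theta(\zeta \hat p).
\]
This expresses the difference in terms of $\hat p$, but both items need it in terms of $p$. For the conversion, use the upper half of the same estimate: $\hat p - p \le C\zeta\hat p$ with $C\zeta \le Cc_0 \le 1/2$ once $c_0$ is small enough. Rearranging gives $\hat p \le p/(1-C\zeta)\le 2p$. Also $\hat p \ge p$ in this case because $V^{-1}$ is decreasing. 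Hence $\hat p = \Theta(p)$, and so $\hat p - p = \Theta(\zeta p)$.

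This single case gives both items at once. The upper bound gives item~1 in this case, possibly after enlarging $\tau_1$. The lower bound gives item~2, since the hypothesis $a-\hat a\ge 0$ puts us exactly in this case (the endpoint $a=\hat a$ is trivial).

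The only delicate point is this base-point issue: \Cref{lem:bounds-on-V} needs $V^{-1}$ of the smaller argument to be at most $0.1$, and its conclusion is relative to that point. The hypothesis $p,\hat p\le 0.1$ provides the first requirement in either case, and the comparability $\hat p = \Theta(p)$ proved above handles the second. The constants $\tau_1,\tau_2$ are then read off from the $\Theta$'s in \Cref{lem:bounds-on-V}, with $c_0$ chosen no larger than the constant required there.
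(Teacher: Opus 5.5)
Your proof is correct and matches the paper's argument: apply \Cref{lem:bounds-on-V} with the smaller of $a,\hat a$ as the base point, then use that $p$ and $\hat p$ are within a constant factor of each other to express the difference relative to $p$. The paper records this comparability as $0.9\hat p\le p\le \hat p/0.9$, obtained by the same argument as in the lemma's proof; you derive the direction you need explicitly from the lemma's upper bound, which is the same step written out.
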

\begin{proof}
The only thing worth pointing out is that when $a,\hat{a}$ satisfy 
  $|a-\hat{a}|\le c_0$ for some sufficiently small constant $c_0$,
  by the same argument above, we have $0.9\hat{p}\le p\le \hat{p}/0.9$.
\end{proof} 

\subsection{Algorithmic ingredients} \label{combined-algorithmic}

We will require a variant of the algorithm from \Cref{thm:Gaussian1} which uses a \emph{fixed} noise rate. (Recall, from \Cref{eq:def-t-xi}, that in \Cref{sec:broccoli} the noise rate $t$ depends on the error parameter $\eps$ and on the volume $p=\Vol(f)$; in contrast, we now replace the dependence on $p$ with a fixed function of $n$.) This variant, called \GSAFixedNoiseTest, is given below as \Cref{alg:GSA-Fixed-Noise-Test}.  It takes in as input a ``guessed'' value $p_2$ of the true value $p := \Vol(f)$, where the ``guessed'' value is either (i) (essentially) the correct value, or else (ii) (essentially) has \emph{one-sided error} in the sense that it is (essentially) \emph{larger} than $p$.
Since we have this one-sided guarantee on the value of $p_2$, the analysis of \GSAFixedNoiseTest\ is significantly simpler than the analysis of \GSATest. 
We state and prove the performance guarantee that we require on \GSAFixedNoiseTest~below:

\begin{theorem} \label{thm:prof-de}
    Let $f: \R^n \to \zo$ be the indicator function of a measurable subset of $\R^n$ and  $p = \Vol(f)$. 
    The algorithm~\GSAFixedNoiseTest~makes $\Theta(1/\kappa^2)$ calls to $\SAMP(f)$ and $\MQ(f)$ where $\kappa$ is as in~\Cref{alg:GSA-Fixed-Noise-Test}.
    Let $\xi,\gamma>0$ be two parameters such that 
    \begin{equation}~\label{eq:lower-bound-on-gamma-intermsoft}
    \gamma \ge C \cdot  \left(t \log^5(1/p)\right)^{1/4}. 
    \end{equation}
     and 
\begin{equation} \label{eq:sofa}
    C\cdot \frac{ \xi\sqrt{t}}{\sqrt{\log(1/p)}} 
    \leq 
    \kappa 
    \leq 
    \frac{1}{C }\cdot \frac{ \gamma\sqrt{t}}{\sqrt{\log(1/p)}},      
\end{equation}
 for $t$ as in~\Cref{alg:GSA-Fixed-Noise-Test} with some  sufficiently large constant $C$.
Then \GSAFixedNoiseTest\
 on $f,\eps$ and $p_2$ has the following properties:
\begin{enumerate}

    \item If $|p_2 - p| \leq \xi p$, then $\Pr\big[\GSAFixedNoiseTest$ accepts$\big] \geq 0.9.$

    \item If $p_2 -p\ge \gamma p$, then $\Pr\big[\GSAFixedNoiseTest$ accepts$\big] \leq 0.1.$
\end{enumerate}
\end{theorem}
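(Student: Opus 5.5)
The plan is to assume that \GSAFixedNoiseTest\ has the same shape as \GSATest: it runs \EstSense\ (\Cref{lem:estimate-sensitivity}) at the fixed noise rate $t$ to get a $\pm\kappa$-accurate estimate $\balpha$ of $\NS_t(f)/\Vol(f)$. It then accepts iff $\balpha \le \frac{2\sqrt t}{\sqrt\pi}\,\psi(p_2) + \kappa$, where $\psi(q)=\varphi(\Phi^{-1}(q))/q$. I have not seen the algorithm, so the exact threshold is my guess. The query bound $\Theta(1/\kappa^2)$ is then immediate from \Cref{lem:estimate-sensitivity}. Both items reduce to comparing $\NS_t(f)/p$ with $\frac{2\sqrt t}{\sqrt\pi}\psi(p_2)$, using two facts. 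First, halfspaces minimize noise sensitivity (\Cref{thm:Borell}, item 2), and \Cref{corr:bounds-on-stability-for-halfspaces} makes this quantitative. Second, \Cref{lemma:stitch-bostanci} and \Cref{clm:bound-psi} control how $\psi$ changes between $p$ and $p_2$.

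\textbf{Item 1 (accept).} Item 1 is really the completeness case, so I take $f$ to be a halfspace of volume $p$ with $|p_2-p|\le \xi p$. By \Cref{thm:noise-sensitivity-Ledoux} and \Cref{fact:halfspace-sa}, $\NS_t(f)/p \le \frac{2\sqrt t}{\sqrt\pi}\psi(p)$. By the mean value theorem and \Cref{lemma:stitch-bostanci}, $|\psi(p)-\psi(p_2)| \le O(\xi p)\cdot \frac{1}{p\sqrt{\log(1/p)}} = O(\xi/\sqrt{\log(1/p)})$; here $p_2$ is within a constant factor of $p$. Hence $\NS_t(f)/p \le \frac{2\sqrt t}{\sqrt\pi}\psi(p_2) + O(\xi\sqrt t/\sqrt{\log(1/p)})$. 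The left inequality in \eqref{eq:sofa}, with $C$ large, makes the error term at most $\kappa$. With probability $0.99$ we also have $\balpha\le \NS_t(f)/p+\kappa$, so the test accepts. If I have the threshold slightly off, this is the step where the extra slack would be placed.

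\textbf{Item 2 (reject).} Here $f$ is arbitrary with $p_2\ge (1+\gamma)p$. Let $H_p$ be a halfspace of volume $p$. By \Cref{thm:Borell} and the lower bound of \Cref{corr:bounds-on-stability-for-halfspaces},
\[
\NS_t(f)/p \ \ge\ \NS_t(H_p)/p \ \ge\ \tfrac{2\sqrt t}{\sqrt\pi}\psi(p)\bigl(1-\Theta(\mu^{1/4})\bigr),
\]
with $\mu=t\log(1/p)$. The corollary's hypothesis $\mu\le c$ follows from \eqref{eq:lower-bound-on-gamma-intermsoft}, since $\gamma$ is bounded. Since $\psi$ is decreasing on $(0,0.1)$, \Cref{clm:bound-psi} and \Cref{lemma:stitch-bostanci} give $\psi(p)-\psi(p_2) \ge \psi(p)-\psi((1+\gamma)p) = \Omega(\gamma/\sqrt{\log(1/p)})$. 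For large $\gamma$ (or $p_2$ beyond $0.1$) I would truncate at $\min(\gamma,c)$ and use monotonicity. The loss term is $\sqrt t\,\psi(p)\mu^{1/4} = \Theta(\sqrt t\sqrt{\log(1/p)}\,(t\log(1/p))^{1/4})$. It is at most a small constant times $\gamma\sqrt t/\sqrt{\log(1/p)}$ exactly when $\gamma \ge C(t\log^5(1/p))^{1/4}$, which is hypothesis \eqref{eq:lower-bound-on-gamma-intermsoft}. Combining,
\[
\NS_t(f)/p \ \ge\ \tfrac{2\sqrt t}{\sqrt\pi}\psi(p_2) + \Omega\bigl(\gamma\sqrt t/\sqrt{\log(1/p)}\bigr).
\]
The right inequality of \eqref{eq:sofa} makes this gap exceed $3\kappa$. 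So with probability $0.99$, $\balpha \ge \NS_t(f)/p-\kappa$ exceeds the threshold, and the test rejects.

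The main obstacle is Item 2: balancing the $\mu^{1/4}$ loss in \Cref{corr:bounds-on-stability-for-halfspaces} against the decrement of $\psi$ from $p$ to $p_2$. The care needed is to keep the $\sqrt{\log(1/p)}$ factors straight, since the two terms have opposite powers of the log. It also requires handling the regime where $\gamma$ is not small, so that the $\psi$ decrement is still at least $\Omega(\min(\gamma,1)/\sqrt{\log(1/p)})$.
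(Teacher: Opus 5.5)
Your proposal is essentially the paper's own proof: the same threshold test, Ledoux's bound combined with \Cref{clm:bound-psi} (or the derivative bound of \Cref{lemma:stitch-bostanci}) for Item~1, and Borell's theorem with \Cref{corr:bounds-on-stability-for-halfspaces} balanced against the decrement $\psi(p)-\psi(p_2)$ for Item~2, with \eqref{eq:lower-bound-on-gamma-intermsoft} and \eqref{eq:sofa} used exactly as you use them. Two small repairs are needed: \EstSense\ is actually run to accuracy $\pm\kappa/2$, which is what makes Item~1 close (with $\pm\kappa$ accuracy and an error term as large as $\kappa$ you would only get $\balpha\le\frac{2\sqrt{t}}{\sqrt{\pi}}\psi(p_2)+2\kappa$, which is above the threshold), and the corollary's hypothesis $t\log(1/p)\le c$ follows from the algorithm's choice $t=c_1\eps^8/\log^5(1/\pmin)$ together with $p\ge\pmin$, not from boundedness of $\gamma$, which the hypotheses never supply.
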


\begin{proof}
{\bf Proof of Item~1:}  We note that if $f$ is a halfspace with $\Vol(f)=p$, then 
\begin{equation} \label{eq:turkey}
\frac{\NS_t(f)}{\Vol[f]} \le \frac{2\sqrt{t}}{\sqrt{\pi}} \cdot
\psi(p), 
\end{equation}
where $\psi(p) =  {\varphi(\Phi^{-1}(p))}/{p}$ 
(using the upper bound in \Cref{corr:bounds-on-stability-for-halfspaces}). We now consider two cases.

First, if $p_2 \le p$, then because $\psi(\cdot)$ is monotonically decreasing (by \Cref{lemma:stitch-bostanci}), we have
\[
\frac{\NS_t(f)}{\Vol[f]} \le \frac{2\sqrt{t}}{\sqrt{\pi}} \cdot
\psi(p) \le \frac{2\sqrt{t}}{\sqrt{\pi}} \cdot
\psi(p_2). 
\]
Since Step~2 of the algorithm computes a value $\boldsymbol{\alpha}$ such that 
\[
\bigg| \balpha - \frac{\NS_t(f)}{\Vol[f]} \bigg| \le \frac{\kappa}{2}
\]
holds with probability at least $0.99$, 
it follows that with probability at least $0.99$ we have
\[
\boldsymbol{\alpha} \le \frac{\NS_t(f)}{\Vol[f]} +  {\frac{\kappa}{2}} \le \frac{2\sqrt{t}}{\sqrt{\pi}} \cdot
\frac{\varphi(\Phi^{-1}(p_2))}{p_2} +  {\frac{\kappa}{2}},
\]
so \GSAFixedNoiseTest~accepts in this case with probability at least $0.99$.

The second case is when $p \le p_2$. In this case, $p_2- p \le \xi \cdot p$. Now we have
\begin{align}
\frac{\NS_t(f)}{\Vol[f]} &\le \frac{2\sqrt{t}}{\sqrt{\pi}} \cdot
\psi(p) = \frac{2\sqrt{t}}{\sqrt{\pi}} \cdot
\psi(p_2) + \frac{2\sqrt{t}}{\sqrt{\pi}} \cdot \big(\psi(p)-
\psi(p_2)\big) \nonumber \tag{by \Cref{eq:turkey}} \\ 
&= \frac{2\sqrt{t}}{\sqrt{\pi}} \cdot
\psi(p_2) + \frac{2\sqrt{t}}{\sqrt{\pi}} \cdot \Theta\left(\sqrt{\log(1/p)} - \sqrt{\log(1/p_2)}\right) \nonumber \tag{using~\Cref{clm:bound-psi}}\\
&\le \frac{2\sqrt{t}}{\sqrt{\pi}} \cdot
\psi(p_2) + \Theta \bigg( \frac{\xi \sqrt{t}}{\sqrt{\log(1/p)}}\bigg). 
\end{align}
Recalling \Cref{eq:sofa}, for a suitable choice of constant $C$, it follows that 
\[
\frac{\NS_t(f)}{\Vol[f]} \le \frac{2\sqrt{t}}{\sqrt{\pi}} \cdot
\psi(p_2)  + \frac{\kappa}{2}. 
\]
Since Step~2 of the algorithm computes the LHS
to error 
$\pm \kappa/2$ (with probability $0.99$), it follows that in this case,
\GSAFixedNoiseTest~accepts with probability at least $0.99$.\medskip

\noindent{\bf Proof of Item~2:} We next move to the second item. Let $H$ be a halfspace with volume $p$. Then, by Borell's theorem (Item~2 of \Cref{thm:Borell}) and \Cref{corr:bounds-on-stability-for-halfspaces}, it follows that 
\begin{equation}~\label{eq:lb-nst-c}
\frac{\NS_t(f)}{\Vol[f]} \ge \frac{\NS_t(H)}{\Vol[H]}  \ge \frac{2\sqrt{t}}{\sqrt{\pi} p} \varphi(\Phi^{-1}(p))  \cdot \left(1-\Theta\big(t \log (1/p)\big)^{1/4}\right).
\end{equation}
Using~\Cref{clm:bound-psi}, it follows  that 
\begin{equation} \label{eq:aaa}
    \psi(p) - \psi(p_2) =\Theta\left(\sqrt{\log(1/p)} - \sqrt{\log(1/p_2)}\right)  = \Theta \left( \frac{\log(p_2/p)}{\sqrt{\log(1/p)} + \sqrt{\log(1/p_2)}}\right). 
    \nonumber
\end{equation}
Using the fact that $p_2 \ge p (1+\gamma)$ and $\psi(p) = \Theta(\sqrt{\log(1/p)})$, 
it follows that 
\begin{equation}\label{eq:bound-on-diff-psi} \nonumber
\psi(p) - \psi(p_2) \ge \Theta \bigg( \frac{\gamma \psi(p)}{{\log(1/p)}}\bigg).
\end{equation}
Thus, we have 
\begin{equation}~\label{eq:otherbound-2}
\frac{2\sqrt{t}}{\sqrt{\pi}} \cdot \frac{\varphi(\Phi^{-1}(p_2))}{p_2} + \kappa \le \frac{2\sqrt{t}}{\sqrt{\pi}} \cdot \frac{\varphi(\Phi^{-1}(p))}{p} \cdot \left(1 - \Theta \left(\frac{\gamma}{\log(1/p)} \right) \right) + \kappa.  
\end{equation} 
Likewise, in Step~3 of the algorithm, the estimate $\boldsymbol{\alpha}$ satisfies 
$$
    \boldsymbol{\alpha} \ge \frac{\NS_t(f)}{\Vol[f]} - \frac{\kappa}{2} \ge \frac{2\sqrt{t}}{\sqrt{\pi} p} \varphi(\Phi^{-1}(p))  \cdot \left(1-\Theta\big(t \log (1/p)\big)^{1/4}\right) - \frac{\kappa}{2}.
$$ 
Now, by \Cref{eq:lower-bound-on-gamma-intermsoft}, we also have that 
\[
\frac{\gamma}{C} \ge \left(t \log^5(1/p)\right)^{1/4}, 
\]
for a sufficiently large constant $C_4$. 
Thus, we get that 
\[
\boldsymbol{\alpha}\ge \frac{2\sqrt{t}}{\sqrt{\pi} }\cdot \frac{\varphi(\Phi^{-1}(p))}{p}\cdot \left(1 -  \Theta\left(\frac{\gamma}{C \log(1/p)} \right) \right) -\frac{\kappa}{2}.
\]
Now, recalling \Cref{eq:sofa}, observe when $C$ is sufficiently large the right hand side of the above equation is at least as large as the right hand side of \eqref{eq:otherbound-2}. Thus, we have
\[
\boldsymbol{\alpha} > \frac{2\sqrt{t}}{\sqrt{\pi}} \cdot \frac{\varphi(\Phi^{-1})(p_2)}{p_2} + \kappa,
\]
and the algorithm will reject in this case. 
\end{proof}

\begin{algorithm}[t!]
\caption{A restricted tester for LTFs in Gaussian space, which only succeeds if the\\ estimate $p_2$ of $p = \Vol(f)$ that it is given has (essentially) one-sided error, i.e.~either $p_2$ is essentially $p$ or it is essentially larger than $p$.}
\label{alg:GSA-Fixed-Noise-Test}\vspace{0.15cm}
\GSAFixedNoiseTest$\pbra{\SAMP(f), \MQ(f), \eps,p_2}$:\\
\vspace{0.5em}
\textbf{Input:} $\SAMP(f)$, $\MQ(f)$, error parameter $\eps \in (0,1]$, estimate $p_2$ of $p := \Vol(f)$ where $p_2$ satisfies $0 < p_2 \leq 0.1$.
 \\[0.5em]
\textbf{Output:} ``Accept'' or ``reject'' 
\medskip 
\begin{enumerate}
    
    \item Set
    \[
        t =c_1\cdot  \frac{ \eps^8}{\log^5(1/p_{\mathrm{min}})} 
        \ \quad\text{and}
        \ \quad 
        \kappa = c_2\cdot {\frac {\eps^6} {{\log^{3}\pbra{1/\pmin}}}}. 
    \]
    where $c_1,c_2$ are sufficiently small constants.
    
    \item Run the algorithm \EstSense\
    to compute a value $\balpha$, which, by \Cref{lem:estimate-sensitivity}, is a\\ $(\pm \kappa/2)$-accurate estimate of $\NS_t(f)/\Vol[f]$ with confidence 0.99.
    
    \item If 
    \[
\balpha \leq \frac{2 \sqrt{t}}{\sqrt{\pi}} \cdot\frac{\varphi(\Phi^{-1}(p_2))}{p_2} +\kappa
\]
then output ``accept,'' otherwise output ``reject.''\vspace{-0.15cm}
\end{enumerate}
\end{algorithm}

\begin{algorithm}\vspace{0.15cm}
\CombinedTest$\pbra{\SAMP(f), \MQ(f), \eps}$:
\vspace{0.5em}\\
\textbf{Input:} $\SAMP(f)$, $\MQ(f)$, and error parameter $\eps \in (0,1]$. \\[0.5em]
\textbf{Output:} ``Accept'' or ``reject'' 

\medskip

\begin{enumerate}
    \item Set 
    \[
        m := \Theta
\pbra{
    {\frac {\sqrt{n \cdot{\log (1/\pmin)}}}{\eps^2}} + {\frac {\log^2(1/\pmin)}{\eps^4}}
    }
    \]    
    and draw $\x{1}, \ldots, \x{m}, \y{1}, \ldots, \y{m} \leftarrow \SAMP(f)$. 
    
    \item Compute 
    	\[
    	\bT := \frac{1}{m^2} \sum_{i, j = 1}^m \x{i}\cdot \y{j}
    	\]
     and let $\bp_2 := V^{-1}(\bT).$
    \item Run \GSAFixedNoiseTest$({\SAMP(f), \MQ(f), \eps,\bp_2})$ and output the same.\vspace{-0.15cm}
\end{enumerate}
\caption{A Gaussian relative-error LTF tester for functions of unknown Gaussian\\ volume (as long as it is not too small).}
\label{alg:combined-test}
\end{algorithm}

\subsection{Analysis of \Cref{alg:combined-test}: Proof of \Cref{thm:Gaussian3}}

Our main algorithm for \Cref{thm:Gaussian3} is called \CombinedTest~and is given in \Cref{alg:combined-test}.

By inspection~\Cref{alg:combined-test} makes 
\[
    \Theta\pbra{\frac{1}{\kappa^2}}+\Theta(m) = \Theta\pbra{\frac{1}{\eps^{12}}\cdot\log^6\pbra{\frac{1}{\pmin}}+\frac{\sqrt{n \cdot{\log (1/\pmin)}}}{\eps^2} + {\frac {\log^2(1/\pmin)}{\eps^4}}}
\]
calls to $\MQ(f)$ and $\SAMP(f)$.
So the overall complexity of the algorithm is as claimed.

\medskip
Now we turn to correctness. Let $f$ be the input function
with $p=\Vol(f)\le 0.01$.
We begin~by observing that the setting of $t$ in \GSAFixedNoiseTest~satisfies 
the upper bound that is required by \Cref{corr:bounds-on-stability-for-halfspaces}.
Next, recalling \Cref{eq:dlns-mean-formula} and \Cref{eq:var-calc-goal}, we have that
\[
    \Ex\sbra{\bT} = \frac{\bW^{=1}[f]}{p^2}
    \qquad\text{and}\qquad 
\Var\sbra{\bT} = O\pbra{\frac{\log^2(1/p)}{m} + \frac{n}{m^2} } =: \sigma^2.
\]
By making the hidden constant in the choice of $m$ sufficiently large, we can have 
$$
 {\sigma\le {c_3} \cdot \eps^2 \sqrt{\frac{\log (1/p)}{\log (1/\pmin)}}.}
$$
for some sufficiently small constant $c_3$.
The three constants $c_1,c_2$ (in \Cref{alg:GSA-Fixed-Noise-Test}) and $c_3$ as well as another constant $c^*$ in Case 2 are all sufficiently small but we require $c_3\ll c_2\ll c_1 \ll c^*$.

Consider the two cases that $f$ is an LTF and $f$ has relative distance at least $\eps$ from every LTF.\medskip

\noindent\textbf{Case 1: $f$ is an LTF.} By \Cref{fact:meaning-of-U} we have that $\bW^{=1}[f]=U(p)$ so recalling \Cref{eq:dlns-mean-formula}, we have $\E[\bT]={ {U(p)}/{p^2}}=V(p)$. By Chebyshev's inequality we have $\bT \in V(p)\pm 10 \sigma$ with probability at least 98/100, so we assume that indeed $|\bT - V(p)|  \leq 10 \sigma$. 
It follows from Item~1 of~\Cref{claim:v-inverse} (recall that $\tau_1$ is an absolute constant there) that
$|\bp_2-p|\le 10\tau_1 \sigma\cdot p.$
It suffices to show that $10\tau_1\sigma$ satisfies
  \Cref{eq:sofa} as $\xi$. Then this case follows directly from 
  Item 1 of \Cref{thm:prof-de}.
This follows from the choices of parameters and in particular, by making
  $c_3$ sufficiently smaller than $c_2$.\medskip

\noindent\textbf{Case 2: $\reldist(f, \calC_{\mathrm{LTF}}) \geq \eps$.} 
By \Cref{lem:our-MORS2} we have:
$$ 
p\sqrt{\E[\bT]} = \sqrt{\bW^{=1}[f]} 
\leq \sqrt{U(p)} -  \Omega\left({\frac {p^2\eps^2}{  \sqrt{U(p)}}}\right).
$$
Let $c^*$ be a sufficiently small constant such that 
\begin{align}
 \sqrt{\bW^{=1}[f]} 
&\leq \sqrt{U(p)} -   {c^*\cdot \frac {p^2\eps^2}{  \sqrt{U(p)}}},\text{~~~~~~~~i.e.} \nonumber\\
\E[\bT] &\leq \pbra{\sqrt{{\frac {U(p)}{p^2}}} - c^*\cdot  {\frac {\eps^2p} {  \sqrt{U(p)}}} }^2 =
{\frac {U(p)}{p^2}} - c^*{\eps^2} + {\frac {(c^*)^2\eps^4 p^2}{ U(p)}} \label{eq:luggage}
\end{align}
where the first equality is \Cref{eq:dlns-mean-formula}.
Similar to Case~1, with probability at least 98/100 we have that $\bT \in [\E[\bT] - 10 \sigma,
 \E[\bT] + 10 \sigma]$, so we assume that this is the case. Combining this with \Cref{eq:luggage}, we get that
\begin{align}
\bT \leq {\frac {U(p)}{p^2}} - c^* {\eps^2}  + {\frac {(c^*)^2\eps^4 p^2}{  U(p)}} + 10 \sigma \le {\frac {U(p)}{p^2}}-\frac{c^*\eps^2}{2},
\end{align}
by setting $c^*$  sufficiently small and $c_3$ sufficiently smaller. 
It follows from Item~2 of~\Cref{claim:v-inverse} (recall  $\tau_2$ is an absolute constant) that
$ \bp_2-p\ge (\tau_2c^*/2)\eps^2\cdot p.$
It suffices to show that $ (\tau_2c^*/2)\eps^2$ satisfies
  both \Cref{eq:lower-bound-on-gamma-intermsoft} and \Cref{eq:sofa} as $\gamma$. Then this case follows directly from 
  Item 2 of \Cref{thm:prof-de}.
This follows from the choices of parameters and in particular, 
   the order of constants that satisfies $c^*\gg c_1$ and $c_2\ll c^*,c_1$. \qed

%% file: sections/appendix.tex

\section{A lower bound for sample-based testers}
\label{ap:sample-only-lower-bound} 

\begin{theorem}
[Sample-based Gaussian LTF testing lower bound, if we are not given $p$]
\label{thm:sample-only-lower-bound} 
Let $A$ be any algorithm which uses only samples from an unknown measurable $: \R^n \to \{0,1\}$ (so in particular, $A$ is not given an estimate of $\Vol[f]$, nor does it have black-box oracle access to $f$).  If $A$ is a relative-error $0.1$-testing algorithm for LTFs over $N(0,I_n)$, then $A$ must make $\Omega(n)$ samples from $f$, even under the guarantee that $\Vol(f)$ is either $1/2$ or $1$.
\end{theorem}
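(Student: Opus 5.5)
We reduce to a distribution-testing lower bound. In the YES case, let $f$ be a uniformly random halfspace of volume $1/2$, namely $f(x)=\mathbf{1}\{\bv\cdot x\ge 0\}$ with $\bv$ uniform on $S^{n-1}$. Then $\SAMP(f)$ returns i.i.d.\ draws from $N(0,I_n)$ conditioned on $\bv\cdot x\ge 0$. In the NO case, let $f$ be a function of volume $1$ (the constant-$1$ function is an LTF, so this needs adjusting; see below), so that $\SAMP(f)$ returns i.i.d.\ draws from $N(0,I_n)$.

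The NO function must be $0.1$-far in relative distance from every LTF while still having volume $1$. Since $\Vol(f)=1$ forces $f=1$ almost everywhere, this cannot work directly. I therefore swap the roles of the two cases. In the YES case, $f\equiv 1$, which is the halfspace $\mathbf{1}\{0\cdot x\ge -1\}$, and samples are exactly $N(0,I_n)$. In the NO case, $f$ has volume $1/2$, is $0.1$-far from every LTF, and $\SAMP(f)$ is nearly indistinguishable from $N(0,I_n)$ given $o(n)$ samples.

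A natural NO function is $f(x)=\mathbf{1}\{|\bv\cdot x|\ge c\}$ with $c=\Phi^{-1}(3/4)$, so that $\Vol(f)=1/2$, and $\bv$ a random unit vector. I need to check that this is $0.1$-far from every LTF in relative distance. Any halfspace $h$ reduces to a question about the $\bv$-direction together with one orthogonal direction. A two-sided slab complement is far from any halfspace in symmetric difference by a constant amount (roughly $1/4$ out of volume $1/2$), and a short 2D computation, or monotonicity in the $\bv$-coordinate, should give relative distance at least $0.1$. Its conditional distribution is $N(0,I_{n-1})$ on $\bv^\perp$ times a one-dimensional law $D$ on the $\bv$-coordinate, namely $N(0,1)$ conditioned on $|g|\ge c$. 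This law $D$ is symmetric, so its first moment matches $N(0,1)$, but its variance is larger than $1$.

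The remaining task is to show that $q$ i.i.d.\ samples from this ``hidden direction'' distribution with random $\bv$ are $o(1)$-close in total variation to $q$ samples from $N(0,I_n)$ unless $q=\Omega(n)$. I would bound the $\chi^2$-divergence of the mixture, following the standard non-Gaussian component analysis / hidden-direction computation. We have
\[
1+\chi^2=\E_{\bv,\bv'}\sbra{\pbra{\textstyle\sum_k \hat{a}_k^2\,(\bv\cdot\bv')^k}^q},
\]
where $a=dD/d\varphi$ has Hermite coefficients $\hat a_k$. Here $\hat a_0=1$, and $\hat a_1=0$ by symmetry (odd coefficients vanish), so the inner sum is $1+O((\bv\cdot\bv')^2)$. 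Since $\bv\cdot\bv'$ is of order $1/\sqrt n$ with subgaussian tails, $\E[\exp(Cq(\bv\cdot\bv')^2)]=1+o(1)$ when $q\le cn$. The main obstacle is controlling this expectation uniformly: $a$ is unbounded near $|g|=c$ because of the indicator's density ratio (bounded by $2$ in fact), so I would use $\sum_k \hat a_k^2\rho^k\le 1+\rho^2\|a\|^2$ for $|\rho|\le1$, together with the tail of $\bv\cdot\bv'$ when $|\rho|$ is large. This gives total variation $o(1)$ for $q=o(n)$, so no tester can succeed with probability $0.9$ on both cases. That contradiction yields the $\Omega(n)$ bound, consistent with the promise $\Vol(f)\in\{1/2,1\}$.
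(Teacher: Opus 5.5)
Your proposal is correct, and at the top level it uses the paper's reduction: the constant-$1$ function (a halfspace of volume $1$, so $\SAMP(f)$ is exactly $N(0,I_n)$) versus a volume-$1/2$ function of a single hidden direction that is far from every halfspace. The paper takes the slab $\{\Phi^{-1}(1/4)\le u\cdot x\le \Phi^{-1}(3/4)\}$ and cites the $\Omega(n)$ lower bound for distinguishing $N(0,I_n)$ from $N(0,I_n)$ conditioned on that slab (arXiv version of \cite{de2023testing}). You instead take the complementary two-tailed set $\{|v\cdot x|\ge \Phi^{-1}(3/4)\}$ and prove the indistinguishability yourself via the $\chi^2$/Mehler formula. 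The paper's route is shorter. Yours is self-contained and isolates exactly why the bound is linear rather than $\sqrt{n}$: the density ratio $a=2\cdot\mathbf{1}\{|g|\ge c\}$ is even, so $\hat a_1=0$. Since $\|a\|^2=2$, you get $\sum_k \hat a_k^2\rho^k\le 1+\rho^2$ for all $|\rho|\le 1$. With $\hat a_1\neq 0$ the same computation would only yield $\Omega(\sqrt{n})$.

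A few tidy-ups. First, $a$ is bounded by $2$, so your ``unbounded'' remark is a slip. Second, $|\rho|=|v\cdot v'|\le 1$ always, so no tail truncation is needed. Since $\rho^2\sim\mathrm{Beta}(1/2,(n-1)/2)$, one gets $\E[(1+\rho^2)^q]\le \E[e^{q\rho^2}]\le (1-2q/n)^{-1/2}$, so $\chi^2=O(q/n)$. For $q\le cn$ this is a small constant, not $o(1)$; that is what keeps the total variation below $0.8$ and gives the $\Omega(n)$ bound.

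Finally, the farness claim you left as ``should give'' has a short proof. Since $f$ is invariant under $x\mapsto -x$, for any halfspace $H$ we have $\Vol(f\triangle H)=\Vol(f\triangle(-H))\ge \tfrac12\Vol(H\triangle(-H))=\min\{\Vol(H),1-\Vol(H)\}$. Also $\Vol(f\triangle H)\ge |\tfrac12-\Vol(H)|$. Together these give $\Vol(f\triangle H)\ge 1/4$, i.e.\ relative distance at least $1/2$. The same argument covers the paper's slab, where the paper only says ``it is easy to verify''.
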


The proof is an easy consequence of the following lower bound, which is proven in the arXiv version of   \cite{de2023testing}:

\begin{claim} \label{claim:linear-lower-bound}
Let $\Phi: \R \to (0,1)$ be the standard Gaussian CDF.
Let $A'$ be any algorithm which is given i.i.d.~samples drawn from  a distribution ${\cal D}$ over $\R^n$, which is either:

\begin{itemize}

\item Case 1:  ${\cal D}$ is $N(0,I_n)$; or

\item Case 2:  ${\cal D}$ is $N(0,I_n)|_{\{x \in \R^n: \Phi^{-1}(1/4) \leq u \cdot x 
\leq \Phi^{-1}(3/4)\}}$ for some unknown unit vector $u$

\end{itemize}
(so in Case 2, ${\cal D}$ is the standard Normal distribution conditioned on an unknown origin-centered ``slab'' of Gaussian volume 1/2).  Suppose that with probability at least 99/100, algorithm $A'$ correctly identifies whether ${\cal D}$ is from Case~1 or Case~2.  Then $A'$ must use $\Omega(n)$ draws from ${\cal D}$.
\end{claim}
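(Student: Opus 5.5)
The plan is a standard Le Cam two-point argument: randomize the slab direction and bound the $\chi^2$-divergence of the resulting mixture from the null.

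\textbf{Reduction to a mixture.} Let $a := \Phi^{-1}(3/4)$, so the slab is $\{x : |u\cdot x| \le a\}$. Since $A'$ must succeed for every $u$, it must also distinguish $N(0,I_n)^{\otimes m}$ from the mixture $Q_m := \E_{\bu}\big[\mathcal{D}_{\bu}^{\otimes m}\big]$, where $\bu$ is uniform on $S^{n-1}$ and $\mathcal{D}_u$ is the slab-conditioned Gaussian. It therefore suffices to show that $d_{TV}(N(0,I_n)^{\otimes m}, Q_m) < 0.98$ whenever $m \le cn$ for a small constant $c$. I will bound total variation via $\chi^2(Q_m \,\|\, N(0,I_n)^{\otimes m})$.

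\textbf{The $\chi^2$ computation.} The density ratio of $\mathcal{D}_u$ with respect to $N(0,I_n)$ is $\rho_u(x) = 2q(u\cdot x)$, where $q(z) = \mathbf{1}\{|z|\le a\}$. The usual mixture identity gives
\[
1+\chi^2 = \E_{\bu,\bu'}\Big[\big(\E_{\bx\sim N(0,I_n)}[\rho_{\bu}(\bx)\rho_{\bu'}(\bx)]\big)^m\Big].
\]
Write $r := \bu\cdot\bu'$. The pair $(\bu\cdot\bx, \bu'\cdot\bx)$ is a standard bivariate Gaussian with correlation $r$. Expand the one-dimensional function $2q$ in the univariate Hermite basis as $2q = \sum_k c_k h_k$. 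Then by Mehler's formula (the Gaussian noise-operator identity),
\[
\E[\rho_{\bu}\rho_{\bu'}] = \sum_k c_k^2 r^k.
\]
Three facts about the coefficients finish this step. First, $c_0 = 2\Pr[|\bg|\le a] = 1$. Second, $q$ is even, so $c_k = 0$ for all odd $k$, and in particular the degree-$1$ term vanishes. Third, $\sum_k c_k^2 = \E[(2q)^2] = 2$ by Parseval. Since only even $k \ge 2$ remain and $|r|\le 1$, we get $r^k \le r^2$ for those $k$, and hence
\[
\E[\rho_{\bu}\rho_{\bu'}] \le 1 + r^2 \sum_{k\ge 2} c_k^2 = 1 + r^2.
\]
Consequently $1+\chi^2 \le \E_r[(1+r^2)^m] \le \E_r[e^{m r^2}]$.

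\textbf{The main step: the moment generating function of $r^2$.} For independent uniform unit vectors, $r^2 \sim \mathrm{Beta}(1/2,(n-1)/2)$. Equivalently, $\sqrt{n}\,r$ is sub-Gaussian with an absolute constant parameter (concentration on the sphere). So for $s := m/n \le c$ small,
\[
\E[e^{s\,(\sqrt n\, r)^2}] \le (1 - O(s))^{-1/2} = 1 + O(c).
\]
One can verify this directly by writing $r \stackrel{d}{=} \bg_1/\|\bg\|$ and conditioning on the event $\|\bg\|^2 \ge n/2$, which fails with probability $e^{-\Omega(n)}$; on that failure event we use the crude bound $e^{m r^2} \le e^m \le e^{cn}$, whose contribution is still $e^{-\Omega(n)}$ once $c$ is small enough. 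This yields $\chi^2 = O(c)$.

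\textbf{Conclusion.} Since $d_{TV} \le \tfrac12\sqrt{\chi^2} = O(\sqrt{c})$, which is well below $0.98$ for small $c$, no algorithm can distinguish the two cases with probability $99/100$ using $m \le cn$ samples, which proves the $\Omega(n)$ bound. I expect the main care to be in this last part: handling the MGF tail so that the crude $e^{m}$ bound on the low-probability event stays negligible. The Hermite part is routine once one observes that the symmetry of the slab kills the degree-$1$ term, which is exactly why the bound is linear rather than $\sqrt{n}$.
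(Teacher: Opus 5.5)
Your proof is correct. It necessarily takes a different route from the paper, because the paper gives no argument for this claim and instead imports it as a black box from the arXiv version of \cite{de2023testing}. You give a self-contained Le Cam/Ingster $\chi^2$-mixture proof, and each step checks out:
\begin{itemize}
\item The slab is $\{x:|u\cdot x|\le a\}$ with $a=\Phi^{-1}(3/4)=-\Phi^{-1}(1/4)$, so the likelihood ratio is $2\mathbf{1}\{|u\cdot x|\le a\}$.
\item Mehler's formula gives the one-sample overlap $\sum_k c_k^2 r^k$ with $r=u\cdot u'$.
\item $c_0=1$, the odd coefficients vanish by symmetry, and $\sum_k c_k^2=4\cdot\frac12=2$. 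Hence the overlap is at most $1+r^2$.
\item $1+\chi^2\le \E[e^{mr^2}]$, and $d_{\mathrm{TV}}\le \frac12\sqrt{\chi^2}$ contradicts the $0.98$ total-variation gap that a $99/100$-successful distinguisher would force.
\end{itemize}

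The citation buys brevity. Your argument buys transparency about why the bound is linear rather than $\sqrt{n}$: the symmetric slab has no degree-$1$ Hermite mass, so the first surviving term is $r^2$, and $\E[e^{mr^2}]$ only blows up once $m\asymp n$.

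One small simplification for the last step. Conditioning on $\|g\|^2\ge n/2$ leaves an $e^{-\Omega(n)}$ error term, so you must also handle $n$ below a constant separately; that case is trivial since $m\ge 1$. Instead, use $r^2\sim\mathrm{Beta}(\frac12,\frac{n-1}{2})$ to get
\[
\E[r^{2k}]=\frac{(2k-1)!!}{n(n+2)\cdots(n+2k-2)}\le\frac{(2k-1)!!}{n^k}.
\]
Summing the series gives $\E[e^{mr^2}]\le(1-2m/n)^{-1/2}$ for all $m<n/2$, hence $\chi^2=O(m/n)$ with no tail term.
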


Case~1 corresponds to the constant function $f(x)=1$, which is a halfspace, whereas it is easy to verify that in Case~2 the function $f(x)=\Indicator[\Phi^{-1}(1/4) \leq u \cdot x \leq \Phi^{-1}(3/4)]$ has volume 1/2 and has relative-distance at least 0.1 from every halfspace.
So any relative-error 0.1-testing algorithm $A$ can be used as the desired $A'$, and hence any such testing algorithm $A$ must make $\Omega(n)$ samples from $f$.
\qed